\documentclass[11pt]{article}
\usepackage{amsthm,amsgen,amsmath,amstext,amsbsy,amsopn,amssymb,latexsym,mathrsfs}
\usepackage{array}
\usepackage{cite}
\usepackage{multirow}
\usepackage{graphicx}
\usepackage{amssymb,bm,bbm}
\usepackage{titling}
\usepackage{url,bm}
\usepackage{algpseudocode, float, algorithm}
\usepackage[round]{natbib}
\usepackage{bbm}
\usepackage{tabularx}
\usepackage{authblk}
\usepackage{hyperref}

\hypersetup{
    colorlinks=true,
    linkcolor=blue,
    filecolor=blue,
    urlcolor=blue,
    citecolor=blue,
}

 \usepackage[dvipsnames,table,dvipsnames*, svgnames*]{xcolor}

\allowdisplaybreaks

\newcommand{\beq}{\begin{equation}}
\newcommand{\eeq}{\end{equation}}
\newcommand{\beas}{\begin{eqnarray*}}
\newcommand{\eeas}{\end{eqnarray*}}
\newcommand{\bea}{\begin{eqnarray}}
\newcommand{\eea}{\end{eqnarray}}
\newcommand{\bei}{\begin{itemize}}
\newcommand{\eei}{\end{itemize}}
\newcommand{\ben}{\begin{enumerate}}
\newcommand{\een}{\end{enumerate}}
\newcommand{\argmin}{\mathop{\rm arg\min}}
\newcommand{\argmax}{\mathop{\rm arg\max}}

\let\emptyset\varnothing

\newcommand{\norm}[1]{\left\|#1\right\|}
\newcommand{\abs}[1]{\left|#1\right|}

\newtheorem{Corollary}{Corollary}

\newtheorem{Lemma}{Lemma}

\newtheorem{Theorem}{Theorem}

\newtheorem{Remark}{Remark}

\newtheorem{Assumption}{Assumption}

\newcommand{\R}{\mathbb{R}}
\newcommand{\E}{{\mathbb{E}}}
\newcommand{\Prob}{{\mathbb{P}}}
\newcommand{\1}{{\mathbbm{1}}}

\newcommand{\supp}{{\rm supp}}
\newcommand{\Var}{{\rm Var}}

\newcommand{\normm}[1]{{\left\vert\kern-0.25ex\left\vert\kern-0.25ex\left\vert #1 
    \right\vert\kern-0.25ex\right\vert\kern-0.25ex\right\vert}}

\addtolength{\textwidth}{1in}
\addtolength{\oddsidemargin}{-0.5in}
\addtolength{\textheight}{1in}
\addtolength{\topmargin}{-0.55in}
\parskip .1cm

\usepackage{tikz}
\newcommand{\solidcirc}{\tikz\draw[black,fill=black] (0,0) circle (.5ex);}
\newcommand{\hollowcirc}{\tikz\draw[black] (0,0) circle (.5ex);}

\parskip .1cm

\graphicspath{{Figures/}}

\begin{document}
\part*{}
\title{Repro Samples Method for High-dimensional Logistic Model}
\author[1]{Xiaotian Hou}
\author[1]{Linjun Zhang}
\author[2]{Peng Wang}
\author[1]{Min-ge Xie}
\affil[1]{Department of Statistics, Rutgers University}
\affil[2]{Department of Operations, Business Analytics, and Information Systems, University of Cincinnati}
\date{ }
\maketitle

\vspace{-20mm}

\begin{abstract}
     This paper presents a novel method to make statistical inferences for both the model support and regression coefficients in a high-dimensional logistic regression model. Our method is based on the repro samples framework, in which we conduct statistical inference by generating artificial samples mimicking the actual data-generating process. The proposed method has two major advantages. Firstly, for model support, we introduce the first method for constructing model confidence set in a high-dimensional setting and the proposed method only requires a weak signal strength assumption. Secondly, in terms of regression coefficients, we establish confidence sets for any group of linear combinations of regression coefficients. 
    Our simulation results demonstrate that the proposed method produces valid and small model confidence sets and achieves better coverage for regression coefficients than the state-of-the-art debiasing methods. 
    Additionally, we analyze single-cell RNA-seq data on the immune response. 
    Besides identifying genes previously proved as relevant in the literature, our method also discovers a significant gene that has not been studied before, revealing a potential new direction in understanding cellular immune response mechanisms. 
\end{abstract}

\section{Introduction}

Logistic regression is a widely used method for classification \citep[e.g.,][]{hosmer2013applied}. In the high-dimensional setting where the number of variables $p$ exceeds the sample size $n$, existing inference methods for logistic regression mainly focus on the inference of regression coefficients and their functions, while the uncertainty quantification of model selection remains relatively unexplored. The main challenge is that the model support is discrete, and therefore classical methods for statistical inference such as the central limit theorem cannot be applied. Some recent works have studied the problem of constructing model confident sets \citep{hansen2011model,ferrari2015confidence,zheng2019model,li2019model}. However, these methods are either designed for low-dimensional models or require a model selection procedure with a sure screening property \citep{fan2010sure} to reduce the dimension, which relies on strong signal strength assumptions. 


To address this problem, we provide several further developments on the so-called repro samples method and extend the work by \cite{wang2022finite} on high-dimensional linear regression models to high-dimensional logistic regression models. Our work differs with \cite{wang2022finite} in several aspects. First, unlike linear regression, the response in logistic regression is binary and the information of regression coefficients is highly compressed, making it hard to recover the linear discriminant function in finite-sample as \cite{wang2022finite} did. 
Second, Gaussian noises are assumed in \cite{wang2022finite}, in which case one can get rid of the nuisance parameters using the corresponding sufficient statistics and construct finite-sample pivot statistics for statistical inference.  In logistic models, however, such pivot statistics are no longer available. We thus use asymptotic approximations to characterize the distribution of test statistics and use a profiling method to handle nuisance parameters. To facilitate the asymptotic approximation, we also consider the random design setting instead of the fixed design as studied in \cite{wang2022finite}.


A key step of our method is to search for a relatively small set of candidate models that include the ground-truth support with high probability, which can be done using an inversion method by noting that the model space is discrete. 
Then, a likelihood ratio test can be applied to each candidate model to infer the regression coefficients. We use the following intuition to construct a confidence set for the model support -- for each candidate model, we generate artificial samples using that model, then compare the summary statistics calculated based on the generated artificial data to that obtained using the observed data; when these two statistics are very different, we reject the current candidate model. We provide rigorous mathematical theories to support our developments. 

Our contributions are as follows:
\begin{itemize}
    \item[(1)] We propose a novel method to construct the model candidate set that provably contains the true model with high probability as long as the model is identifiable. Here, we only assume that the signal is sufficient to identify the model given the realization of the error term. 
    \item[(2)] Based on the model candidate set, we further construct a confidence set for the model support with a desired confidence level. 
    To the best of our knowledge, this is the first approach for constructing model confidence sets in the high-dimensional logistic regression setting. 
    \item[(3)] Besides the model confidence set, we also develop a comprehensive approach that allows for inference on individual regression coefficients, subsets of coefficients, and even any groups of linear combinations of these coefficients. 
    This general result also enables us to efficiently infer nonlinear transformations of the regression coefficients, such as the case probabilities of a set of new observations.
    Existing works in the literature only focus on inferring a constrained group of linear combinations of regression coefficients, e.g., see \cite{van2014asymptotically,zhang2017simultaneous,shi2019linear}.
    \item[(4)] Existing approaches for high-dimensional inference often depend on either data splitting, feature screening, consistent model or parameter estimations, or sparse inverse Hessian matrix. 
    However, our methods do not rely on these procedures or assumptions. 
    Therefore, the proposed methods require weaker theoretical assumptions and attain the full sample efficiency.  
\end{itemize}


\subsection{Related works}

 The inference problem in high-dimensional generalized linear models has attracted a lot of attention in recent years, almost all of which however focus on the regression coefficients. For instance, \cite{van2014asymptotically} constructed confidence intervals for $\beta_j, j\in[p]$ based on the debiased estimator by inverting the KKT condition of the $\ell_1$ penalized regression problems. \cite{dezeure2015high} proposed the multi sample-splitting approach by repeatedly splitting the data into two parts. For each single splitting, they apply variable selection on one part of the data and apply a low dimensional inference procedure on the other part of the data with the selected variables to produce a $p$-value. Then they aggregate all the $p$-values to make the final decision. \cite{belloni2016post} and \cite{chernozhukov2018double} make inferences for single parameters by constructing Neyman-orthogonal estimating equations, so that the proposed estimators are immunized against high-dimensional nuisance parameters and variable selection mistakes. \cite{ning2017general} tested single parameters by constructing a decorrelated score function to handle the impact of high-dimensional nuisance parameters. \cite{shi2019linear} considered the problem of linear hypothesis testing using the partial penalized likelihood where only the high-dimensional parameters not involved in the hypothesis are penalized. \cite{sur2019modern} studied the likelihood ratio test for single parameters under the setting where $\frac{p}{n}\rightarrow \kappa,\kappa<\frac{1}{2}$. \cite{ma2021global} considered the global and multiple testing problem by a two-step standardization procedure. \cite{cai2021statistical} studied high-dimensional GLM with binary outcomes using the debiased procedure together with a link-specific weighting to ensure the bias of the estimator is dominated by the stochastic error. \cite{shi2021statistical} generalized the decorrelated score method of \cite{ning2017general} by recursively conducting model selection and constructing score equations based on the selected variables in an online manner. \cite{fei2021estimation} proposed the splitting and smoothing method for inferring single coefficients. The work of \cite{dezeure2015high,shi2019linear,shi2021statistical,fei2021estimation} require either variable selection methods with sure screening property \citep{fan2010sure} or a uniform signal strength condition. The work of \cite{van2014asymptotically,belloni2016post,ning2017general,ma2021global} relies on either the sparse inverse Hessian assumption or the sparse precision assumption which can be too stringent in practice. The method in \cite{cai2021statistical} uses sample splitting which could cause a loss of efficiency for inference. 
 Our method uses generated random noise to approximate the data generating noise, therefore we only require a model selection consistency in the oracle setting, resulting in a much weaker signal strength assumption than the aforementioned methods.
 Our method requires neither the sparse inverse Hessian, sparse precision matrix assumptions nor data splitting.

It is also of interest to quantify the uncertainty and make inference for the model support, a task that we can do. This inference problem is more difficult than the inference problem for the regression coefficients since the model space is discrete. 
There are several works to construct a model confidence set, but most of them are in the classical regression settings with $p < n$. For instance, 
\cite{hansen2011model} constructed the model confidence set by a sequence of equivalence tests and eliminations. Specifically, starting from a set of candidate models, they apply the equivalence tests to all the remaining model pairs in the candidate set, when any test is rejected, they eliminate the worst model from the candidate set. Then they repeat the procedure until all models in the candidate set are tested as equivalent. \cite{ferrari2015confidence} constructed the variable selection confidence set for linear regression based on $F$-testing. Starting from a relatively small full model (that is assumed to be available), they compare each of the sub-models to the full model using $F$-testing, then collect all the accepted sub-models as the variable selection confidence set. \cite{zheng2019model} extended the linear regression models in \cite{ferrari2015confidence} to more general models by comparing the sub-models to the full model using the likelihood ratio test. \cite{li2019model} introduced the model confidence bounds as two nested models such that the true model is between these two models with a certain level of confidence. To achieve this, they construct several Bootstrap samples and apply a model selection procedure to each of them to form a set of selected models. Then they choose the model confidence bounds such that it achieves the pre-specified coverage level on the set of selected models. The work of \cite{hansen2011model,ferrari2015confidence,zheng2019model} requires either the dimension of the data to be less than the sample size, or a variable screening procedure with sure screening property and thus a uniform signal strength condition. The work of \cite{li2019model} relies on a consistent model selection procedure where uniform signal strength is again necessary. Our proposed method does not have these constraints and it directly applies for high-dimensional models with $p \gg n$.  

A very recent work by \cite{wang2022finite} uses the repro samples method proposed in \cite{xie2022repro} to address the statistical inference for both regression coefficients and model support in a high-dimensional Gaussian linear regression model with finite sample coverage guarantee. Their artificial-sample-based method mimics the data generating process by sampling from the known noise distribution and generating synthetic response variables using the generated random noises. Noticing that if one knows the noise that generates the observed data, they could calculate all the possible values of the parameters that are able to generate the observed data using the noise, then the uncertainty of identifying the parameters merely comes from the inversion of the data generating process. However the data generating noise is unobservable, the repro samples method then incorporates both the uncertainty of the inversion of the data generating process and the uncertainty of the random noise to construct a confidence set for the parameters. Our developments use the same idea of the repro sample method to develop our inference. However, \cite{wang2022finite} focuses on the much easier setting of Gaussian regression models and their development can not be directly applied to high-dimensional logistic regression models. 

\subsection{Organization}
The paper is organized as follows. Section \ref{sec_preliminaries} includes notations, model setup, and a brief review of the repro samples method. Section \ref{sec_method} is the main section that introduces our methods for constructing the model candidate set, model confidence set, and inference for the regression coefficients. Theoretical properties of the proposed methods are also included. 
Section \ref{sec_numerical} provides numerical illustrations of the proposed method, which includes both simulation and real data examples. Section \ref{sec:disc} contains further remarks and discussions.
The proof of the theoretical results 
are in Appendix (Section \ref{sec_proof}).

\section{Preliminaries}\label{sec_preliminaries}
In this section, we introduce notation and the data generative model in Section \ref{sec_notation} and \ref{sec_model}, respectively. We also briefly review the general repro samples method in Section \ref{sec_repro}. 

\subsection{Notation}\label{sec_notation}
For any $p\in\mathbb{N}_+$, we denote $[p]$ to be the set $\{1,\ldots,p\}$. For a vector $v\in\R^p$ and a subset of indexes $\tau\subset[p]$, we denote $v_\tau$ to be the sub-vector of $v$ with indexes in $\tau$, denote $\norm{v}_k=(\sum_{j\in[p]}\abs{v_j}^k)^{1/k}$ for $k\ge 0$ with $\norm{v}_0=\sum_{j\in[p]}\1\{v_j\ne 0\}$ to be the number of nonzero elements in $v$ and $\norm{v}_\infty=\max_{j\in[p]}\abs{v_j}$. We also denote $\abs{\tau}=\sum_{j\in[p]}\1\{j\in\tau\}$ to be the cardinality of $\tau$. For matrix $A\in\R^{q\times p}$ and $\tau\subset[p]$, we denote $A_{\cdot,\tau}$ to be a submatrix of $A$ consisting of all the columns of $A$ with column indexes in $\tau$ and $\norm{A}_{\rm op}=\sup_{a\in\R^q,b\in\R^p}a^\top Ab$ is the operator norm of $A$. For a symmetric matrix $A$, $\lambda_{\min}(A)$ and $\lambda_{\max}(A)$ denote respectively the smallest and largest eigenvalues of $A$. We use $c$ and $C$ to denote absolute positive constants that may vary from place to place. For two positive sequences $a_n$ and $b_n$, $a_n\lesssim b_n$ means $a_n\le Cb_n$ for all $n$ and $a_n\gtrsim b_n$ if $b_n\lesssim a_n$ and $a_n\asymp b_n$ if $a_n\lesssim b_n$ and $b_n\lesssim a_n$, and $a_n\ll b_n$ if $\lim\sup_{n\rightarrow\infty}\frac{a_n}{b_n}=0$ and $a_n\gg b_n$ if $b_n\ll a_n$.

\subsection{Model Set-up}\label{sec_model}

In this work, we consider the logistic regression with independent observations $\{(X_i,y_i):i\in[n] \}$ generated from the following  distribution
\[\Prob(Y=1\mid X)=1-\Prob(Y=0\mid X)=\frac{1}{1+e^{-X^\top{\bm{\beta}}_0}},\quad X\sim \Prob_X,\]
where $X\in\R^p, Y\in\{0,1\}$ and ${\bm{\beta}}_0\in\R^p$ is the linear coefficients in logistic regression with $\|{\bm{\beta}}_0\|_0=s$. We can rewrite this model in the form of a data generating model
\begin{equation}\label{eq_model}
    Y = \mathbbm{1}\{X^\top{\bm{\beta}}_0 +\epsilon> 0 \},
\end{equation}
with $X\sim\Prob_X$, $\epsilon\sim{\rm Logistic}$ is a logistic random variable with cumulative distribution function $\Prob(\epsilon\le t)=(1+e^{-t})^{-1}$. Here we assume ${\bm{\beta}}_0$ defined in model \eqref{eq_model} is unique and this assumption will be satisfied if $\sup_{a\in\R^p,\norm{a}_2=1}\Prob(a^\top X=0)=0$. To highlight the observed data and the correspondence between $X_i, y_i$ and $\epsilon_i,$ for $ i\in[n]$, we use $\{(X_i^{obs}, y_i^{obs}, \epsilon_i^{rel}):i\in[n]\}$ to denote the oracle data, which consists of the \textit{observed data} and the corresponding \textit{realization} of the data generating noises. Denote $\bm{X}=(X_1, \ldots, X_n)^\top$, $\bm{X}^{obs}=(X_1^{obs}, \ldots, X_n^{obs})$, $\bm{y}=(y_1, \ldots, y_n)^\top$, $\bm{y}^{obs}=(y_1^{obs}, \ldots, y_n^{obs})^\top$, $\bm{\epsilon}=(\epsilon_1, \ldots, \epsilon_n)^\top$, $\bm{\epsilon}^{rel}=(\epsilon_1^{rel}, \ldots, \epsilon_n^{rel})^\top$. Throughout the paper, we use $\bm{X}, \bm{y}$ and $\bm{\epsilon}$ to denote the random copy of data and corresponding random noises, respectively. We use $\bm{X}^{obs}, \bm{y}^{obs}$ and $\bm{\epsilon}^{rel}$ when the observed data is treated as given (or realized). 

In this model, we have two sets of unknown parameters, one is the support of ${\bm{\beta}}_0$ denoted by $\tau_0=\supp({\bm{\beta}}_0)=\{j\in[p]: {\beta}_{0,j}\ne 0\}$ and the other one is the vector of nonzero coefficients ${\bm{\beta}}_{0,\tau_0}$.  We write $\bm{\theta}=(\tau,{\bm{\beta}}_{\tau})$, $\bm{\theta}_0=(\tau_0,{\bm{\beta}}_{0,\tau_0})$ and denote $\Prob_{\bm{\theta}_0}$ to be the joint distribution of $(X,Y)$ defined in Equation \eqref{eq_model} with parameters equal to $\bm\theta_0$. 
The log-likelihood function of $\bm{\theta}$ is then given by 
$$l(\bm{\theta}\mid\bm{X}^{obs},\bm{y}^{obs})=-\sum_{i=1}^n \log(1+e^{-(2y_i^{obs}-1)X_{i,\tau}^{obs\top}{\bm{\beta}}_{\tau}}).$$
Most work of statistical inference on high-dimensional logistic models only focuses on the parameters ${\bm{\beta}}_0$. 
In this paper, we are interested in making inferences for both the true model~$\tau_0$~and linear coefficients~${\bm{\beta}}_0$.

\subsection{Repro Samples Method}\label{sec_repro}

In this subsection, we briefly review the general repro samples framework for statistical inference proposed by \cite{xie2022repro}. This artificial-sample-based method 
can be applied to
construct confidence regions for a variety of parameters that take values in either continuum or discrete sets. Assume we observe $n$ samples $\bm{y}^{obs}=\{y_1^{obs},\ldots,y_n^{obs}\}$ from the population $\bm{Y}=G(\bm{U},\bm{\theta}_0)$, where $\bm{U}\in\mathcal{U}$ is a random vector from a known distribution $P_U$, $\bm{\theta}_0\in\Theta$ is the unknown model parameter and $G: \mathcal{U}\times\Theta\rightarrow \R^n$ is a known mapping. The observed data $\bm{y}^{obs}$ satisfies $\bm{y}^{obs}=G(\bm{u}^{rel},\bm{\theta}_0)$ with $\bm{u}^{rel}\in\mathcal{U}$ is a realization of the random vector $\bm{U}$. 

The repro samples method makes inference for the parameter $\bm{\theta}_0$ by mimicking the model generating process. Intuitively, if we have observed $\bm{u}^{rel}$, then for any parameter $\bm{\theta}\in\Theta$, we generate an artificial data $\bm{y}'=G(\bm{u}^{rel},\bm{\theta})$. If the artificial data matches the observed samples, i.e., $\bm{y}' =\bm{y}^{obs}$, then $\bm{\theta}$ is a potential value of $\bm{\theta}_0$ 
and, if there is any ambiguity, it
comes only from the inversion of $G(\bm{u}^{rel},\cdot)$. However, the data generating noises $\bm{u}^{rel}$ is unobserved, we need to incorporate its uncertainty for which we do it by considering a Borel set $B_\alpha$ with $\Prob(\bm{U}\in B_\alpha)\ge\alpha$. For any $\bm{u}^*\in B_\alpha$ and $\bm{\theta}\in\Theta$, we create an artificial data $\bm{y}^*=G(\bm{u}^{*},\bm{\theta})$ called repro sample. We keep $\bm{\theta}$ as a potential value of $\bm{\theta}_0$ if $\bm{y}^* = \bm{y}^{obs}$. All the retained values of $\bm{\theta}$ form a level-$\alpha$ confidence set for $\bm{\theta}_0$. Therefore the total uncertainty of the confidence region comes from both the possible ambiguity of the inversion of $G(\bm{u}^{rel},\cdot)$ and the uncertainty of the unobservability of $\bm{u}^{rel}$. Note that throughout the paper, we use $\alpha$ instead of $1-\alpha$ to denote the confidence level. For example, $\alpha = .90, .95$, or $.99$.

More generally, we consider a Borel set $B_\alpha(\bm{\theta})$ with $\Prob(T(\bm{U},\bm{\theta})\in B_\alpha(\bm{\theta}))\ge\alpha$. Then define the confidence region of $\bm{\theta}_0$ as
\[\Gamma_\alpha^{\bm{\theta}_0}(\bm{y}^{obs})=\{\bm{\theta}:\exists \bm{u}^*~{\rm s.t.}~\bm{y}^{obs}=G(\bm{u}^*,\bm{\theta}), T(\bm{u}^*,\bm{\theta})\in B_\alpha(\bm{\theta})\}.\]
It follows
\[\Prob(\bm{\theta}_0\in\Gamma_\alpha^{\bm{\theta}_0}(\bm{Y}))\ge\Prob\big(T(\bm{U},\bm{\theta}_0)\in B_\alpha(\bm{\theta}_0)\big)\ge\alpha.\]
Here $T:\mathcal{U}\times\Theta\rightarrow \R^d$ is called the nuclear mapping. Clearly, there might be multiple choices of $T$ that all lead to valid confidence regions. One choice is $T(\bm{u},\bm{\theta})=\bm{u}$ for any $\bm{\theta}\in\Theta$ and $B_\alpha(\bm{\theta})=D_\alpha$ is a level-$\alpha$ Borel set of $P_U$ with $\Prob(\bm{U}\in D_\alpha)\ge\alpha$. However, this naive nuclear statistic could lead to an oversized confidence region. Therefore $T$ is similar to a test statistic under the hypothesis testing framework and should be designed properly, see \cite{xie2022repro} for more details. Note that if $T$ depends on $\bm{u}^*$ through $G(\bm{u}^*,\bm{\theta})$, i.e., $T(\bm{u}^*,\bm{\theta})=\tilde T(G(\bm{u}^*,\bm{\theta}), \bm{\theta})$ for some $\tilde T$, then $\Gamma_\alpha^{\bm{\theta}_0}$ can be equivalently expressed as
\begin{equation}\label{eq_nuclear_test}
    \begin{aligned}
    \Gamma_{\alpha}^{\bm{\theta}_0}(\bm{y}^{obs})=&\{\bm{\theta}:\exists\bm{u}^*{\rm~s.t.~}\bm{y}^{obs}=G(\bm{u}^*,\bm{\theta}),\tilde T(\bm{y}^{obs},\bm{\theta})\in B_\alpha(\bm{\theta})\}\\
    \subseteq&\{\bm{\theta}:\tilde T(\bm{y}^{obs},\bm{\theta})\in B_\alpha(\bm{\theta})\}
    = \tilde\Gamma_\alpha^{\bm{\theta}_0}(\bm{y}^{obs}).
\end{aligned}
\end{equation}
Specifically, if $\tilde T$ is a test statistic under the Neyman-Pearson framework, by the property of test duality, $\tilde\Gamma_\alpha^{\bm{\theta}_0}(\bm{y}^{obs})$ is a level-$\alpha$ confidence set and the confidence set $\Gamma_\alpha^{\bm{\theta}_0}(\bm{y}^{obs})$ constructed by repro samples method becomes a subset of $\tilde\Gamma_\alpha^{\bm{\theta}_0}(\bm{y}^{obs})$.

In cases when nuisance parameters are present, \cite{xie2022repro} proposes a nuclear mapping function to make inferences for the parameters of interest. However, this approach encounters specific challenges when applied to high-dimensional logistic regression models. First, when making inference for $\tau_0,$ the regression coefficients $\*\beta_0$ are unknown nuisance parameters. Since it is infeasible to sample from the conditional distribution given the sufficient statistics, we would need to tackle the computational challenge by designing a nuclear mapping that can profile out all possible values of $\*\beta_0.$ On the other hand, when making inferences for the regression coefficients, we would need to profile out $\tau_0,$ this would require a viable model candidate set. Unlike the linear model setting in \cite{wang2022finite}, in the logistic regression framework,  multiple values of $\bm\theta$ may satisfy \eqref{eq_model} given $\bm y^{obs}$ and $\bm \epsilon^{rel}. $ This aspect significantly complicates the task of establishing a candidate set, both from computational and theoretical standpoints. See Section~\ref{sec_method} for a detailed explanation of the strategies to address the above challenges.


\section{Method and Theory}\label{sec_method}

This section presents the methods and theoretical results for statistical inference in high-dimensional logistic regression models.
Specifically, in Section~\ref{sec_candidate}, we construct the model candidate set to reduce the computation cost, by selecting a subset of the model space for inference.
In Section \ref{sec_tau}, we prune the model candidate set to get a model confidence set.  
During both of these subsections, the parameter of interest is $\tau_0$, while $\bm{\beta}_0$ is a nuisance parameter. Subsequently, in Section \ref{sec_Abeta}, we delve into techniques for inferring $q$ linear combinations of $\bm{\beta}_0$, denoted as $A\bm{\beta}_0$, utilizing any predetermined constant matrix $A \in \R^{q \times p}$ where $1 \leq q \leq p$. 

\subsection{Model candidate set}\label{sec_candidate}

As mentioned in Section \ref{sec_repro}, we need a Borel set $B_\alpha(\bm{\theta})$ for $\bm{\theta}=(\tau,\bm{\beta}_\tau)$ to incorporate the uncertainty of $\bm{\epsilon}^{rel}$. We will see in later sections that if we fix a model $\tau$, the set $B_\alpha(\bm{\theta})$ can be easily constructed for any $\bm{\beta}_\tau$. However, we still need to search over all model $\tau$'s, which can be computationally expensive. Therefore we introduce the notion of model candidate set to constrain the potential values of $\tau_0$ to only a small set of models without loss of much confidence and propose an efficient procedure for constructing such a candidate set. 

To demonstrate our construction of the model candidate set, we start from the oracle scenario when $\bm{\epsilon}^{rel}$ is known. With this oracle data, we show that $\tau_0$ can be identified under a weak signal strength assumption. However, the noise $\bm{\epsilon}^{rel}$ is not observable in practice, so we randomly generate $d$ noises $\{\bm{\epsilon}^{*(j)}: j\in[d]\}$ to approximate $\bm{\epsilon}^{rel}$. For each $\bm{\epsilon}^{*(j)}$, we construct an estimator $\hat\tau(\bm{\epsilon}^{*(j)})$ of $\tau_0$, then we collect these $d$ estimators to form the model candidate set $\mathcal{C}=\{\hat\tau(\bm{\epsilon}^{*(j)}):j\in[d]\}$.

Since the estimators $\hat\tau(\bm{\epsilon}^{*(j)})$ are constructed using empirical risk minimization, we define the following notations at first. Given any noises $\tilde{\bm{\epsilon}}=\{\tilde\epsilon_{i}:i\in[n]\}$, if we apply the following function $(X,\tilde\epsilon)\mapsto\1\{X_\tau^\top\bm{\beta}_\tau+\sigma\tilde\epsilon>0\}$ to $(\bm{X}^{obs},\tilde{\bm{\epsilon}})$ to predict $\bm{y}^{obs}$, we denote the empirical prediction error as
\begin{align*}
L_n(\tau,\bm{\beta}_\tau,\sigma|\bm{X}^{obs},\bm{y}^{obs},\tilde{\bm{\epsilon}}) 
=&\dfrac{1}{n}\sum_{i=1}^n\mathbbm{1}\big\{y_i^{obs}\ne\mathbbm{1}\{X_{i,\tau}^{obs\top}\bm{\beta}_{\tau}+\sigma\tilde\epsilon_{i}>0\}\big\} \\
=&\dfrac{1}{n}\sum_{i=1}^n\mathbbm{1}\big\{\1\{X_{i,\tau_0}^{obs\top}\bm{\beta}_{0,\tau_0}+\epsilon_{i}^{rel} > 0\}\ne\mathbbm{1}\{X_{i,\tau}^{obs\top}\bm{\beta}_{\tau}+\sigma\tilde\epsilon_{i}>0\}\big\}.
\end{align*}
In this work, we will choose $\tilde{\bm{\epsilon}}$ to be either $\bm{\epsilon}^{rel}$ or an artificially generated $\bm{\epsilon}^*$ that is independent of the oracle data $(\bm{X}^{obs},\bm{y}^{obs},\bm{\epsilon}^{rel})$. Now we define the expected prediction error using these two choices of $\tilde{\bm{\epsilon}}$ respectively. For a random copy $(\bm{X},\bm{y},\bm{\epsilon})$ of the oracle data, if we choose $\tilde {\bm{\epsilon}}=\bm{\epsilon}$, the expected prediction error is denoted as
\begin{align*}
    L_{\bm{\theta_0}}(\tau,\bm{\beta}_\tau,\sigma)
    = &\E  L_n(\tau,\bm{\beta}_\tau,\sigma|\bm{X},\bm{y},\bm{\epsilon})
    = 
 \Prob\big(\1\{X_{\tau_0}^\top\bm{\beta}_{0,\tau_0}+\epsilon>0\}\ne\1\{X_\tau^\top\bm{\beta}_\tau+\sigma\epsilon>0\}\big),
\end{align*}
where the expectation $\E$ is over the randomness of $\bm{X}$, $\bm{\epsilon}$ and $\bm{y}$ (or equivalently $\bm{X}$ and $\bm{\epsilon}$).
When we set $\tilde{\bm{\epsilon}}=\bm{\epsilon}^*$ which is independent of $(\bm{X},\bm{y},\bm{\epsilon})$, we denote the expected prediction error as
\begin{align*}
    L_{\bm{\theta}_0}^*(\tau,\bm{\beta}_\tau,\sigma)
    =&\E L_n(\tau,\bm{\beta}_\tau,\sigma|\bm{X},\bm{y},\bm{\epsilon}^*)
    =\Prob\big(\1\{X_{\tau_0}^\top\bm{\beta}_{0,\tau_0}+\epsilon>0\}\ne\1\{X_\tau^\top\bm{\beta}_\tau+\sigma\epsilon^*>0\}\big).
\end{align*}
Here the expectation $\E$ above is over the randomness of $\bm{X},\bm{y}$ and ${\bm{\epsilon}^*}$ (or $\bm{X},\bm{\epsilon}$ and ${\bm{\epsilon}^*}$).

\subsubsection{Signal strength condition and recovery under oracle setting}\label{sec_candidate_oracle}

As outlined in Section~\ref{sec_candidate}. Our intuition for constructing the model candidate set contains two stages. At first, we show $\tau_0$ can be recovered given $\bm{\epsilon}^{rel}$. Then we generate independent copies of $\bm{\epsilon}$ to approximate $\bm{\epsilon}^{rel}$. This subsection considers the first stage, investigating the sufficient condition for recovering $\tau_0$ given the knowledge of $\bm{\epsilon}^{rel}$. Then we will show in Section~\ref{sec_candidate_practice} that under this sufficient condition, the recovery of $\tau_0$ still holds providing $\bm{\epsilon}^{rel}$ aligns with some of the generated synthetic noises. 

Note that $L_{\bm{\theta}_0}(\tau,\bm{\beta}_{\tau},\sigma)$ attains its minimum value 0 at $(\tau_0,\bm{\beta}_{0,\tau_0},1)$. Therefore, supposing $\bm{\epsilon}^{rel}$ is known, we could estimate $\tau_0$ by minimizing $L_n(\tau,\bm{\beta}_\tau,\sigma|\bm{X}^{obs},\bm{y}^{obs},\bm{\epsilon}^{rel})$. However, when $\bm{\beta}_{0,\tau_0}$ has weak signals, excluding those weak signals from $\tau_0$ may not increase $L_{\bm{\theta}_0}$ substantially. Consequently, the minimizer of $L_n$ may differ from $\tau_0$, making it hard to identify the true model using the oracle data $(\bm{X}^{obs},\bm{y}^{obs},\bm{\epsilon}^{rel})$. Therefore, 
to identify the true model $\tau_0$, we need the following assumption on the signal strength to separate $\tau_0$ from all other wrong models using data.

\begin{Assumption}\label{ass_signal}
For all $\tau\subset[p]$ with $\abs{\tau}\le s,\tau\ne\tau_0$,
\begin{equation}\label{eq_signal}
    \begin{aligned}
        &\inf_{\bm{\beta}_\tau\in\R^{\abs{\tau}},\sigma\ge 0}L_{\bm{\theta}_0}(\tau,\bm{\beta}_\tau,\sigma)
        \gtrsim &\min\bigg\{\abs{\tau_0\setminus\tau}\dfrac{\log p}{n}+(\abs{\tau}+1)\dfrac{\log \frac{n}{\abs{\tau}+1}}{n},(\abs{\tau}+1)\dfrac{\log p}{n}\bigg\}.
    \end{aligned}
\end{equation}
\end{Assumption}

Note that when $\bm{\epsilon}^{rel}$ is known, the prediction error under the true parameter is 0, $L_{\bm{\theta}_0}(\tau_0,\bm{\beta}_{0,\tau_0},1)=0$. Then Assumption \ref{ass_signal} links model selection to prediction in the sense that at the population level, any wrong model $\abs{\tau}\le s,\tau\ne\tau_0$ has a positive prediction error gap from the true model. As we will show in Remark \ref{rem_cmin} and \ref{rem_betamin} later, this assumption is weaker than other commonly used signal strength conditions in the literature.

Under Assumption \ref{ass_signal}, all the wrong models have a relatively large prediction error while the true model $\tau_0$ has a prediction error equal to 0, therefore if we solve the constrained empirical risk minimization problem
\begin{equation}\label{eq_oracle}
    \hat \tau(\bm{\epsilon}^{rel})=\argmin_{|\tau|\le s}\min_{\bm{\beta}\in\R^p,\sigma\ge 0}L_n(\tau,\bm{\beta}_\tau,\sigma|\bm{X}^{obs},\bm{y}^{obs},\bm{\epsilon}^{rel}),
\end{equation}
$\hat\tau(\bm{\epsilon}^{rel})$ is likely to equal to $\tau_0$. Formally, we have the following Lemma \ref{lem_oracle} which states that as long as Assumption \ref{ass_signal} is satisfied, we can identify $\tau_0$ using $(\bm{X}^{obs},\bm{y}^{obs},\bm{\epsilon}^{rel})$ in probability. A proof is given in the Appendix. In Lemma~\ref{lem_oracle}, we denote
$\hat \tau(\bm{\epsilon}) = \argmin_{|\tau|\le s}\min_{\bm{\beta}\in\R^p,\sigma\ge 0}L_n(\tau,\bm{\beta}_\tau,\sigma|\bm{X},\bm{y},\bm{\epsilon})$ to be a random copy of $\hat\tau(\bm{\epsilon}^{rel})$.

\begin{Lemma}\label{lem_oracle}

For $\hat\tau$ defined in Equation \eqref{eq_oracle}, denote
\[\tilde c_{\min}=\min_{|\tau|\le s,\tau\ne\tau_0,\bm{\beta}_\tau\in\R^{\abs{\tau}},\sigma\ge 0}\dfrac{L_{\bm{\theta}_0}(\tau,\bm{\beta}_\tau,\sigma)-\frac{2\abs{\tau}+2}{n}\log_2\frac{2en}{\abs{\tau}+1}}{\abs{\tau_0\setminus\tau}},\]
\[c_{\min}=\min_{|\tau|\le s,\tau\ne\tau_0,\bm{\beta}_\tau\in\R^{\abs{\tau}},\sigma\ge 0}\dfrac{L_{\bm{\theta}_0}(\tau,\bm{\beta}_\tau,\sigma)-\frac{2\abs{\tau}+2}{n}\log_2\frac{2en}{\abs{\tau}+1}}{\abs{\tau}\vee 1},\]
then

\[\Prob(\hat\tau(\bm{\epsilon})\ne\tau_0)\lesssim 2^{-\frac{1}{2}n\tilde c_{\min}+2\log_2 p}\wedge 2^{-\frac{1}{2}nc_{\min}+\log_2 p}.\]
Here the probability is taken with respect to $(\bm{X}, \bm{y}, \bm{\epsilon})$. Furthermore, if Assumption \ref{ass_signal} holds, 
\[\Prob(\hat\tau(\bm{\epsilon})\ne\tau_0)\lesssim 2^{-cn\tilde c_{\min}}\wedge 2^{-cnc_{\min}}.\]
\end{Lemma}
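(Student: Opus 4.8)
The plan is to show that a wrong model can be selected only when it fits the oracle data perfectly, and then to bound the probability of such a perfect fit by a Vapnik--Chervonenkis argument summed over all candidate models.

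First I would reduce the selection event to a perfect-fit event. Plugging $(\tau_0,\bm{\beta}_{0,\tau_0},1)$ into $L_n(\cdot\mid\bm{X},\bm{y},\bm{\epsilon})$ gives value $0$, since $y_i=\1\{X_{i,\tau_0}^\top\bm{\beta}_{0,\tau_0}+\epsilon_i>0\}$ by construction; hence the minimum in \eqref{eq_oracle} equals $0$ and is attained at $\tau_0$. Consequently, on $\{\hat\tau(\bm{\epsilon})\ne\tau_0\}$ the selected model must itself satisfy $\min_{\bm{\beta},\sigma\ge0}L_n(\hat\tau,\bm{\beta}_{\hat\tau},\sigma\mid\bm{X},\bm{y},\bm{\epsilon})=0$. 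Moreover, the constraint $\abs{\tau}\le s=\abs{\tau_0}$ together with $\tau\ne\tau_0$ forces $\tau\not\supseteq\tau_0$, i.e. $\abs{\tau_0\setminus\tau}\ge1$; this excludes overfitted models (for which a perfect fit is trivially attainable) and justifies dividing by $\abs{\tau_0\setminus\tau}$ later. Thus $\{\hat\tau(\bm{\epsilon})\ne\tau_0\}\subseteq\bigcup_{\tau\ne\tau_0,\abs{\tau}\le s}\{\min_{\bm{\beta},\sigma\ge0}L_n(\tau,\bm{\beta}_\tau,\sigma\mid\bm{X},\bm{y},\bm{\epsilon})=0\}$.

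Next, for a fixed $\tau$ I would bound the probability of a perfect fit. Writing $\gamma_\tau=\inf_{\bm{\beta}_\tau,\sigma\ge0}L_{\bm{\theta}_0}(\tau,\bm{\beta}_\tau,\sigma)$, every admissible $(\bm{\beta}_\tau,\sigma)$ has population prediction error at least $\gamma_\tau$, while a perfect fit means its empirical error vanishes. The predictors $(x,e)\mapsto\1\{x_\tau^\top\bm{\beta}_\tau+\sigma e>0\}$ are homogeneous halfspaces in $\R^{\abs{\tau}+1}$, a class of VC dimension at most $\abs{\tau}+1$; the associated $0/1$ loss functions (an XOR with the fixed labels $y_i$) have the same growth function, bounded by Sauer--Shelah as $\Pi_\tau(2n)\le(2en/(\abs{\tau}+1))^{\abs{\tau}+1}$. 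The classical realizable-case relative deviation inequality then gives $\Prob(\exists(\bm{\beta}_\tau,\sigma):L_n=0,\,L_{\bm{\theta}_0}\ge\gamma_\tau)\le 2\,\Pi_\tau(2n)\,2^{-n\gamma_\tau/2}=2\cdot 2^{-\frac{n}{2}N_\tau}$, where $N_\tau:=\gamma_\tau-\frac{2\abs{\tau}+2}{n}\log_2\frac{2en}{\abs{\tau}+1}$ is exactly the numerator appearing in $\tilde c_{\min}$ and $c_{\min}$. I expect this step to be the main obstacle, both because the generated labels $y_i$ are coupled to $(X_i,\epsilon_i)$ --- so one must argue via symmetrization and a ghost sample that the relevant loss class is still governed by $\Pi_\tau$ --- and because the growth-function exponent must be matched precisely to reproduce the stated $\frac{2\abs{\tau}+2}{n}\log_2\frac{2en}{\abs{\tau}+1}$ term (the base-$2$ symmetrization bound is what produces the $2^{-n\gamma_\tau/2}$ rate).

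Finally I would union bound and repackage. Summing the per-model estimate, $\Prob(\hat\tau\ne\tau_0)\le 2\sum_{\tau\ne\tau_0,\abs{\tau}\le s}2^{-\frac{n}{2}N_\tau}$, I would count models in two ways. Grouping by $r=\abs{\tau_0\setminus\tau}$ (at most $(r+1)p^{2r}$ such models) with $N_\tau\ge r\,\tilde c_{\min}$ yields a geometric series in $r$ dominated, when convergent, by its $r=1$ term, giving $\lesssim 2^{-\frac{n}{2}\tilde c_{\min}+2\log_2 p}$; grouping by $k=\abs{\tau}$ (at most $p^{k}$ models) with $N_\tau\ge(\abs{\tau}\vee1)c_{\min}$ gives $\lesssim 2^{-\frac{n}{2}c_{\min}+\log_2 p}$. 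Taking the minimum yields the first display, each summand being vacuous (exceeding $1$) precisely when the corresponding geometric ratio is at least one. For the ``furthermore'' part, Assumption~\ref{ass_signal} is calibrated so that $\gamma_\tau$ dominates the growth term (hence $N_\tau\ge\tfrac12\gamma_\tau>0$) and so that, in any regime where a bound is non-vacuous, the additive $2\log_2 p$ (resp. $\log_2 p$) is absorbed into a constant fraction of $\tfrac{n}{2}\tilde c_{\min}$ (resp. $\tfrac{n}{2}c_{\min}$); the wedge then automatically selects whichever bound is effective across the complementary regimes, producing $\Prob(\hat\tau\ne\tau_0)\lesssim 2^{-cn\tilde c_{\min}}\wedge 2^{-cnc_{\min}}$. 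The remaining effort is routine constant bookkeeping in the geometric sums and a short case analysis of the $\min\{\cdot,\cdot\}$ in \eqref{eq_signal}.
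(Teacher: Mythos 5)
Your proposal is correct and follows essentially the same route as the paper: reduce the misselection event to a perfect empirical fit by some wrong model, bound the per-model probability via the realizable-case VC/Sauer--Shelah ghost-sample argument (the paper's Lemma~\ref{lem_pac_upper}, which yields exactly the $2^{-\frac{n}{2}\eta+(\abs{\tau}+1)\log_2\frac{2en}{\abs{\tau}+1}}$ rate you anticipate), and then union bound with the two counting schemes (by $\abs{\tau_0\setminus\tau}$ and by $\abs{\tau}$) to produce the $2\log_2 p$ and $\log_2 p$ terms respectively. Your handling of the reduction ($\abs{\tau}\le s$ plus $\tau\ne\tau_0$ forcing $\abs{\tau_0\setminus\tau}\ge 1$) and of the absorption of the $\log_2 p$ terms under Assumption~\ref{ass_signal} matches the paper's argument.
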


\begin{Remark}\label{rem_cmin}
    Assumption \ref{ass_signal} can be shown to be weaker than the $C_{\min}$ condition in \cite{shen2012likelihood}. Note that the $C_{\min}$ condition requires
    \[\inf_{\abs{\tau}\le s,\tau\ne\tau_0,\bm{\beta}_\tau\in\R^{\abs{\tau}}}\frac{[H(\Prob_{\bm{\theta}_0},\Prob_{(\tau,\bm{\beta}_\tau)})]^2}{\abs{\tau_0\setminus\tau}}\gtrsim\frac{\log p}{n},\]
    where $\Prob_{(\tau,\bm{\beta}_\tau)}$ is the joint distribution of $(X,Y)$ with $X\sim\Prob_X$, $\epsilon\sim\text{Logistic}$ and $Y=\1\{X_{\tau}^\top\bm{\beta}_\tau+\epsilon>0\}$, $H(\Prob_1,\Prob_2)$ is the Hellinger distance between $\Prob_1,\Prob_2$. However as we will show in Lemma \ref{lem_cmin} of Section \ref{sec_proof}, when $\sigma>0$, 
    \[L_{\bm{\theta}_0}(\tau,\bm{\beta}_\tau,\sigma)={\rm TV}(\Prob_{\bm{\theta}_0},\Prob_{(\tau,\frac{\bm{\beta}_\tau}{\sigma})}),\]
    where ${\rm TV}(\Prob_1,\Prob_2)=\sup_A\abs{\Prob_1(A)-\Prob_2(A)}$ is the total variation distance between $\Prob_1,\Prob_2$. If for any $\tau\subset[p]$ with $\abs{\tau}\le s,\tau\ne\tau_0$, the minimizer $(\bm{\beta}_\tau,\sigma)$ of Equation \eqref{eq_signal} satisfies $\sigma>0$, and if we further assume $s\log\frac{n}{s}\lesssim \log p$, then a sufficient condition for Assumption \ref{ass_signal} is
    \[\inf_{\abs{\tau}\le s,\tau\ne\tau_0,\bm{\beta}_\tau\in\R^{\abs{\tau}}}\frac{{\rm TV}(\Prob_{\bm{\theta}_0},\Prob_{(\tau,\bm{\beta}_\tau)})}{\abs{\tau_0\setminus\tau}}\gtrsim\frac{\log p}{n}.\]
    Since $\{H(\Prob_1,\Prob_2)\}^2\lesssim{\rm TV}(\Prob_1,\Prob_2)\lesssim H(\Prob_1,\Prob_2)$, Assumption \ref{ass_signal} is weaker than the $C_{\min}$ condition in \cite{shen2012likelihood}.
\end{Remark}
\begin{Remark}\label{rem_betamin}
  Assumption \ref{ass_signal} is also weaker than the commonly used $\beta$-min condition \citep{bunea2008honest,zhang2010nearly,zhao2006model}. Denote $\beta_{\min}=\min_{j\in\tau_0}\abs{\beta_{0,j}}$, then the $\beta$-min condition assumes
    \[\beta_{\min}\gtrsim\sqrt{\frac{\log p}{n}}.\]
    As we will show in Lemma \ref{lem_betamin} of Section \ref{sec_proof}, suppose $X$ is sub-Gaussian with $\norm{X}_{\psi_2}\lesssim 1$, $\norm{\bm{\beta}_0}_2\lesssim 1$ and $\lambda_{\min}(\E XX^\top)\gtrsim 1$,
    then 
    \[\inf_{\abs{\tau}\le s,\tau\ne\tau_0,\bm{\beta}_\tau\in\R^{\abs{\tau}}}\frac{{\rm TV}(\Prob_{\bm{\theta}_0},\Prob_{(\tau,\bm{\beta}_\tau)})}{\sqrt{\abs{\tau_0\setminus\tau}}}\gtrsim\beta_{\min}.\]
    Therefore another sufficient condition for Assumption \ref{ass_signal} is
    $\beta_{\min}\gtrsim\frac{\sqrt{s}\log p}{n}+\frac{s\log \frac{n}{s}}{n}.$ 
    When $\frac{s\log p}{n}+\frac{s^2\log^2\frac{n}{s}}{n\log p}\lesssim 1$, we have Assumption \ref{ass_signal} is weaker than the $\beta$-min condition.
\end{Remark}

\subsubsection{Candidate set construction in the practical setting}\label{sec_candidate_practice}

In practice, although the oracle noise $\bm{\epsilon}^{rel}$ is unobservable, we can generate a vector $\bm{\epsilon}^*$ independently from logistic distribution and calculate $\hat \tau(\bm{\epsilon}^*)$ as
\[\hat \tau(\bm{\epsilon}^*)=\argmin_{|\tau|\le k}\min_{\bm{\beta}\in\R^p,\sigma\ge 0}L_n(\tau,\bm{\beta}_\tau,\sigma|\bm{X}^{obs},\bm{y}^{obs},\bm{\epsilon}^*).\]
We expect that as long as $\bm{\epsilon}^*$ and $\bm{\epsilon}^{rel}$ are close enough, we would have $\hat\tau(\bm{\epsilon}^*)=\hat\tau(\bm{\epsilon}^{rel})$. Therefore, we generate $d$ i.i.d. random noises $\{\bm{\epsilon}^{*(j)}:j\in[d]\}$ and calculate their corresponding $\hat\tau(\bm{\epsilon}^{*(j)})$. Then we collect all the estimated models into the model candidate set $\mathcal{C}$ as
\[\mathcal{C}=\{\hat\tau(\bm{\epsilon}^{*(j)}):\epsilon_i^{*(j)}\overset{{\rm i.i.d.}}{\sim}{\rm Logistic},i\in[n],j\in[d]\}.\] 
We summarize the above procedure in Algorithm \ref{alg_candidate}.

\begin{algorithm}
\caption{Model Candidate Set}\label{alg_candidate}
\begin{algorithmic}[1]
\State{\bf Input:} Observed data $(\bm{X}^{obs},\bm{y}^{obs})$, the number of repro samples $d$. 
\State{\bf Output:} Model candidate set $\mathcal{C}$.
\State Generate $d$ copies of logistic random noises $\{\bm{\epsilon}^{*(j)}:\epsilon_i^{*(j)}\overset{{\rm i.i.d.}}{\sim}{\rm Logistic},i\in[n],j\in[d]\}$.
\State Compute $\hat\tau(\bm{\epsilon}^{*(j)})=\argmin_{|\tau|\le k}\min_{\bm{\beta}\in\R^p,\sigma\ge 0}L_n(\tau,\bm{\beta}_\tau,\sigma|\bm{X}^{obs},\bm{y}^{obs},\bm{\epsilon}^{*(j)}),$ for $j\in[d]$ and some $k$. 
\State Construct $\mathcal{C}=\{\hat\tau(\bm{\epsilon}^{*(j)}):j\in[d]\}$.
\end{algorithmic}
\end{algorithm}

\begin{Remark}[Practical implementation of Algorithm \ref{alg_candidate}]
    Step 2 in Algorithm \ref{alg_candidate} involves optimization for 0-1 loss function with $\ell_0$ constraint which can be hard to calculate. In practice, we use the hinge loss or logistic loss as surrogates for the 0-1 loss, then replace the $\ell_0$ constraint by the adaptive LASSO penalty and apply the extended BIC \citep{chen2008extended} to choose the tuning parameter for the penalty.
\end{Remark}

In the following theorem, we show that as long as the number of Monte Carlo copies, $d$, is large enough, there will be at least one $\bm{\epsilon}^{*(j)}$ that is closed to $\bm{\epsilon}^{rel}$, then the model candidate set $\mathcal{C}$ contains the true model $\tau_0$ with high probability. A proof is given in the Appendix.

\begin{Theorem}\label{thm_candidate}
Using the same notation  as in Lemma \ref{lem_oracle}, if we further denote $F_\epsilon(z)=(1+e^{-z})^{-1}$ to be the CDF of logistic distribution, we have
\begin{align*}
    \Prob(\tau_0\not\in\mathcal{C})\lesssim 2^{-\frac{1}{2}n\tilde c_{\min}+2\log_2p}\wedge2^{-\frac{1}{2}nc_{
    \min}+\log_2p}+(1-\{\E \big|F_\epsilon(\epsilon)-F_\epsilon(-X^\top_{\tau_0}\bm{\beta}_{0,\tau_0})\big|\}^n)^d.
\end{align*}
If Assumption \ref{ass_signal} holds, for any fixed $n$, when $d$ is large enough such that
\[(1-\{\E \big|F_\epsilon(\epsilon)-F_\epsilon(-X^\top_{\tau_0}\bm{\beta}_{0,\tau_0})\big|\}^n)^d\lesssim 2^{-cn\tilde c_{\min}}\wedge 2^{-cnc_{\min}},\]
we have
\[\Prob(\tau_0\not\in\mathcal{C})\lesssim 2^{-cn\tilde c_{\min}}\wedge 2^{-cnc_{\min}}.\]
\end{Theorem}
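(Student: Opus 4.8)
The plan is to isolate the two sources of error that the stated bound reflects: the failure of the oracle recovery (the first term, inherited directly from Lemma~\ref{lem_oracle}) and the failure to draw any synthetic noise that is compatible with the observed responses (the second term). I would call a generated noise $\bm{\epsilon}^{*(j)}$ a \emph{match} if it reproduces the observed labels through the true parameter, i.e.\ if the event
\[A_j=\big\{\,y_i^{obs}=\1\{X_{i,\tau_0}^{obs\top}\bm{\beta}_{0,\tau_0}+\epsilon_i^{*(j)}>0\}\ \text{for all}\ i\in[n]\,\big\}\]
holds, and let $N=\sum_{j\in[d]}\1\{A_j\}$ count the matches. Starting from
\[\Prob(\tau_0\notin\mathcal{C})\le\Prob(N=0)+\Prob(\tau_0\notin\mathcal{C},\,N\ge1),\]
I would bound the two terms separately.

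For the second term, the central observation — and the step I expect to be the main obstacle — is a distributional coupling between a matched synthetic noise and the realized oracle noise. Conditioning on $(\bm{X}^{obs},\bm{y}^{obs})$, the realized noise $\bm{\epsilon}^{rel}$ has each coordinate distributed as a logistic variable truncated to the half-line $\{\epsilon_i>-X_{i,\tau_0}^{obs\top}\bm{\beta}_{0,\tau_0}\}$ or its complement, according to $y_i^{obs}$. But $A_j$ is precisely the event that the independent draw $\bm{\epsilon}^{*(j)}$ lands in these same half-lines, so $\bm{\epsilon}^{*(j)}$ conditioned on $A_j$ has exactly the same posterior law as $\bm{\epsilon}^{rel}$ given the data. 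On $A_j$ the data is generatively consistent with $\bm{\epsilon}^{*(j)}$, so $\hat\tau(\bm{\epsilon}^{*(j)})$ is literally the oracle estimator of Lemma~\ref{lem_oracle} evaluated at a noise vector having the oracle conditional distribution. Taking $J^{\ast}$ to be the first matched index and using that the $\bm{\epsilon}^{*(j)}$ are conditionally i.i.d.\ given the data, I would show
\[\Prob(\tau_0\notin\mathcal{C},\,N\ge1)\le\Prob\big(\hat\tau(\bm{\epsilon}^{*(J^{\ast})})\ne\tau_0,\,N\ge1\big)\le\Prob\big(\hat\tau(\bm{\epsilon})\ne\tau_0\big),\]
which by Lemma~\ref{lem_oracle} is at most the first term of the bound. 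Crucially, because we only ever need \emph{one} good matched noise, this coupling lets us inherit the oracle probability without paying a union-bound factor of $d$.

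For the first term I would compute the chance that a single synthetic draw matches. Since the coordinates of $\bm{\epsilon}^{*(j)}$ are independent, $\Prob(A_j\mid\bm{X}^{obs},\bm{y}^{obs})$ factorizes over $i$, and using the probability-integral transform $F_\epsilon(\epsilon)\sim\mathrm{Unif}[0,1]$ the per-coordinate matching probability is controlled by $\E|F_\epsilon(\epsilon)-F_\epsilon(-X_{\tau_0}^\top\bm{\beta}_{0,\tau_0})|$. As the $d$ draws are mutually independent, the probability that none matches is at most $(1-\{\E|F_\epsilon(\epsilon)-F_\epsilon(-X_{\tau_0}^\top\bm{\beta}_{0,\tau_0})|\}^n)^d$, which is exactly the second term; adding the two bounds yields the first displayed inequality of the theorem.

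Finally, under Assumption~\ref{ass_signal} Lemma~\ref{lem_oracle} upgrades the oracle term to $2^{-cn\tilde c_{\min}}\wedge2^{-cnc_{\min}}$, so for any fixed $n$ it suffices to take $d$ large enough that the no-match term is of the same order; this is possible because the per-draw match probability is strictly positive for fixed $n$, giving the final claim. I expect the matching-probability integral to be routine, while the delicate part is making the coupling rigorous — in particular verifying that conditioning on $A_j$ does not distort the noise law relative to the oracle, and that restricting to the first matched index preserves the conditional independence needed to keep the oracle bound free of any $d$-dependent inflation.
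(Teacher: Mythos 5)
Your proposal is correct, and it reaches the first term of the bound by a genuinely different mechanism than the paper. The paper works with a \emph{smaller} matching event: it requires each $\epsilon_i^{*}$ to lie \emph{between} the realized noise $\epsilon_i^{rel}$ and the threshold $-X_{i,\tau_0}^{obs\top}\bm{\beta}_{0,\tau_0}$, and then exploits a pathwise monotonicity -- for such an $\bm{\epsilon}^*$ every classifier $\1\{X_{\tau}^\top\bm{\beta}_\tau+\sigma\,(\cdot)>0\}$ makes at least as many errors on $(\bm{X}^{obs},\bm{y}^{obs})$ as it does with $\bm{\epsilon}^{rel}$, so a wrong model attaining zero empirical risk under a matched $\bm{\epsilon}^*$ forces the oracle event $\inf_{\tau\ne\tau_0}L_n(\cdot\mid\bm{X},\bm{y},\bm{\epsilon})=0$, to which Lemma~\ref{lem_oracle} applies directly. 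You instead use the larger event that $\bm{\epsilon}^{*(j)}$ merely reproduces the labels, and replace pathwise domination by a distributional identity: given $(\bm{X},\bm{y})$ and a match, $\bm{\epsilon}^{*(j)}$ is coordinatewise logistic truncated to the half-lines determined by $\bm{y}$, which is exactly the conditional law of $\bm{\epsilon}^{rel}$ given the data, so conditioning on the first matched index and summing the geometric weights $q(1-q)^{k-1}$ (which total at most $1$) returns $\Prob(\hat\tau(\bm{\epsilon})\ne\tau_0)$ with no factor of $d$ -- this coupling does go through as you describe. Both routes land on the same first term; yours buys a larger per-draw matching probability (per coordinate $F_\epsilon(X_{i,\tau_0}^\top\bm{\beta}_{0,\tau_0})$ or $F_\epsilon(-X_{i,\tau_0}^\top\bm{\beta}_{0,\tau_0})$, which dominates $\abs{F_\epsilon(\epsilon_i)-F_\epsilon(-X_{i,\tau_0}^\top\bm{\beta}_{0,\tau_0})}$ pointwise), hence a no-match term no larger than the one displayed in the theorem, at the price of a coupling argument in place of the paper's one-line loss comparison. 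One caveat you share with the paper rather than diverge from it: the match events for different $j$ are independent only conditionally on the data (they involve the common random $(\bm{X},\bm{y})$ or, in the paper's case, the random set $A$), and passing from $\E\bigl[(1-\prod_i p_i)^d\bigr]$ to $(1-\{\E p\}^n)^d$ runs against Jensen's inequality for the convex map $t\mapsto(1-t)^d$; the paper's displayed equality $T_2=(1-\Prob(\bm{\epsilon}^*\in A))^d$ makes the same silent interchange, so this does not distinguish your argument from the published one.
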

Theorem \ref{thm_candidate} ensures the inclusion of $\tau_0$ in $\mathcal{C}$ as long as Assumption \ref{ass_signal} is satisfied and $d$ is large enough.

Next, we demonstrate that under a stronger signal strength condition, the requirement for the number of repro samples, $d$, can be relaxed.

\begin{Assumption}\label{ass_signal_strong}
For all $\tau$ with $\abs{\tau}\le s,\tau\ne\tau_0$,
\begin{align*}
    \min_{\bm{\beta}_\tau\in\R^{\abs{\tau}},\sigma\ge 0}L_{\bm{\theta}_0}^*(\tau,\bm{\beta}_\tau,\sigma)-L_{\bm{\theta}_0}^*(\tau_0,\bm{\beta}_{0,\tau_0},0)\gtrsim\sqrt{\frac{\abs{\tau}\vee 1}{n}}+\sqrt{\abs{\tau_0\setminus\tau}\wedge(\abs{\tau}\vee1)}\sqrt{\frac{\log p}{n}}.
\end{align*}
\end{Assumption}

Since $\bm{\epsilon}^*$ is independent of $(\bm{X},\bm{y})$, we know that $(\tau,\bm{\beta}_{\tau},\sigma)=(\tau_0,\bm{\beta}_{0,\tau_0},0)$ minimizes the expected prediction error $L_{\bm{\theta}_0}^*(\tau,\bm{\beta}_\tau,\sigma)$. Thus Assumption \ref{ass_signal_strong} assumes that all the wrong models $\tau\ne\tau_0$ with $|\tau|\le s$ have a positive error gap from the true model. Compared to Assumption \ref{ass_signal}, the signal strength in Assumption \ref{ass_signal_strong} scales with $\frac{1}{\sqrt{n}}$ instead of $\frac{1}{n}$ as in Assumption \ref{ass_signal}.

As we will show in the following theorem, if the stronger signal strength Assumption~\ref{ass_signal_strong} holds, then similar to the model selection consistency \citep{zhao2006model,zhang2010nearly,bunea2008honest}, the model candidate set contains the true model support with high probability for any $d \ge 1$. A proof is provided in the Appendix.

\begin{Theorem}\label{thm_candidate_strong_signal}

Denote
\[\tilde c_{\min}^*=\bigg(\inf_{|\tau|\le s,\tau\ne\tau_0,\bm{\beta}_\tau\in\R^{\abs{\tau}},\sigma\ge 0}\dfrac{L_{\bm{\theta}_0}^*(\tau,\bm{\beta}_\tau,\sigma)-L_{\bm{\theta}_0}^*(\tau_0,\bm{\beta}_{0,\tau_0},0)-c\sqrt{\frac{\abs{\tau}+1}{n}}}{\sqrt{\abs{\tau_0\setminus\tau}}}\bigg)^2,\]
\[c_{\min}^*=\bigg(\inf_{|\tau|\le s,\tau\ne\tau_0,\bm{\beta}_\tau\in\R^{\abs{\tau}},\sigma\ge 0}\dfrac{L_{\bm{\theta}_0}^*(\tau,\bm{\beta}_\tau,\sigma)-L_{\bm{\theta}_0}^*(\tau_0,\bm{\beta}_{0,\tau_0},0)-c\sqrt{\frac{\abs{\tau}+1}{n}}}{\sqrt{\abs{\tau}\vee 1}}\bigg)^2.\]
For any $n$ and $d$, the model candidate set satisfies,
\[\Prob(\tau_0\not\in\mathcal{C})\lesssim e^{-\frac{n}{8}\tilde c_{\min}^*+2\log p}\wedge e^{-\frac{n}{8}nc_{\min}^*+\log p}.\]
If Assumption \ref{ass_signal_strong} holds, then
\[\Prob(\tau_0\not\in\mathcal{C})\lesssim e^{-cn\tilde c_{\min}^*}\wedge e^{-cnc_{\min}^*}.\]

\end{Theorem}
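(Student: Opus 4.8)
The plan is to reduce the statement about the entire candidate set $\mathcal{C}$ to a statement about a single repro draw, and then run an agnostic empirical-risk-minimization argument. Since $\{\tau_0\in\mathcal{C}\}\supseteq\{\hat\tau(\bm{\epsilon}^{*(1)})=\tau_0\}$, enlarging the number of Monte Carlo copies only helps, so for every $d\ge 1$ we have $\Prob(\tau_0\notin\mathcal{C})\le\Prob(\hat\tau(\bm{\epsilon}^*)\ne\tau_0)$ for one independent $\bm{\epsilon}^*$; this is exactly why the bound holds for all $d$. The payoff of using an \emph{independent} noise is that each summand of $L_n(\tau,\bm{\beta}_\tau,\sigma\mid\bm{X}^{obs},\bm{y}^{obs},\bm{\epsilon}^*)$ depends on the i.i.d.\ triple $(X_i^{obs},y_i^{obs},\epsilon_i^*)$, so $L_n(\tau,\cdot,\cdot)$ is an average of $n$ i.i.d.\ $\{0,1\}$-valued terms with mean $L_{\bm{\theta}_0}^*(\tau,\cdot,\cdot)$, placing us in the standard i.i.d.\ empirical-process setting rather than the coupled-noise setting of Lemma \ref{lem_oracle}.

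Next I would set up the ERM comparison inequality. Because $(\bm{\beta}_{0,\tau_0},0)$ is feasible for the model $\tau_0$, the event $\{\hat\tau=\tau\}$ for a wrong model $\tau\ne\tau_0$, $\abs{\tau}\le s$, forces $\min_{\bm{\beta}_\tau,\sigma}L_n(\tau,\bm{\beta}_\tau,\sigma)\le L_n(\tau_0,\bm{\beta}_{0,\tau_0},0)$. Writing the population gap $\Delta_\tau:=\min_{\bm{\beta}_\tau,\sigma}L_{\bm{\theta}_0}^*(\tau,\bm{\beta}_\tau,\sigma)-L_{\bm{\theta}_0}^*(\tau_0,\bm{\beta}_{0,\tau_0},0)$ and inserting and removing population quantities gives, on $\{\hat\tau=\tau\}$,
\[\Delta_\tau\;\le\;\sup_{\bm{\beta}_\tau,\sigma}\big|L_n-L_{\bm{\theta}_0}^*\big|(\tau)\;+\;\big[L_n(\tau_0,\bm{\beta}_{0,\tau_0},0)-L_{\bm{\theta}_0}^*(\tau_0,\bm{\beta}_{0,\tau_0},0)\big].\]
Thus $\{\hat\tau=\tau\}$ implies that a uniform fluctuation over the class indexed by $\tau$ plus a single centred average at the truth exceeds the deterministic gap, which Assumption \ref{ass_signal_strong} lower bounds by $c\sqrt{(\abs{\tau}+1)/n}+c\sqrt{\abs{\tau_0\setminus\tau}\wedge(\abs{\tau}\vee1)}\sqrt{\log p/n}$.

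I would control the two stochastic terms separately. The class $\{(x,e)\mapsto\1\{x_\tau^\top\bm{\beta}_\tau+\sigma e>0\}:\bm{\beta}_\tau\in\R^{\abs{\tau}},\sigma\ge0\}$ consists of halfspace indicators in the $(\abs{\tau}+1)$-dimensional variable $(x_\tau,e)$, hence is VC-subgraph of dimension $O(\abs{\tau}+1)$, and the induced $0$--$1$ loss class inherits the same dimension; a Dudley/chaining bound then yields $\E\sup_{\bm{\beta}_\tau,\sigma}\abs{L_n-L_{\bm{\theta}_0}^*}(\tau)\lesssim\sqrt{(\abs{\tau}+1)/n}$ with \emph{no} logarithmic factor, which is precisely the term subtracted inside the definitions of $\tilde c_{\min}^*$ and $c_{\min}^*$. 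Since both the supremum and the truth average change by at most $1/n$ under a single-sample perturbation, bounded differences (McDiarmid) upgrades the mean control to a sub-Gaussian tail, giving for each fixed wrong $\tau$ a bound of the form $\Prob(\hat\tau=\tau)\le\exp\!\big(-cn\,(\Delta_\tau-c'\sqrt{(\abs{\tau}+1)/n})_+^2\big)$.

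Finally I would union bound over wrong models in two complementary ways and keep the smaller result. Grouping models by the number of missed true variables $j=\abs{\tau_0\setminus\tau}\ge1$, there are at most $\binom{s}{j}\binom{p-s}{\le j}\lesssim p^{2j}$ of them and the exponent is at least $cn\tilde c_{\min}^*\,j$, so the geometric series sums to $\exp(-\tfrac{n}{8}\tilde c_{\min}^*+2\log p)$; grouping instead by size $m=\abs{\tau}$ gives at most $p^m$ models with exponent at least $cn c_{\min}^*\,(m\vee1)$, yielding $\exp(-\tfrac{n}{8}c_{\min}^*+\log p)$. Their minimum is the stated unconditional bound, and invoking Assumption \ref{ass_signal_strong} so that the $\min\{\abs{\tau_0\setminus\tau},\abs{\tau}\vee1\}$ normalization forces the relevant $\tilde c_{\min}^*$ or $c_{\min}^*$ to dominate $\log p/n$ absorbs the $2\log p$ and $\log p$ overheads, leaving $\exp(-cn\tilde c_{\min}^*)\wedge\exp(-cn c_{\min}^*)$. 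I expect the main obstacle to be the empirical-process step: obtaining the \emph{sharp} $\sqrt{(\abs{\tau}+1)/n}$ rate for the uniform deviation with no spurious $\log n$ factor (a crude VC bound would fail to match the $\sqrt{(\abs{\tau}+1)/n}$ baked into the assumption and the definitions of $\tilde c_{\min}^*,c_{\min}^*$), together with apportioning the combinatorial $\log p$ cost correctly between false negatives and overall model size so that the two union bounds come out with the exponents $2\log p$ and $\log p$ respectively.
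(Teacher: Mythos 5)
Your proposal is correct and follows essentially the same route as the paper's proof: reduce $\Prob(\tau_0\notin\mathcal{C})$ to the misselection probability of a single independent draw $\bm{\epsilon}^*$, bound the uniform deviation $\sup_{\bm{\beta}_\tau,\sigma}|L_n-L_{\bm{\theta}_0}^*|$ at rate $\sqrt{(|\tau|+1)/n}$ via the VC/Rademacher bound for halfspaces in $(x_\tau,e)$ combined with bounded differences (the paper invokes Theorem 4.10 and Example 5.24 of Wainwright for exactly this), and then take the two complementary union bounds indexed by $|\tau_0\setminus\tau|$ and by $|\tau|$. The step you flagged as the main obstacle, obtaining the sharp log-free $\sqrt{(|\tau|+1)/n}$ rate, is handled in the paper precisely by the Dudley-integral VC bound you describe, so there is no gap.
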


\begin{Remark}

    \begin{itemize}
        \item[(1)] Besides the coverage for $\tau_0$, we can also guarantee the consistency of $\mathcal{C}$. Specifically, under Assumption \ref{ass_signal_strong}, using the same notation as in Theorem \ref{thm_candidate_strong_signal}, if we set $d$ such that $\log d\lesssim \log p$, then with high probability, we have $\mathcal{C}$ contains only $\tau_0$,
        \[\Prob(\mathcal{C}\ne\{\tau_0\})\lesssim e^{-cn\tilde c_{\min}^*}\wedge e^{-cnc_{\min}^*}.\]
        Note that to conduct inference for $\tau_0$ and $\bm{\beta}_0$, it is only necessary that $\tau_0\in\mathcal{C}$, but $\mathcal{C}=\{\tau_0\}$ is not required. Therefore, we can set $d$ as large as necessary.
        \item[(2)] Combining Theorem \ref{thm_candidate} and \ref{thm_candidate_strong_signal}, it becomes evident that the model candidate set $\mathcal{C}$ is adaptive to the signal strength. Under the weak signal strength Assumption \ref{ass_signal}, as we discussed in Remark \ref{rem_cmin} and \ref{rem_betamin}, none of the existing work can be guaranteed to find $\tau_0$, but our approach assures $\tau_0\in\mathcal{C}$ as long as $d$ is large enough. Furthermore, if the stronger signal strength Assumption \ref{ass_signal_strong} is satisfied, then $d$ doesn't need to be large at all, since $\tau_0\in\mathcal{C}$ holds for any $d\ge 1$. Moreover, under Assumption \ref{ass_signal_strong}, if $d$ is not too large such that $\log d\lesssim\log p$, it is ensured that $\mathcal{C}=\{\tau_0\}$.
    \end{itemize}

\end{Remark}

\subsection{Inference for \texorpdfstring{$\tau_0$}{tau\_0}}\label{sec_tau}

If we are interested in the inference for the true model $\tau_0$, then $\bm{\beta}_{0,\tau_0}$ is 
a nuisance parameter. As we discussed in Section \ref{sec_repro} Equation \eqref{eq_nuclear_test}, if the nuclear statistic has the form $T(\bm{X}^{obs},\bm{\epsilon}^*,\bm{\theta})=\tilde T(\bm{X}^{obs},\bm{Y}^*,\bm{\theta})$ where $\bm{Y}^*$ is generated by $\bm{X}^{obs},\bm{\epsilon}^*$ and $\bm{\theta}$, then it suffices to check whether $\tilde T(\bm{X}^{obs},\bm{y}^{obs},\bm{\theta})$ is in $B_{\alpha}(\bm{\theta})$. In order to deal with the nuisance parameter, we consider the following form of confidence set for $\tau_0$,
\begin{align*}
    \Gamma_\alpha^{\tau_0}(\bm{X}^{obs},\bm{y}^{obs})=&\{\tau:\exists\bm{\beta}_\tau\in\R^{|\tau|}{\rm~s.t.~}\tilde T(\bm{X}^{obs},\bm{y}^{obs},(\tau,\bm{\beta}_\tau))\in B_\alpha((\tau,\bm{\beta}_\tau))\},
\end{align*} 
with $B_\alpha(\bm{\theta})$ satisfies
$\Prob(\tilde T(\bm{X},\bm{Y}^*,\bm{\theta})\in B_\alpha(\bm{\theta}))\ge\alpha.$

If $1-\tilde T(\bm{X},\bm{Y}^*,\bm{\theta})$ is a $p$-value, then we can take $B_\alpha(\bm{\theta})=(-\infty,\alpha)$ and rewrite $\Gamma^{\tau_0}_\alpha(\bm{X}^{obs},\bm{y}^{obs})$ as
\begin{equation}\label{eq_model_conf}
    \Gamma^{\tau_0}_\alpha(\bm{X}^{obs},\bm{y}^{obs})=\{\tau:\min_{\bm{\beta}_\tau\in\R^{|\tau|}}\tilde T(\bm{X}^{obs},\bm{y}^{obs},(\tau,\bm{\beta}_\tau))<\alpha\}.
\end{equation}
Here, we refer to
$\min_{\bm{\beta}_\tau\in\R^{|\tau|}}\tilde T(\bm{X}^{obs},\bm{Y}^*,(\tau,\bm{\beta}_\tau))$ as a {\it profile nuclear statistic}.

Specifically, we construct the nuclear statistic $\tilde T$ and the model confidence sets as follows. For any given $\bm{\theta}=(\tau,\bm{\beta}_\tau)$ and $\bm{Y}^*\in\{0,1\}^n$ generated by $Y_i^*=\mathbbm{1}\{X_{i,\tau}^{obs\top}\bm{\beta}_{\tau}+\epsilon_i^*> 0\}$ with $\epsilon_i^*\overset{i.i.d.}{\sim}{\rm Logistic}$, we solve
\begin{equation}\label{eq_synthetic_lasso}
    \tilde{\bm{\beta}}(\lambda)=\argmin_{\bm{\beta}\in\R^p}\frac{1}{n}\sum_{i=1}^n\log(1+e^{-(2Y_i^*-1)X_i^{obs\top}\bm{\beta}})+\lambda\norm{\bm{\beta}}_1,
\end{equation}
\[\tilde\lambda(\tau,\bm{\beta}_\tau)=\argmax_{\lambda\ge0}\|\tilde{\bm{\beta}}(\lambda)\|_0,\quad{\rm s.t. }\norm{\tilde{\bm{\beta}}(\lambda)}_0\le\abs{\tau},\]
\[\tilde\tau(\bm{X}^{obs},\bm{Y}^*,\bm{\theta})=\supp(\tilde{\bm{\beta}}(\tilde\lambda(\bm{\theta}))).\]
The model selector $\tilde\tau(\bm{X}^{obs},\bm{Y}^*,\bm{\theta})$ is the largest model with cardinality at most $|\tau|$ in the solution path of Problem \eqref{eq_synthetic_lasso} using the synthetic data $(\bm{X}^{obs},\bm{Y}^*)$. 
Denote
\[P_{\bm{\theta}}(\tau^*)=\Prob_{\bm{\epsilon}^*|\bm{\theta}}(\tilde \tau(\bm{X}^{obs},\bm{Y}^*,\bm{\theta})=\tau^*),\]
where $\Prob_{\bm{\epsilon}^*|\bm{\theta}}$ counts the randomness of $\bm{Y}^*$ given $\bm{X}^{obs}$. Then we consider the nuclear statistic
\[T(\bm{X}^{obs},\bm{\epsilon},\bm{\theta})=\tilde T(\bm{X}^{obs},\bm{y},\bm{\theta})=\Prob_{\bm{\epsilon^*}|\bm{\theta}}\big(P_{\bm{\theta}}(\tilde\tau(\bm{X}^{obs},\bm{Y^*},\bm{\theta}))>P_{\bm{\theta}}(\tilde\tau(\bm{X}^{obs},\bm{y},\bm{\theta}))\big)\]
which is the probability that $\tilde\tau(\bm{X}^{obs},\bm{y},\bm{\theta})$ appears less often than the synthetic model selector $\tilde\tau(\bm{X}^{obs},\bm{Y}^*,\bm{\theta})$ in $P_{\bm{\theta}}(\cdot)$. Since $\tilde T(\bm{X}^{obs},\bm{y},\bm{\theta})$ is also the survival function of random variable $P_{\bm{\theta}}(\tilde \tau(\bm{X}^{obs},\bm{Y}^*,\bm{\theta}))$ evaluated at $P_{\bm{\theta}}(\tilde\tau(\bm{X}^{obs},\bm{y},\bm{\theta}))$, when $\bm{\theta}=\bm{\theta}_0,\bm{y}=\1(\bm{X}^{obs}\bm{\beta}_0+\bm{\epsilon}>0)$, we know that $1-\tilde T(\bm{X}^{obs},\bm{y},\bm{\theta}_0)$ is a p-value with
\[\Prob_{\bm{\epsilon}}(\tilde T(\bm{X}^{obs},\bm{y},\bm{\theta}_0)<\alpha)\ge\alpha.\]
Here $\Prob_{\bm{\epsilon}}$ counts the randomness of $\bm{y}$ given $\bm{X}^{obs}$. Since $\tau_0$ belongs to $\mathcal{C}$ with high probability as guaranteed by Theorem \ref{thm_candidate} and \ref{thm_candidate_strong_signal}, we constrain the model confidence set to be a subset of $\mathcal{C}$. Then according to Equation \eqref{eq_model_conf}, we define the confidence set for $\tau_0$ as
\begin{align*}
    \Gamma_\alpha^{\tau_0}(\bm{X}^{obs},\bm{y}^{obs})=&\{\tau:\exists\bm{\beta}_\tau\in\R^{|\tau|}{\rm~s.t.~}\tilde T(\bm{X}^{obs},\bm{y}^{obs},(\tau,\bm{\beta}_\tau))<\alpha,\tau\in\mathcal{C}\}\\
    =&\{\tau:\min_{\bm{\beta}_\tau\in\R^{|\tau|}}\tilde T(\bm{X}^{obs},\bm{y}^{obs},(\tau,\bm{\beta}_\tau))<\alpha,\tau\in\mathcal{C}\}.
\end{align*}

Since we don't have an explicit expression for $P_{\bm{\theta}}(\tau)$, we apply the Monte Carlo method to approximate it. More specifically, we generate $\{\bm{\epsilon}^{*(j)}:j\in[m]\}$ with $\epsilon_i^{*(j)}\overset{i.i.d.}{\sim}{\rm Logistic}$ for $i\in[n], j\in[m]$, then generate $\{\bm{Y}^{*(j)}:j\in[m]\}$ by $Y_i^{*(j)}=\mathbbm{1}\{X_{i,\tau}^{obs\top}{\bm{\beta}}_\tau+\epsilon_i^{*(j)}> 0\}$. For each $\bm{Y}^{*(j)}$, we calculate the corresponding $\tilde\tau^{(j)}\overset{\triangle}{=}\tilde\tau(\bm{X}^{obs},\bm{Y}^{*(j)},\bm{\theta})$ and estimate $P_{\bm{\theta}}(\tau^*)$ by $\hat P_{\bm{\theta}}(\tau^*)=\frac{1}{m}\sum_{j=1}^m\mathbbm{1}\{\tilde \tau^{(j)}=\tau^*\}$. Denote the estimated profile nuclear statistic as
\[\hat T(\bm{X}^{obs},\bm{y},\tau)=\min_{{\bm{\beta}}_\tau\in\R^{\abs{\tau}}}\frac{\abs{\{j\in[m]:\hat P_{\tau,{\bm{\beta}}_\tau}(\tilde\tau^{(j)})>\hat P_{\tau,{\bm{\beta}}_\tau}(\tilde\tau(\bm{X}^{obs},\bm{y},\bm{\theta}))\}}}{m},\]
then the final confidence set for $\tau_0$ becomes
\[\hat\Gamma_\alpha^{\tau_0}(\bm{X}^{obs},\bm{y}^{obs})=\{\tau:\hat T(\bm{X}^{obs},\bm{y}^{obs},\tau)<\alpha,\tau\in\mathcal{C}\}.\]
We summarize the procedure in Algorithm \ref{alg_model_confidence}. 

\begin{algorithm}
\caption{Model Confidence Set}\label{alg_model_confidence}
\begin{algorithmic}[1]
\State{\bf Input:} Observed data $(\bm{X}^{obs},\bm{y}^{obs})$, model candidate set $\mathcal{C}$, the number of Monte Carlo samples $m$. 
\State{\bf Output:} Model confidence set $\hat\Gamma_\alpha^{\tau_0}(\bm{X}^{obs},\bm{y}^{obs})$.
\For{$\tau\in\mathcal{C}$}
\State Generate $m$ copies of logistic random noises $\{\bm{\epsilon}^{*(j)}:\epsilon_i^{*(j)}\overset{{\rm i.i.d.}}{\sim}{\rm Logistic},i\in[n],j\in[m]\}$.
\State For some ${\bm{\beta}}_\tau$ to be optimized later, compute $\{{\bm Y}^{*(j)}:j\in[m]\}$ with $Y_i^{*(j)}=\1\{X_{i,\tau}^{obs\top}{\bm{\beta}}_\tau+\epsilon_i^{*(j)}>0\}$.
\State For each ${\bm Y}^{*(j)},j\in[m]$, calculate
\[\tilde{\bm{\beta}}^{(j)}(\lambda)=\argmin_{{\bm{\beta}}\in\R^p}\frac{1}{n}\sum_{i=1}^n\log(1+e^{-(2Y_i^{*(j)}-1)X_i^{obs\top}{\bm{\beta}}})+\lambda\norm{{\bm{\beta}}}_1,\]
\[\tilde\tau^{(j)}=\supp(\tilde{\bm{\beta}}^{(j)}(\tilde\lambda^{(j)}(\tau,\bm{\beta}_\tau))),\qquad\tilde\lambda^{(j)}(\tau,\bm{\beta}_\tau)=\argmax_{\lambda\ge0}\norm{\tilde{\bm{\beta}}^{(j)}(\lambda)}_0\quad{\rm s.t. }\norm{\tilde{\bm{\beta}}^{(j)}(\lambda)}_0\le\abs{\tau},\]
and
\[\tilde{\bm{\beta}}(\lambda)=\argmin_{{\bm{\beta}}\in\R^p}\frac{1}{n}\sum_{i=1}^n\log(1+e^{-(2y_i^{obs}-1)X_i^{obs\top}{\bm{\beta}}})+\lambda\norm{{\bm{\beta}}}_1,\]
\[\tilde\tau(\bm{X}^{obs},\bm{y}^{obs},(\tau,\bm{\beta}_\tau))=\supp(\tilde{\bm{\beta}}(\tilde\lambda(\tau,\bm{\beta}_\tau))),\qquad\tilde\lambda(\tau,\bm{\beta}_\tau)=\argmax_{\lambda\ge0}\norm{\tilde{\bm{\beta}}(\lambda)}_0\quad{\rm s.t. }\norm{\tilde{\bm{\beta}}(\lambda)}_0\le\abs{\tau}.\]

\State Calculate
\[\hat T(\bm{X}^{obs},\bm{y}^{obs},\tau)=\min_{{\bm{\beta}}_\tau\in\R^{\abs{\tau}}}\frac{\abs{\{j\in[m]:\hat P_{\tau,{\bm{\beta}}_\tau}(\tilde\tau^{(j)})>\hat P_{\tau,{\bm{\beta}}_\tau}(\tilde\tau(\bm{X}^{obs},\bm{y}^{obs},(\tau,\bm{\beta}_\tau))\}}}{m},\]
with $\hat P_{\tau,{\bm{\beta}}_\tau}(\tau^*)=\frac{1}{m}\sum_{j=1}^m\mathbbm{1}\{\tilde \tau^{(j)}=\tau^*\}$.
\EndFor
\State Construct the model confidence set as
\[\hat\Gamma_\alpha^{\tau_0}(\bm{X}^{obs},\bm{y}^{obs})=\{\tau:\hat T(\bm{X}^{obs},\bm{y}^{obs},\tau)<\alpha,\tau\in\mathcal{C}\}.\]
\end{algorithmic}
\end{algorithm}
Now we formalize the intuition stated above as the following theorem, which guarantees the validity of $\hat\Gamma_\alpha^{\tau_0}(\bm{y}^{obs})$. A proof is given in the Appendix.

\begin{Theorem}\label{thm_tau}

\begin{itemize}
    \item[(1)] If Assumption \ref{ass_signal} holds, $d$ is large enough as required in Theorem \ref{thm_candidate} and $n$ is any fixed number, for $c_{\min},\tilde c_{\min}$ defined in Theorem \ref{thm_candidate}, we have
    \[\Prob(\tau_0\in\hat\Gamma_\alpha^{\tau_0}(\bm{X},\bm{y}))\ge\alpha-\sqrt{\frac{(\frac{ep}{s})^s}{4m}}-\sqrt{\frac{\pi}{8m}}-ce^{-cnc_{\min}}\wedge ce^{-cn\tilde c_{\min}}.\]
    \item[(2)] If Assumption \ref{ass_signal_strong} holds, $n$ and $d$ are any fixed numbers, for $c_{\min}^*,\tilde c_{\min}^*$ defined in Theorem \ref{thm_candidate_strong_signal}, we have
    \[\Prob(\tau_0\in\hat\Gamma_\alpha^{\tau_0}(\bm{X},\bm{y}))\ge\alpha-\sqrt{\frac{(\frac{ep}{s})^s}{4m}}-\sqrt{\frac{\pi}{8m}}-ce^{-cnc_{\min}^*}\wedge ce^{-cn\tilde c_{\min}^*}.\]
\end{itemize}

\end{Theorem}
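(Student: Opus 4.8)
The plan is to decompose the miscoverage event and control each piece separately. Writing $\bm{\theta}_0=(\tau_0,\bm{\beta}_{0,\tau_0})$, note that
\[
\{\tau_0\notin\hat\Gamma_\alpha^{\tau_0}(\bm{X},\bm{y})\}\subseteq\{\tau_0\notin\mathcal{C}\}\cup\{\hat T(\bm{X},\bm{y},\tau_0)\ge\alpha\},
\]
so $\Prob(\tau_0\notin\hat\Gamma_\alpha^{\tau_0})\le\Prob(\tau_0\notin\mathcal{C})+\Prob(\hat T(\bm{X},\bm{y},\tau_0)\ge\alpha)$. The first term is precisely the candidate-set miscoverage, bounded by Theorem \ref{thm_candidate} under Assumption \ref{ass_signal} for part (1) and by Theorem \ref{thm_candidate_strong_signal} under Assumption \ref{ass_signal_strong} for part (2); these supply the $ce^{-cnc_{\min}}\wedge ce^{-cn\tilde c_{\min}}$ and $ce^{-cnc_{\min}^*}\wedge ce^{-cn\tilde c_{\min}^*}$ contributions. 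The second term is identical in both parts, so it suffices to show it is at most $(1-\alpha)+\sqrt{(ep/s)^s/(4m)}+\sqrt{\pi/(8m)}$. First I would profile the nuisance parameter at its truth: since $\hat T(\bm{X},\bm{y},\tau_0)=\min_{\bm{\beta}_{\tau_0}}\hat T(\bm{X},\bm{y},(\tau_0,\bm{\beta}_{\tau_0}))\le\hat T(\bm{X},\bm{y},\bm{\theta}_0)=:\hat T_{\bm{\theta}_0}$, we have $\{\hat T(\bm{X},\bm{y},\tau_0)\ge\alpha\}\subseteq\{\hat T_{\bm{\theta}_0}\ge\alpha\}$, reducing the task to bounding $\Prob(\hat T_{\bm{\theta}_0}\ge\alpha)$.

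The starting point is that the \emph{exact} profile statistic is a valid $p$-value: as established in the construction, $1-\tilde T(\bm{X}^{obs},\bm{y},\bm{\theta}_0)$ is a $p$-value, so the super-uniformity $\Prob_{\bm{\epsilon}}(\tilde T_{\bm{\theta}_0}<t)\ge t$ holds at every level $t$, and in particular $\Prob_{\bm{\epsilon}}(\tilde T_{\bm{\theta}_0}\ge\alpha)\le 1-\alpha$. The whole difficulty is to transfer this guarantee to the Monte Carlo surrogate $\hat T_{\bm{\theta}_0}$. Recalling that $\tilde T_{\bm{\theta}_0}$ is the survival function of $P_{\bm{\theta}_0}(\tilde\tau^*)$ evaluated at the threshold $P_{\bm{\theta}_0}(\tilde\tau(\bm{y}))$, whereas $\hat T_{\bm{\theta}_0}$ replaces both the probabilities $P_{\bm{\theta}_0}(\cdot)$ by their empirical versions $\hat P_{\bm{\theta}_0}(\cdot)$ and the true survival function by the empirical survival function over the $m$ draws, I would establish the pointwise domination $\hat T_{\bm{\theta}_0}\le\tilde T_{\bm{\theta}_0}+\Delta_1+\Delta_2$, where $\Delta_1$ captures the use of $\hat P_{\bm{\theta}_0}$ in place of $P_{\bm{\theta}_0}$ in ordering the candidate supports, and $\Delta_2$ captures the empirical-versus-true survival function evaluated at the random threshold.

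For $\Delta_1$, the effect of reordering by $\hat P_{\bm{\theta}_0}$ rather than $P_{\bm{\theta}_0}$ is controlled by the total variation distance $\|\hat P_{\bm{\theta}_0}-P_{\bm{\theta}_0}\|_{\rm TV}$ between these two distributions over the at most $(ep/s)^s$ supports of size $\le s$; a standard multinomial Cauchy--Schwarz calculation gives $\E\,\Delta_1\le\E\|\hat P_{\bm{\theta}_0}-P_{\bm{\theta}_0}\|_{\rm TV}\le\frac12\sqrt{(ep/s)^s/m}=\sqrt{(ep/s)^s/(4m)}$, the first error term. For $\Delta_2$, because the threshold $P_{\bm{\theta}_0}(\tilde\tau(\bm{y}))$ is random, I must bound the empirical survival deviation uniformly in the threshold; the one-sided Dvoretzky--Kiefer--Wolfowitz inequality together with the integration $\E\big[\sup_t(\hat G_m(t)-G(t))^+\big]\le\int_0^\infty e^{-2m\epsilon^2}\,d\epsilon=\sqrt{\pi/(8m)}$ yields the second error term. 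Crucially, both $\Delta_1$ and $\Delta_2$ are functions of the Monte Carlo noises alone, hence, conditionally on $\bm{X}^{obs}$, they are independent of $\bm{y}$ and therefore of $\tilde T_{\bm{\theta}_0}$. Combining the pointwise domination with super-uniformity at the shifted level gives, after conditioning on the noises, $\Prob(\hat T_{\bm{\theta}_0}\ge\alpha\mid\Delta_1,\Delta_2,\bm{X})\le\Prob_{\bm{\epsilon}}(\tilde T_{\bm{\theta}_0}\ge\alpha-\Delta_1-\Delta_2)\le 1-\alpha+\Delta_1+\Delta_2$, and taking expectations produces $\Prob(\hat T_{\bm{\theta}_0}\ge\alpha)\le 1-\alpha+\E\Delta_1+\E\Delta_2$, as required.

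The main obstacle, I expect, is establishing the pointwise bound $\hat T_{\bm{\theta}_0}\le\tilde T_{\bm{\theta}_0}+\Delta_1+\Delta_2$ cleanly: the estimated statistic perturbs the survival function both through the ordering of supports (the $\hat P$-versus-$P$ error) and through the empirical measure, and because the evaluation point $P_{\bm{\theta}_0}(\tilde\tau(\bm{y}))$ is itself a random model-dependent threshold, the survival-function error genuinely requires a supremum over thresholds, forcing the use of DKW rather than a pointwise Hoeffding bound. Once these two expectation bounds and the conditional-independence structure are in place, the remaining bookkeeping of the three sources of randomness — the data $(\bm{X},\bm{\epsilon})$, the candidate-set noises, and the $m$ Monte Carlo noises — is routine, and the two parts differ only through whether Theorem \ref{thm_candidate} or Theorem \ref{thm_candidate_strong_signal} is invoked for the candidate-set term.
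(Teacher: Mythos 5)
Your outer decomposition (candidate-set miscoverage plus $\Prob(\hat T(\bm{X},\bm{y},\tau_0)\ge\alpha)$, with the profiling step $\hat T(\bm{X},\bm{y},\tau_0)\le\hat T(\bm{X},\bm{y},\bm{\theta}_0)$) matches the paper, and your two numerical error terms coincide with the paper's. However, the central step --- the pointwise domination $\hat T_{\bm{\theta}_0}\le\tilde T_{\bm{\theta}_0}+\Delta_1+\Delta_2$ with $\E\Delta_1\le\E\,{\rm TV}(P_{\bm{\theta}_0},\hat P_{\bm{\theta}_0})$ --- has a genuine gap. Your $\Delta_1$ is the measure of the symmetric difference of the two events $\{\tau^*:P_{\bm{\theta}_0}(\tau^*)>P_{\bm{\theta}_0}(\tilde\tau(\bm{y}))\}$ and $\{\tau^*:\hat P_{\bm{\theta}_0}(\tau^*)>\hat P_{\bm{\theta}_0}(\tilde\tau(\bm{y}))\}$, i.e.\ the mass of the supports whose ranking relative to the observed selector flips when $P_{\bm{\theta}_0}$ is replaced by $\hat P_{\bm{\theta}_0}$. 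This is \emph{not} controlled by ${\rm TV}(P_{\bm{\theta}_0},\hat P_{\bm{\theta}_0})$: if many atoms of $P_{\bm{\theta}_0}$ are tied with, or within $O(1/\sqrt{m})$ of, $P_{\bm{\theta}_0}(\tilde\tau(\bm{y}))$ (e.g.\ $P_{\bm{\theta}_0}$ nearly uniform over $k$ supports), then $\tilde T_{\bm{\theta}_0}=0$ by the strictness of the inequality while $\hat T_{\bm{\theta}_0}$ is close to $1/2$, even though the total variation distance is only $O(\sqrt{k/m})$. So the claimed pointwise domination fails, and the transfer of super-uniformity from $\tilde T_{\bm{\theta}_0}$ to $\hat T_{\bm{\theta}_0}$ does not go through as written.

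The paper never compares $\hat T$ to $\tilde T$. Instead, it observes that, conditionally on the $m$ Monte Carlo draws, the map $\tau\mapsto\hat\Prob_{\tau^*\sim\hat P}\big(\hat P(\tau^*)>\hat P(\tau)\big)$ is \emph{exactly} super-uniform when $\tau$ is itself drawn from the empirical law $\hat P$ (the discrete-CDF property applied to $\hat P$), so the event $\{\hat T\ge\alpha\}$ has $\hat P$-probability at most $1-\alpha$. Because this is a fixed event once the Monte Carlo samples are realized, its probability under the true law $P$ of $\tilde\tau(\bm{X}^{obs},\bm{y},\bm{\theta}_0)$ exceeds its $\hat P$-probability by at most ${\rm TV}(P,\hat P)$ --- a change of \emph{measure} on a fixed event, which is legitimate, unlike the change of \emph{event} at a fixed measure hidden in your $\Delta_1$. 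Both correction terms then arise from the single bound $\E\,{\rm TV}(P,\hat P)\le\sqrt{(ep/s)^s/(4m)}+\sqrt{\pi/(8m)}$, obtained by splitting the expectation at the threshold $\sqrt{(ep/s)^s/(4m)}$ and integrating the tail inequality of Lemma \ref{lem_TV_tail}; neither term comes from a DKW bound, and no separate ordering-error term is needed.
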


\begin{Remark}[Practical implementation of Algorithm \ref{alg_model_confidence}]
    Step 4 in Algorithm \ref{alg_model_confidence} involves the optimization for indicator functions which could be computationally challenging. This optimization with respect to ${\bm{\beta}}_\tau$ ensures that under the true model $\tau_0$, the statistic $\hat T(\bm{X}^{obs},\bm{y}^{obs},\tau_0)$ is more conservative than $\frac{\abs{\{j\in[m]:\hat P_{\tau_0,{\bm{\beta}}_{0,\tau_0}}(\tilde\tau^{(j)})>\hat P_{\tau_0,{\bm{\beta}}_{0,\tau_0}}(\tilde\tau(\bm{X}^{obs},\bm{y}^{obs},(\tau_0,\bm{\beta}_{0,\tau_0})))\}}}{m}$ which is the oracle statistic when using ${\bm{\beta}}_{0,\tau_0}$ to generate $\bm{Y}^*$. In practice, for any $\tau\in\mathcal{C}$, MLE of ${\bm{\beta}}_{\tau}$ can also be employed to generate $\bm{Y}^{*(j)}$ since it is a consistent estimator in the low dimensional setting given $\tau_0$. And our numerical results confirm that MLE indeed yields confidence sets with guaranteed coverages and reasonable sizes.
\end{Remark}

\subsection{Inference for \texorpdfstring{$A\bm{\beta}_0$}{Abeta\_0}}\label{sec_Abeta}


In this section, we construct confidence sets for linear combinations of coefficients $A{\bm{\beta}}_0$ for any $A\in\R^{q\times p}$, $q \geq 1$. Without loss of generality, we assume ${\rm rank}(A)=q\le p$. 
Here, our target is $A{\bm{\beta}}_0$, and we treat $\tau_0$ and the remaining $p-q$ parameters as the nuisance parameters. 
In the following, we first provide a brief overview of the intuition for inferring $A\bm{\beta}_0$. Then, we elaborate on this intuition with more details.

Recall that $A_{\cdot\tau}$ is a submatrix of $A$ consisting of all the columns with column indexes in $\tau$, so we have $A\bm{\beta}_0=A_{\cdot\tau_0}\bm{\beta}_{0,\tau_0}$. 
Then we can quantify the uncertainty of estimating $A\bm{\beta}_0$ by considering two components: the uncertainty of estimating the model parameters $A_{\cdot\tau_0}{\bm{\beta}}_{0,\tau_0}$ given the true nuisance parameters and the impact of not knowing the nuisance parameters. 
At first, when the true model $\tau_0$ is known, we consider the low-dimensional data $\{(X_{i,\tau_0}^{obs},y^{obs}_i):i\in[n]\}$ with covariates $\bm{X}^{obs}_{\cdot\tau_0}$ constrained on $\tau_0$ and construct a confidence set for $A_{\cdot\tau_0}\bm{\beta}_{0,\tau_0}$ by employing likelihood ratio tests on $A_{\cdot\tau_0}\bm{\beta}_{0,\tau_0}$. 
To address the impact of unknown nuisance parameters, we consider each $\hat\tau\in\mathcal{C}$ as a possible true model and apply a likelihood ratio test using data $\{(X_{i,\hat\tau}^{obs},y_i^{obs}):i\in[n]\}$, resulting in a set for $A_{\cdot\hat\tau}{\bm{\beta}}_{0,\hat\tau}$, which we refer to as {\it representative set}. If $\hat\tau = \tau_0$,  this resulting set  is a level-$\alpha$ confidence set for $A_{\cdot\tau_0}{\bm{\beta}}_{0}$. However, when $\hat\tau \not = \tau_0$, the confidence statement for the resulting set does not hold, thus we refer it here as a representative set. 
By combining these representative sets, we obtain a valid confidence set for $A{\bm{\beta}}_0$. 
Following the intuition described above, we elaborate on this intuition with more details as follows.

Let us first consider the case where $\tau_0$ is known and derive the test statistic for $A{\bm{\beta}}_0$ by considering the working hypothesis test $H_0: A_{\cdot\tau_0}{\bm{\beta}}_{0,\tau_0}=t,{\bm{\beta}}_{0,\tau_0^c}=\bm{0}$ versus $H_1: A_{\cdot\tau_0}{\bm{\beta}}_{0,\tau_0}\ne t, {\bm{\beta}}_{0,\tau_0^c}=\bm{0}$ for any $t\in\R^q$.
Without loss of generality, we assume $A{\bm{\beta}}_0=t$ and $\tau_0$ are compatible, i.e., $\{b\in\R^s:A_{\cdot\tau_0}b=t\}\ne\emptyset$, otherwise we reject $A{\bm{\beta}}_0=t$. Then we denote ${\rm rank}(A_{\cdot\tau})=r(\tau)$, so $r(\tau)\le q\wedge|\tau|$. We also denote the MLE of ${\bm{\beta}}_{0,\tau_0}$ under $H_0$, $H_0\cup H_1$ given $(\bm{X}^{obs},\bm{y}^{obs})$ to be 
$\hat b_{0}$ and $\hat b_{1}$, respectively. Then the likelihood ratio test statistic is defined as
\[\tilde T(\bm{X}^{obs},\bm{y}^{obs},(\tau_0,t))=-2\big(l(\tau_0,\hat b_{0}\mid\bm{X}^{obs},\bm{y}^{obs})-l(\tau_0,\hat b_{1}\mid\bm{X}^{obs},\bm{y}^{obs})\big).\]
Due to the chi-squared approximation of the likelihood ratio test statistic in moderate dimension, if we let $B_\alpha(\tau_0)=[0,F_{\chi^2_{r(\tau_0)}}^{-1}(\alpha))$ with $F^{-1}_{\chi^2_r}(\alpha)$ to be the $\alpha$-quantile of $\chi^2_r$, then
\[\Prob\big(\tilde T(\bm{X},\bm{y},(\tau_0,A\bm{\beta}_0))\in B_\alpha(\tau_0)\big)\rightarrow\alpha.\]
Therefore we collect all the values of $A{\bm{\beta}}_0$ that are accepted by the test to get a confidence set for $A{\bm{\beta}}_0$. 

Secondly, to deal with the impact of unknown $\tau_0$, we apply the previous procedure to each candidate model pretending it is the true model, then we combine all the sets together to get a level-$\alpha$ confidence set of $A{\bm{\beta}}_0$:
\begin{align*}
    \Gamma_\alpha^{A{\bm{\beta}}_0}(\bm{X}^{obs},\bm{y}^{obs})=&\{t:\tilde T(\bm{X}^{obs},\bm{y}^{obs},(\tau,t))<F^{-1}_{\chi^2_{r(\tau)}}(\alpha),\tau\in\mathcal{C}\}.
\end{align*}
We summarize the above procedure in Algorithm \ref{alg_coefficient_confidence}. 

\begin{algorithm}
\caption{Confidence set for $A{\bm{\beta}}_0$}\label{alg_coefficient_confidence}
\begin{algorithmic}[1]
\State{\bf Input:} Observed data $(\bm{X}^{obs},\bm{y}^{obs})$, model candidate set $\mathcal{C}$. 
\State{\bf Output:} Confidence set $\Gamma_{\alpha}^{A\bm{\beta}_0}(\bm{X}^{obs},\bm{y}^{obs})$ for $A{\bm{\beta}}_0$.
\For{$\tau\in\mathcal{C}$}
\State Fix any null parameter space $\Theta_0=\{{\bm{\beta}}\in\R^p:A{\bm{\beta}}_0=t,{\bm{\beta}}_{\tau^c}=\bm{0}\}$ and the full parameter space $\Theta=\{b\in\R^s:{\bm{\beta}}_{\tau^c}=\bm{0}\}$, calculate the MLE under $\Theta_0$ and $\Theta$ as $\hat {\bm{\beta}}^{mle_0}$ and $\hat{\bm{\beta}}^{mle}$, respectively.
\State Calculate
\[\tilde T(\bm{X}^{obs},\bm{y}^{obs},(\tau, t))=-2(l(\tau,\hat {\bm{\beta}}_\tau^{mle_0}\mid\bm{X}^{obs},\bm{y}^{obs})-l(\tau,\hat{\bm{\beta}}_\tau^{mle}\mid\bm{X}^{obs},\bm{y}^{obs})).\]
\EndFor
\State For $r={\rm rank}(A_{\cdot\tau})$, construct
\[\Gamma_{\alpha}^{A{\bm{\beta}}_0}(\bm{X}^{obs},\bm{y}^{obs})=\{t:\tilde T(\bm{X}^{obs},\bm{y}^{obs},(\tau,t))\le F_{\chi_r^2}^{-1}(\alpha),\tau\in\mathcal{C}\}.\]
\end{algorithmic}
\end{algorithm}
It is worth noting that once we get the confidence set $\Gamma^{A\bm{\beta}_0}_\alpha(\bm{X}^{obs},\bm{y}^{obs})$ for $A\bm{\beta}_0$, it is straightforward to transfer $\Gamma^{A\bm{\beta}_0}_\alpha(\bm{X}^{obs},\bm{y}^{obs})$ into the confidence set $\Gamma^{h(A\bm{\beta}_0)}_\alpha(\bm{X}^{obs},\bm{y}^{obs})$ for a nonlinear transformation $h$ of $A\bm{\beta}_0$, by applying $h$ to each element in $\Gamma^{A\bm{\beta}_0}_\alpha(\bm{X}^{obs},\bm{y}^{obs})$,
\[\Gamma^{h(A\bm{\beta}_0)}_\alpha(\bm{X}^{obs},\bm{y}^{obs})=\{h(t):t\in\Gamma_\alpha^{A\bm{\beta}_0}(\bm{X}^{obs},\bm{y}^{obs})\}.\]

In the following, we provide the theoretical guarantee of Algorithm~\ref{alg_coefficient_confidence} to show the valid coverage of $\Gamma_{\alpha}^{A{\bm{\beta}}_0}(\bm{X}^{obs},\bm{y}^{obs})$ and $\Gamma_{\alpha}^{h(A{\bm{\beta}}_0)}(\bm{X}^{obs},\bm{y}^{obs})$. We first introduce an assumption.

\begin{Assumption}\label{ass_hessian}
    Denote $H=n\E\frac{e^{YX^\top{\bm{\beta}}_0}}{(1+e^{YX^\top{\bm{\beta}}_0})^2}X_{\tau_0}X_{\tau_0}^\top$ to be the negative expected Hessian of likelihood $l(\tau_0,\cdot|\bm{X}^{obs},\bm{y}^{obs})$, we assume
    \[\lambda_{\min}(H)\asymp\lambda_{\max}(H)\asymp n.\]
\end{Assumption}

Assumption \ref{ass_hessian} is on the Hessian matrix under the true model $\tau_0$, rather than the Hessian matrix with respect to the full coefficient vector $\bm{\beta}_0$. Therefore, it is weaker than other commonly imposed conditions on the Hessian matrix \citep{cai2021statistical,van2014asymptotically,fei2021estimation}.

Theorem~\ref{thm_Abeta} below states that $\Gamma_\alpha^{A{\bm{\beta}}_0}(\bm{X}^{obs},\bm{y}^{obs})$ and $\Gamma_{\alpha}^{h(A{\bm{\beta}}_0)}(\bm{X}^{obs},\bm{y}^{obs})$ are level-$\alpha$ confidence sets of $A{\bm{\beta}}_0$ and $h(A\bm{\beta}_0)$, respectively. A proof can be found in the Appendix.

\begin{Theorem}\label{thm_Abeta}
If Assumption \ref{ass_hessian} holds, $\|X_{\tau_0}\|_{\psi_2}\le\xi$, $n\gtrsim (s\log n)^{\frac{3}{2}}$ and $n\gg\frac{s^3}{\sqrt{r(\tau_0)}}$, when one of the following conditions holds
\begin{itemize}
    \item[(1)] $d\rightarrow \infty$ at first, then $n\rightarrow\infty$, and $n, p, s$ satisfy Assumption \ref{ass_signal}, 
    \item[(2)]  fix any $d$, $n\rightarrow \infty$, and $n, p, s$ satisfy Assumption \ref{ass_signal_strong},
\end{itemize}
then the confidence sets $\Gamma_\alpha^{A{\bm{\beta}}_0}(\bm{X}^{obs},\bm{y}^{obs})$ and $\Gamma_{\alpha}^{h(A{\bm{\beta}}_0)}(\bm{X}^{obs},\bm{y}^{obs})$ are asymptotically valid
\[\Prob(A{\bm{\beta}}_0\in\Gamma_{\alpha}^{A{\bm{\beta}}_0}(\bm{X},\bm{y}))\ge\alpha-o(1),\quad \Prob(h(A{\bm{\beta}}_0)\in\Gamma_{\alpha}^{h(A{\bm{\beta}}_0)}(\bm{X},\bm{y}))\ge\alpha-o(1).\]
\end{Theorem}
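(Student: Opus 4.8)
The plan is to decompose the coverage probability into two independent-to-control pieces: the event that the true model lies in the candidate set, $\{\tau_0\in\mathcal{C}\}$, and the event that the likelihood ratio test at the single model $\tau_0$ accepts the true value $A\bm{\beta}_0=A_{\cdot\tau_0}\bm{\beta}_{0,\tau_0}$. Since $\Gamma_\alpha^{A{\bm{\beta}}_0}$ is a union over $\tau\in\mathcal{C}$ of accepted values, membership of the true value is guaranteed as soon as the test accepts it for the particular choice $\tau=\tau_0$, provided $\tau_0\in\mathcal{C}$. Thus
\[\Prob\big(A{\bm{\beta}}_0\in\Gamma_\alpha^{A{\bm{\beta}}_0}(\bm{X},\bm{y})\big)\ge\Prob\big(\tilde T(\bm{X},\bm{y},(\tau_0,A{\bm{\beta}}_0))<F^{-1}_{\chi^2_{r(\tau_0)}}(\alpha)\big)-\Prob(\tau_0\notin\mathcal{C}).\]
Under condition (1), Assumption \ref{ass_signal} with $d\to\infty$ then $n\to\infty$ makes $\Prob(\tau_0\notin\mathcal{C})=o(1)$ by Theorem \ref{thm_candidate}; under condition (2), Assumption \ref{ass_signal_strong} with fixed $d$ gives the same conclusion by Theorem \ref{thm_candidate_strong_signal}. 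It therefore remains to prove the moderate-dimensional Wilks-type limit $\Prob(\tilde T(\bm{X},\bm{y},(\tau_0,A{\bm{\beta}}_0))<F^{-1}_{\chi^2_{r(\tau_0)}}(\alpha))\to\alpha$.

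To establish this limit I would work conditionally on $\bm{X}^{obs}$ and Taylor-expand the restricted log-likelihood $l(\tau_0,\cdot\mid\bm{X}^{obs},\bm{y}^{obs})$ around $\bm{\beta}_{0,\tau_0}$. Assumption \ref{ass_hessian} ensures $H$ is well conditioned with $\lambda_{\min}(H)\asymp\lambda_{\max}(H)\asymp n$, so both the unconstrained MLE $\hat b_1$ and the constrained MLE $\hat b_0$ (subject to $A_{\cdot\tau_0}b=A_{\cdot\tau_0}\bm{\beta}_{0,\tau_0}$) lie in an $O(\sqrt{s/n})$ neighborhood of the truth, and the quadratic approximation
\[\tilde T\approx\big(A_{\cdot\tau_0}(\hat b_1-\bm{\beta}_{0,\tau_0})\big)^\top\big(A_{\cdot\tau_0}H^{-1}A_{\cdot\tau_0}^\top\big)^{-1}\big(A_{\cdot\tau_0}(\hat b_1-\bm{\beta}_{0,\tau_0})\big)\]
holds, with $\hat b_1-\bm{\beta}_{0,\tau_0}\approx H^{-1}\nabla l(\tau_0,\bm{\beta}_{0,\tau_0})$. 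Writing the whitened vector $W=(A_{\cdot\tau_0}H^{-1}A_{\cdot\tau_0}^\top)^{-1/2}A_{\cdot\tau_0}H^{-1}\nabla l$, the information identity gives $W$ mean zero with identity covariance in $\R^{r(\tau_0)}$ and $\tilde T\approx\|W\|_2^2$. I would then invoke a high-dimensional CLT for the quadratic form: the score $\nabla l$ is a sum of $n$ independent bounded-weight, sub-Gaussian contributions (using $\|X_{\tau_0}\|_{\psi_2}\le\xi$ and the bounded logistic weights in $H$), so $W$ is asymptotically Gaussian and $(\|W\|_2^2-r(\tau_0))/\sqrt{2r(\tau_0)}\Rightarrow\mathcal{N}(0,1)$, matching $\chi^2_{r(\tau_0)}$ and yielding the desired calibration. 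The scaling conditions $n\gtrsim(s\log n)^{3/2}$ and $n\gg s^3/\sqrt{r(\tau_0)}$ are precisely what makes the neglected cubic Taylor remainder negligible relative to the $\sqrt{r(\tau_0)}$-scale fluctuations of the chi-squared limit.

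The statement for $h(A{\bm{\beta}}_0)$ is then immediate from the definition $\Gamma_\alpha^{h(A{\bm{\beta}}_0)}=\{h(t):t\in\Gamma_\alpha^{A{\bm{\beta}}_0}\}$: on the event $A{\bm{\beta}}_0\in\Gamma_\alpha^{A{\bm{\beta}}_0}$ we automatically have $h(A{\bm{\beta}}_0)\in\Gamma_\alpha^{h(A{\bm{\beta}}_0)}$, so the second coverage bound inherits directly from the first with no additional argument.

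The hardest part is the moderate-dimensional Wilks theorem in the middle step: controlling the cubic and higher Taylor remainders uniformly as the dimension $s$ grows, and proving the CLT for the growing-dimensional quadratic form $\|W\|_2^2$ so that the $\chi^2_{r(\tau_0)}$ calibration is asymptotically exact. This demands careful concentration of the empirical Hessian around $H$ and of the score around its Gaussian limit, and is exactly where the sample-size conditions $n\gtrsim(s\log n)^{3/2}$ and $n\gg s^3/\sqrt{r(\tau_0)}$ are consumed.
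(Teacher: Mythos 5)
Your overall architecture matches the paper's proof: the same decomposition $\Prob(A\bm{\beta}_0\in\Gamma_\alpha^{A\bm{\beta}_0})\ge\Prob(\text{LRT accepts at }\tau_0)-\Prob(\tau_0\notin\mathcal{C})$, the same appeal to Theorems \ref{thm_candidate} and \ref{thm_candidate_strong_signal} for the second term, the same Taylor expansion of the restricted likelihood with constrained/unconstrained MLEs in $O(\sqrt{s/n})$ neighborhoods, the same projection representation of the LRT statistic as a whitened quadratic form, and the same observation that the cubic remainder must be $o_P(\sqrt{r(\tau_0)})$ and absorbed by anti-concentration of the $\chi^2$ limit. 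The trivial inheritance of the $h(A\bm{\beta}_0)$ statement is also handled identically.

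The one step where your route diverges and, as stated, would fail is the final calibration. You propose to prove $(\|W\|_2^2-r(\tau_0))/\sqrt{2r(\tau_0)}\Rightarrow\mathcal{N}(0,1)$ and claim this "matches $\chi^2_{r(\tau_0)}$." That normal approximation to the chi-squared quantile is only asymptotically exact when $r(\tau_0)\to\infty$; for fixed or bounded $r(\tau_0)$ — which is the regime of the paper's own corollaries, e.g.\ $r=1$ for a single coefficient — the normalized statistic does not converge to a standard normal and the $\chi^2_{r}$ quantile is not recovered from a Gaussian limit. The theorem makes no assumption that $r(\tau_0)$ grows, so your argument needs to cover both regimes. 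The paper avoids this by invoking a uniform Berry--Esseen--type bound (Corollary 1 of Zhilova, 2022) comparing $\|n^{-1/2}\sum_iZ_i\|_2^2$ directly to $\|G\|_2^2\sim\chi^2_r$ with explicit rate $r/\sqrt{n}$, valid for all $r$ simultaneously, and then uses the anti-concentration lemma of Shi et al.\ at scale $\sqrt{r}$ to absorb the $o_P(\sqrt{r})$ remainder. You could repair your version by splitting into the cases of bounded $r$ (ordinary multivariate CLT giving $\|W\|_2^2\Rightarrow\chi^2_r$ directly) and diverging $r$ (your normalized CLT), but as written the single normalized-CLT route does not establish the claimed $\chi^2_{r(\tau_0)}$ calibration.
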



\cite{shi2019linear} also studied the problem of testing $A\bm{\beta}_0$ but with the assumption that $A$ has only $m$ non-zero columns. 
This implies only $m$ elements $\bm{\beta}_{0,M}$ of $\bm{\beta}_0$ are involved in $A\bm{\beta}_0$, for some $M\subset[p]$ with $|M|=m$. 
They developed asymptotically valid tests using partial penalized Wald, score and likelihood ratio statistics, respectively. 
However, the validity of their proposed tests relies on two conditions. 
On the one hand, they suppose $m\ll n^{\frac{1}{3}}$, which restricts the number of coefficients in the test and excludes many important cases such as $A\bm{\beta}_0=\bm{\beta}_0$. 
On the other hand, their approach requires a signal strength condition on the coefficients $\bm{\beta}_{0,M^c}$ that are not involved in the hypothesis, which can be more stringent than the $\beta$-min condition in cases like testing a single large coefficient.

Marginal inference for single coefficients $\beta_{0,j}$ and joint inference for the whole vector ${\bm{\beta}}_0$ are usually of particular interest. 
Additionally, simultaneous inference for the case probabilities of a set of new observations plays an important role in many cases, such as electronic health record data analysis \citep{guo2021inference}.
Equipped with the general result in Theorem \ref{thm_Abeta}, we can address these special cases by setting $A=e_j^\top$, $A=I_p$, and $A=\bm{X}_{\rm new}\in\R^{n_{\rm new}\times p}$, respectively.

\subsubsection{Inference for \texorpdfstring{$\beta_{0,j}$}{beta\_0,j}} \label{sec_betaj}
Following the general framework described in Section \ref{sec_Abeta} with $A=e_j^\top$, to construct a confidence set for $\beta_{0,j}$, we apply the likelihood ratio test to $\beta_{0,j}$ under each candidate model. Concretely, given any candidate model $\tau\in\mathcal{C}$, we test the working hypothesis $H_0:\beta_{0,j}=\beta_j,{\bm{\beta}}_{0,\tau^c}=\bm{0}$ versus $H_1:\beta_{0,j}\ne \beta_j,{\bm{\beta}}_{0,\tau^c}=\bm{0}$.
Without loss of generality, we assume $j\in\tau$, otherwise, if $j\not\in\tau$ and $\beta_j=0$, we accept $H_0$ and if $j\not\in\tau$, $\beta_j\ne 0$, we reject $H_0$. Denote the MLEs of ${\bm{\beta}}_{0,\tau}$ under $H_0$ and $H_0\cup H_1$ to be $\hat b_{0}$ and $\hat b_{1}$, respectively. Then the likelihood ratio test statistic is
\[\tilde T(\bm{X}^{obs},\bm{y}^{obs},(\tau,\beta_j))=-2\big(l(\tau,\hat b_{0}\mid\bm{X}^{obs},\bm{y}^{obs})-l(\tau,\hat b_{1}\mid\bm{X}^{obs},\bm{y}^{obs})\big).\]
Finally, we combine the likelihood ratio test statistics corresponding to each candidate model and define the level-$\alpha$ confidence set for $\beta_{0,j}$ as
\begin{align*} 
\Gamma_\alpha^{\beta_{0,j}}(\bm{X}^{obs},\bm{y}^{obs}) = \{\beta_j: \tilde T(\bm{X}^{obs},\bm{y}^{obs},(\tau,\beta_j)) <F^{-1}_{\chi^2_1}(\alpha), \tau \in \mathcal{C}\}.
\end{align*}
Following Theorem \ref{thm_Abeta}, we can show $\Gamma_{\alpha}^{\beta_{0,j}}(\bm{X}^{obs},\bm{y}^{obs})$ is a valid asymptotic level-$\alpha$ confidence set for $\beta_{0,j}$.
\begin{Corollary}
If Assumption \ref{ass_hessian} holds, $\|X_{\tau_0}\|_{\psi_2}\le\xi$, $n\gtrsim (s\log n)^{\frac{3}{2}}$ and $n\gg s^3$, for any $j\in[p]$, when one of the following conditions holds
\begin{itemize}
    \item[(1)] $d\rightarrow \infty$ at first, then $n\rightarrow\infty$, and $n, p, s$ satisfy Assumption \ref{ass_signal}, 
    \item[(2)]  fix any $d$, $n\rightarrow \infty$, and $n, p, s$ satisfy Assumption \ref{ass_signal_strong},
\end{itemize}
then
\[\Prob(\beta_{0,j}\in\Gamma_\alpha^{\beta_{0,j}}(\bm{X},\bm{y}))\ge \alpha-o(1).\]
\end{Corollary}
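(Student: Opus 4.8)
The plan is to read this Corollary as the $q=1$ specialization of Theorem~\ref{thm_Abeta} obtained by taking $A=e_j^\top$, since $\beta_{0,j}=e_j^\top{\bm{\beta}}_0=A{\bm{\beta}}_0$ and the confidence set $\Gamma_\alpha^{\beta_{0,j}}(\bm{X}^{obs},\bm{y}^{obs})$ is literally the set $\Gamma_\alpha^{A{\bm{\beta}}_0}(\bm{X}^{obs},\bm{y}^{obs})$ for this $A$: the working hypotheses, the likelihood ratio statistic $\tilde T(\bm{X}^{obs},\bm{y}^{obs},(\tau,\beta_j))$, and the threshold $F^{-1}_{\chi^2_1}(\alpha)$ all coincide once $q=1$. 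So the bulk of the argument is to verify that, in this special case, the generic hypotheses of Theorem~\ref{thm_Abeta} collapse to exactly those stated in the Corollary, and then to invoke that theorem.

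First I would split on whether $j\in\tau_0$. If $j\in\tau_0$, then $A_{\cdot\tau_0}=(e_j^\top)_{\cdot\tau_0}$ is a nonzero row vector, so $r(\tau_0)={\rm rank}(A_{\cdot\tau_0})=1$. The dimension requirement $n\gg s^3/\sqrt{r(\tau_0)}$ of Theorem~\ref{thm_Abeta} then becomes exactly $n\gg s^3$, which is the hypothesis of the Corollary, while the conditions $\lambda_{\min}(H)\asymp\lambda_{\max}(H)\asymp n$, $\|X_{\tau_0}\|_{\psi_2}\le\xi$, $n\gtrsim(s\log n)^{\frac{3}{2}}$, together with the two signal/design regimes (1)--(2), carry over verbatim. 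Theorem~\ref{thm_Abeta} then delivers $\Prob(\beta_{0,j}\in\Gamma_\alpha^{\beta_{0,j}}(\bm{X},\bm{y}))\ge\alpha-o(1)$ directly.

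The remaining case $j\notin\tau_0$ is where the general theorem cannot be applied verbatim, and I expect this to be the only genuinely delicate point. Here $\beta_{0,j}=0$ and $A_{\cdot\tau_0}=(e_j^\top)_{\cdot\tau_0}=\bm{0}$, so $r(\tau_0)=0$ and the quantity $s^3/\sqrt{r(\tau_0)}$ in Theorem~\ref{thm_Abeta} is undefined; the chi-squared calibration degenerates. Instead I would argue by the deterministic decision rule stated just before the Corollary: when a candidate model $\tau$ does not contain $j$, the test accepts $H_0$ precisely when the tested value of $\beta_j$ equals $0$. On the event $\{\tau_0\in\mathcal{C}\}$ the candidate $\tau=\tau_0$ does not contain $j$, so testing the true value $\beta_{0,j}=0$ is accepted by this rule, whence $\beta_{0,j}\in\Gamma_\alpha^{\beta_{0,j}}(\bm{X},\bm{y})$. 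Consequently the coverage is at least $\Prob(\tau_0\in\mathcal{C})$, which under condition~(1) is controlled by Theorem~\ref{thm_candidate} and under condition~(2) by Theorem~\ref{thm_candidate_strong_signal}; in both regimes $\Prob(\tau_0\in\mathcal{C})=1-o(1)\ge\alpha-o(1)$. Note that since $\Gamma_\alpha^{\beta_{0,j}}$ is a union over $\tau\in\mathcal{C}$, it suffices that a single candidate accept the true value, so using $\tau=\tau_0$ incurs no loss.

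Combining the two cases yields the claimed bound for every $j\in[p]$. The main obstacle is thus not analytic but bookkeeping: isolating the rank-$0$ degeneracy at $j\notin\tau_0$ and dispatching it by the acceptance convention rather than by the chi-squared limit, while for $j\in\tau_0$ merely confirming that $r(\tau_0)=1$ collapses the dimension condition of Theorem~\ref{thm_Abeta} to $n\gg s^3$.
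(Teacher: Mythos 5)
Your proposal is correct and follows essentially the same route as the paper, which treats this Corollary as the direct specialization of Theorem~\ref{thm_Abeta} to $A=e_j^\top$ (so that $r(\tau_0)=1$ when $j\in\tau_0$ and the condition $n\gg s^3/\sqrt{r(\tau_0)}$ becomes $n\gg s^3$). Your separate handling of the degenerate case $j\notin\tau_0$ via the acceptance convention and the bound $\Prob(\tau_0\in\mathcal{C})=1-o(1)$ is a detail the paper leaves implicit, and it is handled correctly.
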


The debiasing methods for high-dimensional logistic regression models \citep{cai2021statistical,van2014asymptotically} have been proposed for inferring single coefficients. 
These methods require a constant lower bound for the smallest eigenvalue, of either the Hessian matrix with respect to $\bm{\beta}_0$ or the covariance matrix $\E XX^\top$. Such assumptions can be violated if, for instance, two non-informative covariates are identical. 
However, since Assumption~\ref{ass_hessian} only involves $X_{\tau_0}$, our results remain valid in such cases.

The confidence sets generated by debiasing methods are intervals for any $\beta_{0,j}$, regardless of whether $\beta_{0,j}$ is zero. In contrast, the confidence sets produced by our method are unions of intervals. Specifically, if a candidate model contains the index $j$, the confidence set for $\beta_{0,j}$ will encompass the interval derived under that candidate model. If no candidate model includes $j$, then we are confident that $\beta_{0,j}=0$ and the confidence set for $\beta_{0,j}$ reduces to a singleton $\{0\}$. Therefore our method is more flexible and can adapt to the uncertainties of model selection.

\begin{Remark}\label{rem_weak_signal}
    The inclusion of $\tau_0$ in $\mathcal{C}$ is derived under the signal strength condition in Assumption \ref{ass_signal}, when this condition is violated, the model candidate set may not contain the true support, thus the proposed inference method fails. However, if we divide the set of models $\{\tau\subset[p]:\abs{\tau}\le s\}$ into two parts $\mathcal{T}_1,\mathcal{T}_2$ with $\tau_0\in\mathcal{T}_2$,
\[\inf_{\tau\in\mathcal{T}_1,{\bm{\beta}}_\tau\in\R^{\abs{\tau}},\sigma\ge 0}L_{\Prob_{\bm{\theta}_0}}(\tau,{\bm{\beta}}_\tau,\sigma|\bm{y},\bm{\epsilon})\gtrsim \min\bigg\{\abs{\tau_0\setminus\tau}\dfrac{\log p}{n}+(\abs{\tau}+1)\dfrac{\log \frac{n}{\abs{\tau}+1}}{n},(\abs{\tau}+1)\dfrac{\log p}{n}\bigg\},\]
\[\max_{\tau\in\mathcal{T}_2}\inf_{{\bm{\beta}}_\tau\in\R^{\abs{\tau}},\sigma\ge 0}L_{\Prob_{\bm{\theta}_0}}(\tau,{\bm{\beta}}_\tau,\sigma|\bm{y},\bm{\epsilon})\lesssim\min\bigg\{\abs{\tau_0\setminus\tau}\dfrac{\log p}{n}+(\abs{\tau}+1)\dfrac{\log \frac{n}{\abs{\tau}+1}}{n},(\abs{\tau}+1)\dfrac{\log p}{n}\bigg\},\]
here $\mathcal{T}_1$ contains all the models that have bad prediction performance, while $\mathcal{T}_2$ consists of all models that are close to the true model in terms of prediction. Then using the same argument as Lemma \ref{lem_oracle} and Theorem \ref{thm_candidate}, we are sure that the model candidate set $\mathcal{C}$ contains at least one model in $\mathcal{T}_2$ which is close to $\tau_0$. Therefore, to make inference for $\beta_{0,j}$, we apply the following heuristic: for any $\hat\tau\in\mathcal{C}$, instead of using the low-dimensional model $\hat\tau$ based on $\{(X_{i,\hat\tau}^{obs},y^{obs}_i):i\in[n]\}$, we consider an augmented model $\hat\tau\cup\{j\}$ and use the data $\{(X_{i,\hat\tau\cup\{j\}}^{obs},y^{obs}_i):i\in[n]\}$, then apply the approach above to test $\beta_j$. 

The intuition is that for $j\in\tau_0$, if $\abs{\beta_{0,j}}$ is large, then $\hat\tau$ is likely to contain $j$, so the model augmentation is not harmful, however, if $\abs{\beta_{0,j}}$ is small, then $\mathcal{C}$ could omit $j$. In this case $\hat\tau\cup\{j\}$ may result in better coverage than $\hat\tau$ since the former includes more informative covariates. However, for $j\not\in\tau_0$, since $j\in\hat\tau\cup\{j\}$, using the augmented model, our decision on $\beta_j$ is always reached using likelihood ratio test, therefore the confidence set of $\beta_{0,j}$ always has positive size. While using original $\hat\tau$ may leads to a confidence set $\{0\}$, since if $j\not\in\hat\tau$ for $\forall \hat\tau\in\mathcal{C}$, the original procedure concludes that $\beta_{0,j}=0$. Therefore although the augmented models may lead to better coverage, it could also result in a wider confidence set for the coefficients of non-informative covariates.

\end{Remark}

\subsubsection{Inference for \texorpdfstring{$\bm{\beta}_0$}{beta\_0}} 

Following the general framework in Section \ref{sec_Abeta} with $A=I_p$, to construct a confidence set for $\bm{\beta}_0$, we apply the likelihood ratio test to $\bm{\beta}_0$ under each candidate model. Particularly, for each candidate model $\tau\in\mathcal{C}$, we consider the working hypothesis $H_0:{\bm{\beta}}_{0,\tau}={\bm{\beta}}_{\tau},{\bm{\beta}}_{0,\tau^c}=\bm{0}$ versus $H_1:{\bm{\beta}}_{0,\tau}\ne{\bm{\beta}}_{\tau},{\bm{\beta}}_{0,\tau^c}=\bm{0}$.
We calculate the MLE of ${\bm{\beta}}_{0,\tau}$ as $\hat b_{1}$, then the likelihood ratio test statistic for ${\bm{\beta}}_{\tau}$ is 
\[\tilde T(\bm{X}^{obs},\bm{y}^{obs},(\tau,\bm{\beta}_{\tau}))=-2\big(l(\tau,{\bm{\beta}}_{\tau}\mid\bm{X}^{obs},\bm{y}^{obs})-l(\tau,\hat b_{1}\mid\bm{X}^{obs},\bm{y}^{obs})\big).\]
Given the likelihood ratio test statistics corresponding to each candidate model, the final level-$\alpha$ confidence set for ${\bm{\beta}}_0$ is
\[\Gamma_\alpha^{{\bm{\beta}}_0}(\bm{X}^{obs},\bm{y}^{obs})=\{{\bm{\beta}}:\tilde T(\bm{X}^{obs},\bm{y}^{obs},(\tau,{\bm{\beta}}_\tau))<F^{-1}_{\chi^2_s}(\alpha),{\bm{\beta}}_{\tau^c}=\bm{0},\tau\in\mathcal{C}\}.\]
Similarly, we have the following corollary stating that $\Gamma_\alpha^{{\bm{\beta}}_0}(\bm{X}^{obs},\bm{y}^{obs})$ has asymptotic coverage $\alpha$.
\begin{Corollary}\label{cor_joint}
If Assumption \ref{ass_hessian} holds, $\|X_{\tau_0}\|_{\psi_2}\le\xi$, $n\gtrsim(s\log n)^{\frac{3}{2}}$ and $n\gg s^{5/2}$, when one of the following conditions holds
\begin{itemize}
    \item[(1)] $d\rightarrow \infty$ at first, then $n\rightarrow\infty$, and $n, p, s$ satisfy Assumption \ref{ass_signal}, 
    \item[(2)]  fix any $d$, $n\rightarrow \infty$, and $n, p, s$ satisfy Assumption \ref{ass_signal_strong},
\end{itemize}
then
\[\Prob({\bm{\beta}}_{0}\in\Gamma_\alpha^{{\bm{\beta}}_0}(\bm{X},\bm{y}))\ge \alpha-o(1).\]
\end{Corollary}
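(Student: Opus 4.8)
The plan is to obtain Corollary~\ref{cor_joint} as the special case $A = I_p$ of Theorem~\ref{thm_Abeta}, so the main task is to verify that the hypotheses specialize exactly and that the threshold $F^{-1}_{\chi^2_s}(\alpha)$ used in $\Gamma_\alpha^{\bm{\beta}_0}$ is the correct one. First I would note that with $A = I_p$ we have $A\bm{\beta}_0 = \bm{\beta}_0$ and, for the true model, $r(\tau_0) = \mathrm{rank}\big((I_p)_{\cdot\tau_0}\big) = |\tau_0| = s$. Consequently the rate requirement $n \gg s^3/\sqrt{r(\tau_0)}$ from Theorem~\ref{thm_Abeta} becomes $n \gg s^3/\sqrt{s} = s^{5/2}$, which is precisely the condition imposed in the corollary; the remaining hypotheses (Assumption~\ref{ass_hessian}, $\|X_{\tau_0}\|_{\psi_2} \le \xi$, $n \gtrsim (s\log n)^{3/2}$, and either Assumption~\ref{ass_signal} with $d \to \infty$ or Assumption~\ref{ass_signal_strong} with fixed $d$) are carried over verbatim.

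Next I would reconcile the chi-squared threshold. In the general Algorithm~\ref{alg_coefficient_confidence} the cutoff for candidate model $\tau$ is $F^{-1}_{\chi^2_{r(\tau)}}(\alpha)$ with $r(\tau) = \mathrm{rank}(A_{\cdot\tau}) = |\tau|$, whereas $\Gamma_\alpha^{\bm{\beta}_0}$ uses the single value $F^{-1}_{\chi^2_s}(\alpha)$ uniformly over $\tau \in \mathcal{C}$. Since the confidence set is a union of representative sets and we only need a coverage lower bound, it suffices to track the slice corresponding to $\tau = \tau_0$; there the two cutoffs coincide because $r(\tau_0) = s$. Using a larger cutoff than $r(\tau)$ for the models $\tau$ with $|\tau| < s$ can only enlarge the corresponding representative sets, hence can only increase coverage, so the uniform choice $\chi^2_s$ does no harm to the lower bound.

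With these identifications I would then invoke the coverage argument of Theorem~\ref{thm_Abeta} directly: conditioning on the high-probability event $\{\tau_0 \in \mathcal{C}\}$ (guaranteed by Theorem~\ref{thm_candidate} under case (1) and Theorem~\ref{thm_candidate_strong_signal} under case (2)), the likelihood ratio statistic $\tilde T(\bm{X},\bm{y},(\tau_0,\bm{\beta}_{0,\tau_0}))$ evaluated at the true parameter converges in distribution to $\chi^2_s$, so the $\tau_0$-slice accepts $\bm{\beta}_0$ (note $\bm{\beta}_{0,\tau_0^c}=\bm{0}$ holds automatically) with probability tending to $\alpha$; combining with $\Prob(\tau_0 \notin \mathcal{C}) = o(1)$ yields $\Prob(\bm{\beta}_0 \in \Gamma_\alpha^{\bm{\beta}_0}(\bm{X},\bm{y})) \ge \alpha - o(1)$.

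The step I expect to be the genuine obstacle --- already absorbed into Theorem~\ref{thm_Abeta} but worth flagging as the reason for the $n \gg s^{5/2}$ condition --- is the Wilks-type chi-squared approximation of the likelihood ratio statistic in the growing-dimension regime $s \to \infty$. Here one must Taylor-expand the log-likelihood $l(\tau_0,\cdot)$ around the unconstrained MLE and show that the quadratic term, normalized by $H$ via Assumption~\ref{ass_hessian}, dominates while the cubic remainder is negligible; controlling that remainder uniformly over the $s$-dimensional parameter requires the sub-Gaussian design bound and is exactly where the scaling $n \gg s^{5/2} = s^3/\sqrt{s}$ enters, since the joint test consumes the full $r(\tau_0) = s$ degrees of freedom rather than the $r(\tau_0) = 1$ of the single-coordinate case (where the analogous condition is $n \gg s^3$).
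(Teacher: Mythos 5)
Your proposal is correct and matches the paper's (implicit) argument: the corollary is obtained by specializing Theorem \ref{thm_Abeta} to $A=I_p$, for which $r(\tau_0)=s$ turns the condition $n\gg s^3/\sqrt{r(\tau_0)}$ into $n\gg s^{5/2}$ and the cutoff into $F^{-1}_{\chi^2_s}(\alpha)$, and coverage follows from the $\tau_0$-slice together with $\Prob(\tau_0\notin\mathcal{C})=o(1)$. Your observation that only the $\tau_0$-slice matters for the lower bound (so the uniform $\chi^2_s$ cutoff is harmless) is exactly the right way to reconcile the definition of $\Gamma_\alpha^{\bm{\beta}_0}$ with the general algorithm.
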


\cite{zhang2017simultaneous} also studied the simultaneous inference for $\bm{\beta}_0$ based on the debiasing method \citep{van2014asymptotically}. 
Their approach produces an asymptotically valid test for $\bm{\beta}_0$, provided the smallest eigenvalue of the Hessian matrix of the log-likelihood with respect to $\bm{\beta}_0$ exceeds a positive constant. 
However, this assumption fails to hold if there is collinearity among the non-informative covariates. 
In contrast, our method remains valid in such cases. Moreover, instead of being a full-dimensional ellipsoid, our constructed confidence set is a union of low-dimensional ellipsoids with many coefficients to be exactly zero. Therefore, our method can adapt to the uncertainty of model selection.

\subsubsection{Simultaneous inference for case probabilities}

Logistic regression has been widely applied to detect infectious diseases based on information of patients \citep{ravi2019detection, chadwick2006distinguishing}. 
Statistical inference for patients' case probabilities is critical for identifying those at risk, enabling early intervention.
However, individual-level inference lacks the capacity for group-wise error control and, therefore fails to control disease transmission due to interconnected infection dynamics. 
Consequently, there is an imperative need for simultaneous inference methods for case probabilities of a group of patients. 

Given the fixed covariates $\{X_{{\rm new},i}\in\R^p: i\in[n_{\rm new}]\}$ of an arbitrary group of new patients, we assume the unknown infection statuses $\{Y_{{\rm new},i}\in\{0,1\}:i\in[n]\}$ share the same conditional distribution with the training data, i.e.,
\[\Prob(Y_{{\rm new},i}=1|X_{{\rm new},i})=\frac{1}{1+e^{-X_{{\rm new},i}^\top\bm{\beta}_0}}=h(X_{{\rm new},i}^\top\bm{\beta}_0).\]
Then the case probabilities $\{h(X_{{\rm new},i}^\top\bm{\beta}_0):i\in[n_{\rm new}]\}$ measure the confidence for labeling each new patient as infected. 
Denote $\bm{X}_{\rm new}=(X_{{\rm new},1},\ldots,X_{{\rm new},n_{\rm new}})^\top\in\R^{n_{\rm new}\times p}$, $h(\bm{X}_{\rm new}^\top\bm{\beta}_0)=(h(X_{{\rm new},1}^\top\bm{\beta}_0),\ldots,h(X_{{\rm new},n_{\rm new}}^\top\bm{\beta}_0))^\top\in\R^{n_{\rm new}}$.
To quantify the uncertainty of predicting $Y_{{\rm new}, i}$'s, we aim to conduct statistical inference for all the case probabilities $h(\bm{X}_{\rm new}^\top\bm{\beta}_0)$ of these $n_{\rm new}$ new patients simultaneously. To this end, we construct a confidence set for the vector $h(\bm{X}_{\rm new}^{\top}\bm{\beta}_0)$ and the matrix $A$ in Section~\ref{sec_Abeta} equals $\bm{X}_{\rm new}^\top$.
Then it suffices to form a confidence set for $\bm{X}_{\rm new}^\top\bm{\beta}_0$. 

Following the strategy described in Section~\ref{sec_Abeta} with $A=\bm{X}^\top_{\rm new}$, to construct a confidence set for $\bm{X}_{\rm new}^\top\bm{\beta}_0$, we apply the likelihood ratio test to $\bm{X}_{\rm new}^\top\bm{\beta}_0$ under each candidate model. Specifically, for any candidate model $\tau\in\mathcal{C}$, we consider the working hypotheses $H_0:\bm{X}_{{\rm new},\cdot\tau}\bm{\beta}_{0,\tau}=t, \bm{\beta}_{0,\tau^c}=\bm{0}$ versus $H_1:\bm{X}_{{\rm new},\cdot\tau}\bm{\beta}_{0,\tau}\ne t, \bm{\beta}_{0,\tau^c}=\bm{0}$,
with $\bm{X}_{{\rm new},\cdot\tau}$ to be a submatrix consisting of the columns of $\bm{X}_{\rm new}$ with indexes in $\tau$. Without loss of generality, we assume the existence of $\bm{\beta}$ such that $\bm{X}_{{\rm new},\cdot\tau}\bm{\beta}_{\tau}=t$, otherwise we reject $H_0$. We denote ${\rm rank}(\bm{X}_{{\rm new},\cdot\tau})=r(\tau)$ and let the MLE of $\bm{\beta}_{0,\tau}$ under $H_0$ and $H_0\cup H_1$ to be $\hat b_0$ and $\hat b_1$, respectively. Then the likelihood ratio test statistic is
\[\tilde T(\bm{X}^{obs},\bm{y}^{obs},(\tau,t))=-2\big(l(\tau,\hat b_0|\bm{X}^{obs},\bm{y}^{obs})-l(\tau,\hat b_1|\bm{X}^{obs},\bm{y}^{obs})\big).\]
Given the likelihood ratio test statistics corresponding to each candidate model, we define the final confidence set for $h(\bm{X}^\top_{\rm new}\bm{\beta}_0)$ to be
\[\Gamma_\alpha^{h(\bm{X}_{\rm new}\bm{\beta}_0)}(\bm{X}^{obs},\bm{y}^{obs})=\{h(t):\tilde T(\bm{X}^{obs},\bm{y}^{obs},(\tau,t))<F^{-1}_{\chi^2_{r(\tau)}}(\alpha),\tau\in\mathcal{C}\}.\]
According to Theorem~\ref{thm_Abeta}, we know $\Gamma^{h(\bm{X}^\top_{\rm new}\bm{\beta}_0)}_\alpha(\bm{X}^{obs},\bm{y}^{obs})$ is asymptotically valid.

\begin{Corollary}\label{cor_case_probability}
    If Assumption \ref{ass_hessian} holds, $\|X_{\tau_0}\|_{\psi_2}\le\xi$, $n\gtrsim (s\log n)^{\frac{3}{2}}$ and $n\gg\frac{s^3}{\sqrt{r(\tau_0)}}$, when one of the following conditions holds
\begin{itemize}
    \item[(1)] $d\rightarrow \infty$ at first, then $n\rightarrow\infty$, and $n, p, s$ satisfy Assumption \ref{ass_signal}, 
    \item[(2)]  fix any $d$, $n\rightarrow \infty$, and $n, p, s$ satisfy Assumption \ref{ass_signal_strong},
\end{itemize}
then the confidence set $\Gamma_\alpha^{h(\bm{X}_{\rm new}\bm{\beta}_0)}(\bm{X}^{obs},\bm{y}^{obs})$ is asymptotically valid
\[\Prob(h(\bm{X}_{\rm new}\bm{\beta}_0)\in\Gamma_\alpha^{h(\bm{X}_{\rm new}\bm{\beta}_0)}(\bm{X},\bm{y}))\ge \alpha-o(1).\]
\end{Corollary}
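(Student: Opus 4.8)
The plan is to recognize this corollary as a direct instantiation of Theorem~\ref{thm_Abeta} with the choice $A=\bm{X}_{\rm new}^\top\in\R^{n_{\rm new}\times p}$, so that essentially nothing new needs to be proved beyond checking that the hypotheses transfer. First I would observe that, since $\bm{X}_{\rm new}$ is a fixed (non-random) matrix of covariates, $A=\bm{X}_{\rm new}^\top$ is a legitimate predetermined constant matrix in the sense required by Section~\ref{sec_Abeta}, and that $A\bm{\beta}_0=\bm{X}_{\rm new}^\top\bm{\beta}_0$. With this identification the working hypotheses $H_0:\bm{X}_{{\rm new},\cdot\tau}\bm{\beta}_{0,\tau}=t$ coincide with the generic hypotheses $H_0:A_{\cdot\tau}\bm{\beta}_{0,\tau}=t$, the likelihood ratio statistics $\tilde T(\bm{X}^{obs},\bm{y}^{obs},(\tau,t))$ are identical, and $r(\tau)=\mathrm{rank}(\bm{X}_{{\rm new},\cdot\tau})=\mathrm{rank}(A_{\cdot\tau})$ matches the rank appearing in the $\chi^2$ cutoff. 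Consequently the set $\Gamma_\alpha^{h(\bm{X}_{\rm new}\bm{\beta}_0)}(\bm{X}^{obs},\bm{y}^{obs})$ is exactly $\Gamma_\alpha^{h(A\bm{\beta}_0)}(\bm{X}^{obs},\bm{y}^{obs})$ for this particular $A$.

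Second, I would verify that the hypotheses of Corollary~\ref{cor_case_probability} are precisely those of Theorem~\ref{thm_Abeta}: Assumption~\ref{ass_hessian}, $\|X_{\tau_0}\|_{\psi_2}\le\xi$, $n\gtrsim(s\log n)^{3/2}$, $n\gg s^3/\sqrt{r(\tau_0)}$, together with either regime (1) or (2). These are stated verbatim, with the generic dimension $q$ in the theorem playing the role of $n_{\rm new}$ and $r(\tau_0)=\mathrm{rank}(\bm{X}_{{\rm new},\cdot\tau_0})$. Hence Theorem~\ref{thm_Abeta} applies directly and yields $\Prob(h(A\bm{\beta}_0)\in\Gamma_\alpha^{h(A\bm{\beta}_0)}(\bm{X},\bm{y}))\ge\alpha-o(1)$, which is exactly the claimed coverage for $h(\bm{X}_{\rm new}^\top\bm{\beta}_0)$.

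For completeness I would spell out why the transformation by $h$ preserves coverage, this being the only ingredient not already packaged by the theorem statement. Because $\Gamma_\alpha^{h(A\bm{\beta}_0)}=\{h(t):t\in\Gamma_\alpha^{A\bm{\beta}_0}\}$, on the event $\{A\bm{\beta}_0\in\Gamma_\alpha^{A\bm{\beta}_0}\}$ we have $A\bm{\beta}_0=t$ for some $t$ in the linear confidence set, whence $h(A\bm{\beta}_0)=h(t)\in\Gamma_\alpha^{h(A\bm{\beta}_0)}$; thus $\{A\bm{\beta}_0\in\Gamma_\alpha^{A\bm{\beta}_0}\}\subseteq\{h(A\bm{\beta}_0)\in\Gamma_\alpha^{h(A\bm{\beta}_0)}\}$ and the coverage of the image set is at least that of the linear set. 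I do not anticipate a genuine obstacle, since the corollary is a specialization rather than a new result; the only point demanding a moment of care is confirming that the rank $r(\tau_0)$ entering the sample-size requirement $n\gg s^3/\sqrt{r(\tau_0)}$ denotes the same quantity $\mathrm{rank}(\bm{X}_{{\rm new},\cdot\tau_0})$ in both statements, so that the negligibility of the $\chi^2$-approximation error established in Theorem~\ref{thm_Abeta} is inherited without modification.
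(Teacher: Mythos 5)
Your proposal is correct and matches the paper's own treatment: Corollary~\ref{cor_case_probability} is stated in the text as an immediate specialization of Theorem~\ref{thm_Abeta} to $A=\bm{X}_{\rm new}^\top$, with $r(\tau)={\rm rank}(\bm{X}_{{\rm new},\cdot\tau})$, and the paper gives no separate proof beyond this invocation. Your additional remark on why applying $h$ preserves coverage is a harmless elaboration of a fact already packaged in the conclusion of Theorem~\ref{thm_Abeta}.
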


In comparison, \cite{guo2021inference} pioneered the study of statistical inference for case probabilities in high-dimensional logistic regression models. 
However, their method can only be applied to one observation, in contrast, our method enables simultaneous inference for the case probabilities of an arbitrary set of new observations.

\section{Numerical Results}\label{sec_numerical}
In this section, we illustrate the performance of the proposed methods using both synthetic data and real data.

\subsection{Synthetic Data}\label{sec_simulation}
In this subsection, we demonstrate the performance of the proposed methods based on synthetic data. Throughout this subsection, for any sample size $n$, model dimension $p$, sparsity $s$, number of repro samples $d$ and the regression coefficients ${\bm{\beta}}_0$ to be specified later, we generate $n$ i.i.d. copies $\{X_i:i\in[n]\}$ of $X\in\R^p$ from normal distribution $N(\bm{0},\Sigma)$ with mean vector $\bm{0}$ and covariance matrix $\Sigma\in\R^{p\times p}$ satisfying $\Sigma_{ij}=0.2^{\abs{i-j}}$. We set 
$\tau_0=[s]$, then generate $\{Y_i:i\in[n]\}$ from the logistic regression model with
\[\Prob(Y_i=1|X_i)=1-\Prob(Y_i=0|X_i)=\frac{1}{1+e^{-X_{i,\tau_0}^\top{\bm{\beta}}_{0,\tau_0}}}.\]
Specifically, we consider the following four choices of $(n,p,s,d,{\bm{\beta}}_{0,\tau_0})$.
\begin{itemize}
    \item[(M1)] $n=400, p=1000, s=4, d=4000,{\bm{\beta}}_{0,\tau_0}=(5,4,3,2)^\top$.
    \item[(M2)] $n=500, p=1000, s=4, d=2000, {\bm{\beta}}_{0,\tau_0}=(5,4,3,2)^\top$.
    \item[(M3)] $n=700, p=1000, s=4, d=10000, {\bm{\beta}}_{0,\tau_0}=(5,4,3,1)^\top$.
    \item[(M4)] $n=900, p=1000, s=4, d=10000, {\bm{\beta}}_{0,\tau_0}=(5,4,3,1)^\top$.
\end{itemize}
The Model (M1) has strong signals and a relatively small sample size while Model (M3) has weaker signal strength but a relatively larger sample size. Then we consider Model (M2) and (M4) to investigate the influence of sample size by increasing the sample sizes of Model (M1) and (M3), respectively. In the following subsections, we choose the confidence level $\alpha$ to be 0.95 and report the performance of the proposed methods for each model, all the results are calculated based on 100 replications.

\subsubsection{Model candidate set}\label{sec_simulation_candidate}
In this section, we study the coverage of our proposed model candidate set for the unknown true model $\tau_0$. When applying Algorithm \ref{alg_candidate} for the model candidate set, we replace the $\ell_0$ constrained empirical 0-1 risk minimization problem in Step 2 by the following computationally efficient surrogate
\[(\hat{\bm{\beta}}^{(j)}(\lambda_j),\hat\sigma^{(j)}(\lambda_j))=\argmin_{{\bm{\beta}}\in\R^p,\sigma\in\R}\sum_{i=1}^nL((2y_i^{obs}-1)(X_i^{obs\top}{\bm{\beta}}+\sigma\epsilon_i^{*(j)}))+\lambda_j\frac{\norm{{\bm{\beta}}}_1}{\|\tilde{\bm{\beta}}^{(j)}\|_1},\]
\[\hat\tau(\bm{\epsilon}^{*(j)},\lambda_j)=\supp\{\hat{\bm{\beta}}^{(j)}(\lambda_j)\},\]
where we take $L$ to be either the logistic loss $L_l$ or hinge loss $L_h$ defined as
\[L_l(t)=\log(1+e^{-t}),\quad L_h(t)=\max\{0,1-t\},\]
and we choose $\tilde{\bm{\beta}}^{(j)}$ as the solution of
\[(\tilde{\bm{\beta}}^{(j)}(\tilde\lambda_j),\tilde\sigma^{(j)}(\tilde\lambda_j))=\argmin_{{\bm{\beta}}\in\R^p,\sigma\in\R}\sum_{i=1}^nL((2y_i^{obs}-1)(X_i^{obs\top}{\bm{\beta}}+\sigma\epsilon_i^{*(j)}))+\tilde\lambda_j\norm{{\bm{\beta}}}_2,\]
for $\tilde\lambda_j$ chosen by 3-fold cross validation. The tuning parameter $\lambda_j$ is selected using the extended BIC \citep{chen2008extended}
\begin{align*}
    {\rm EBIC}_{j,\xi}(\lambda)=&-2\sum_{i=1}^nL((2y_i^{obs}-1)(X_i^{obs\top}\hat{\bm{\beta}}^{(j)}(\lambda)+\hat\sigma^{(j)}(\lambda)\epsilon_i^{*(j)}))\\
    &+\abs{\hat\tau(\bm{\epsilon}^{*(j)},\lambda)}\log n+2\xi\log{p \choose\abs{\hat\tau(\bm{\epsilon}^{*(j)},\lambda)}}.
\end{align*}
Here we choose $\lambda_j(\xi)$ to maximize ${\rm EBIC}_{j,\xi}(\lambda)$ for each $\xi\in[0,1]$. Therefore for each $\bm{\epsilon}^{*(j)}$, we collect all models $\{\hat\tau(\bm{\epsilon}^{*(j)},\lambda_j(\xi)):\xi\in[0,1]\}$. Then the final model candidate set becomes
\[\{\hat\tau(\bm{\epsilon}^{*(j)},\lambda_j(\xi)):j\in[d],\xi\in[0,1]\}.\]

For the logistic loss $L_l$ and hinge loss $L_h$, we calculate the model candidate sets with 100 replications and report the averaged coverage of $\tau_0$ and the averaged cardinality of the candidate sets with standard deviations in the parentheses in Table \ref{tab_candidate}. From Table \ref{tab_candidate}, we see the proposed method performs well for all four models with high coverage and small cardinality. These results are in line with our theoretical results in Section \ref{sec_candidate}. If the signal is strong enough such that Assumption \ref{ass_signal} or Assumption \ref{ass_signal_strong} are satisfied, then the candidate set is guaranteed to contain the true model. Comparing (M1) with (M2) and (M3) with (M4), we find the coverages increase as sample sizes increase, this coincides with Theorems \ref{thm_candidate} and \ref{thm_candidate_strong_signal} in that the probability of missing $\tau_0$ is decreasing in $n$. It is not surprising that the cardinalities decrease when $n$ increases, since as stated in Theorem \ref{thm_candidate_strong_signal}, when $n$ increases, the stronger signal strength Assumption \ref{ass_signal_strong} becomes weaker, and if this stronger assumption is satisfied, the candidate set are likely to be smaller or even to be precisely $\{\tau_0\}$. Comparing (M1) (M2) with (M3) (M4), we see when ${\bm{\beta}}_{0,\tau_0}$ changes from $(5,4,3,2)^\top$ to $(5,4,3,1)$ and $n$ increases from $(400,500)$ to $(700,900)$, the coverage in (M3) (M4) are comparable to that in (M1) (M2). This demonstrates that if the signals become weaker, the quantities $c_{\min}$ and $\tilde c_{\min}$ in Theorems \ref{thm_candidate} and \ref{thm_candidate_strong_signal} become smaller, therefore the performance will be worse. Then more samples are required to make $nc_{\min}, n\tilde c_{\min}$ still comparable in these two settings, which leads to comparable coverages.

\begin{table}[t]
    \centering
    \begin{tabular*}{\columnwidth}{@{\extracolsep\fill}lcccc@{\extracolsep\fill}}\hline
         & \multicolumn{4}{c}{Losses}\\\cline{2-5}
        & \multicolumn{2}{c}{Hinge} & \multicolumn{2}{c}{Logistic} \\\cline{2-3}\cline{4-5}
        Models  & Coverage & Cardinality & Coverage & Cardinality \\\hline
         M1 & 1.00(0.00) & 6.99(3.37) & 0.98(0.14) & 7.81(4.99)\\
         M2 & 1.00(0.00) & 5.14(2.12) & 0.99(0.10) & 4.91(3.12)\\
         M3 & 0.98(0.14) & 5.54(2.45) & 0.97(0.17) & 3.95(2.07)\\
         M4 & 0.97(0.17) & 4.11(2.05) & 0.99(0.10) & 2.48(1.51)\\\hline
    \end{tabular*}
    \caption{Comparison of performance of the model candidate sets. Here ``Coverage" means the probability for the model candidate set $\mathcal{C}$ to contain $\tau_0$, and ``Cardinality" indicates the number of models in $\mathcal{C}$.}
    \label{tab_candidate}
\end{table}

\subsubsection{Inference for $\tau_0$}
In this subsection, we study the performance of the model confidence set proposed in Section \ref{sec_tau}. When applying Algorithm \ref{alg_model_confidence}, in Step 4, for each $\tau\in\mathcal{C}$, we need to solve an optimization problem for a discrete function which can be hard. In practice, we use the MLE of ${\bm{\beta}}_\tau$ to generate $\bm{Y}^{*(j)}$. We also report the results when the profile method in Step 4 is solved by the \texttt{optim} function in \texttt{R} using the method in \cite{nelder1965simplex}. Here we choose the number $m$ of Monte Carlo samples to be 100 for all settings. The coverages and cardinalities of the model confidence sets are reported in Table \ref{tab_tau_conf} where we deal with the nuisance parameter ${\bm{\beta}}_{0,\tau}$ using both the MLE and profile method. From Table \ref{tab_tau_conf}, we find the model confidence sets are smaller than the model candidate sets in all settings while the coverages of the model confidence sets are the same as the model candidate sets. Due to the discreteness of the nuclear statistic, the model confidence sets are conservative, however, they are still able to reject some models in the model candidate sets and produce smaller sets of models.

\begin{table}[t]
    \centering
    \begin{tabular*}{\columnwidth}{@{\extracolsep\fill}lccccc@{\extracolsep\fill}}\hline
        & & \multicolumn{4}{c}{$\beta_\tau$}\\\cline{3-6}
        & & \multicolumn{2}{c}{Profile} & \multicolumn{2}{c}{MLE}\\\cline{3-4}\cline{5-6}
        Models & Losses   & Coverage & Cardinality  & Coverage & Cardinality \\\hline
        M1 & Hinge & 1.00(0.00) & 4.88(2.68) & 1.00(0.00) & 4.02(2.34)\\
        & Logistic & 0.98(0.14) & 5.19(3.05) & 0.98(0.14) & 4.30(2.37)\\
        M2 & Hinge & 1.00(0.00) & 3.57(1.88) & 1.00(0.00) & 3.02(1.58)\\
        & Logistic & 0.99(0.10) & 3.79(2.28) & 0.99(0.10) & 3.10(1.82)\\
        M3 & Hinge & 0.98(0.14) & 3.78(1.61) & 0.98(0.14) & 3.21(1.45)\\
         & Logistic & 0.97(0.17) & 3.05(1.69) & 0.97(0.17) & 2.62(1.47)\\
         M4 & Hinge & 0.97(0.17) & 2.80(1.52) & 0.97(0.17) & 2.53(1.40)\\
         & Logistic & 0.99(0.10) & 1.94(1.11) & 0.99(0.10) & 1.82(0.98)\\\hline
    \end{tabular*}
   \caption{Comparison of performance of the model confidence sets. Here ``Coverage" means the probability for the model confidence set $\Gamma_\alpha^{\tau_0}(\bm{X}^{obs},\bm{y}^{obs})$ to contain $\tau_0$, and ``Cardinality" indicates the number of models in $\Gamma_\alpha^{\tau_0}(\bm{X}^{obs},\bm{y}^{obs})$.}
    \label{tab_tau_conf}
\end{table}

\subsubsection{Inference for \texorpdfstring{$\beta_{0,j}$}{beta\_0,j}}

In this subsection, we study the performance of the confidence sets for individual coefficients $\beta_{0,j}$ for $j\in[p]$. We also compare our method with the Debiased Lasso method in \cite{van2014asymptotically} implemented using the \texttt{lasso.proj} function in \texttt{hdi} package and also the oracle likelihood ratio test 
assuming $\tau_0$ were known. For Model (M1)-(M4), for each replication, we calculate the coverage and size of confidence sets for each $\beta_{0,j},j\in[p]$, then we average the performance over $j\in\tau_0$ and $j\in[p]\setminus\tau_0$, respectively. Note that the proposed confidence sets for $\beta_{0,j}$ is a union of intervals, so we report the Lebesgue measure of the confidence sets. Then the final results reported in Table \ref{tab_betai_conf} contain the averaged coverages and sizes of confidence sets over 100 replications with standard deviations in the parentheses. As we discussed in Section \ref{sec_simulation_candidate}, we consider two losses, logistic loss and hinge loss, for Step 2 in Algorithm \ref{alg_candidate}. Hereafter, we use the abbreviations ``Repro-Logistic" and ``Repro-Hinge" to denote the repro samples method with logistic loss and hinge loss, respectively. We also use ``Debias" to denote the Debiased Lasso method and use ``Oracle" to denote the oracle likelihood ratio test with the knowledge of $\tau_0$. From Table \ref{tab_betai_conf}, we see that for $j\in\tau_0$, the proposed methods Repro-Hinge and Repro-Logistic have the desired coverage of  0.95 for all four models, while the Oracle method undercovers in (M1) and the Debiased method couldn't cover the nonzero coefficients in all the settings. In terms of size, the confidence sets produced by Repro-Hinge and Repro-Logistic are wider than that of the Oracle method in Model (M1) (M2). The reasons are two folded, on the one hand, the model candidate sets in these two models are relatively large, therefore the confidence sets of $\beta_{0,j},j\in\tau_0$ contain intervals calculated using likelihood ratio test under some wrong models; on the other hand, due to the smaller sample size $(400,500)$, the likelihood ratio tests are not accurate in finite sample, therefore the intervals under wrong models are also wide and unstable, resulting in a large set after taking the union. The sizes of confidence sets created by Repro-Hinge and Repro-Logistic are comparable to that of the Oracle methods in Model (M3) (M4), because the model candidate sets are smaller, and due to the larger sample size $(700,900)$, the likelihood ratio tests are more accurate, so the intervals calculated under supermodels of $\tau_0$ are likely to be close to the interval under $\tau_0$. The sizes of the intervals calculated by the Debiased Lasso method are even shorter than that of the Oracle method, so are likely to be undercover. For the zero coefficients with $j\in[p]\setminus\tau_0$, Repro-Hinge, Repro-Logistic and Debiased Lasso all have coverage rates approaching 1, but the sizes corresponding to Repro-Hinge and Repro-Logistic are shorter than the sizes correspond to Debiased Lasso, the reason is Repro-Hinge and Repro-Logistic also make use of the uncertainty of the selected models. When no models in the candidate set contain $\beta_{0,j}$, we estimate $\beta_{0,j}$ by 0 with confidence 1.

\begin{table}[t]
    \centering
    \begin{tabular*}{\columnwidth}{@{\extracolsep\fill}lccccc@{\extracolsep\fill}}\hline
        &&\multicolumn{2}{c}{$\beta_{0,j},j\in\tau_0$}&\multicolumn{2}{c}{$\beta_{0,j},j\in[p]\setminus\tau_0$}\\\cline{3-4}\cline{5-6}
        Model & Method & Coverage & Length & Coverage & Length \\\hline
        M1 & Repro-Hinge & 0.94(0.19) & 3.20(1.23) & 1.00(0.00) & 0.01(0.00)\\
        & Repro-Logistic & 0.94(0.19) & 3.42(1.40) & 1.00(0.00) & 0.01(0.00)\\
        & Debias & 0.09(0.22) & 0.98(0.19) & 1.00(0.00) & 0.82(0.15)\\
        & Oracle & 0.90(0.23) & 2.23(0.50) & &\\
        M2 & Repro-Hinge & 0.94(0.18) & 2.55(0.72) & 1.00(0.00) & 0.00(0.00)\\
        & Repro-Logistic & 0.94(0.19) & 2.69(1.06) & 1.00(0.00) & 0.00(0.00)\\
        & Debias & 0.13(0.27) & 0.89(0.18) & 1.00(0.00) & 0.74(0.14)\\
        & Oracle & 0.93(0.20) & 1.98(0.34) & &\\
        M3 & Repro-Hinge & 0.97(0.11) & 1.95(0.43) & 1.00(0.00) & 0.00(0.00)\\
        & Repro-Logistic & 0.96(0.12) & 1.86(0.38) & 1.00(0.00) & 0.00(0.00)\\
        & Debias & 0.15(0.24) & 0.73(0.10) & 1.00(0.00) & 0.59(0.08)\\
        & Oracle & 0.94(0.14) & 1.53(0.21) & &\\
        M4 & Repro-Hinge & 0.96(0.14) & 1.52(0.24) & 1.00(0.00) & 0.00(0.00)\\
        & Repro-Logistic & 0.96(0.12) & 1.43(0.21) & 1.00(0.00) & 0.00(0.00)\\
        & Debias & 0.15(0.25) & 0.64(0.07) & 0.99(0.00) & 0.51(0.05)\\
        & Oracle & 0.94(0.17) & 1.30(0.16) & &\\\hline
    \end{tabular*}
    \caption{Comparison of performance of the confidence sets of $\beta_{0,j}$. Here ``Coverage" means the probability for $\Gamma_\alpha^{\beta_{0,j}}(\bm{X}^{obs},\bm{y}^{obs})$ to contain $\beta_{0,j}$, and ``Length" means the Lebesgue measure of $\Gamma_\alpha^{\beta_{0,j}}(\bm{X}^{obs},\bm{y}^{obs})$. The third and fourth columns correspond to $j\in\tau_0$, and the last two columns correspond to $j\in[p]\setminus\tau_0$.}
    \label{tab_betai_conf}
\end{table}

\subsubsection{Inference for \texorpdfstring{$\bm{\beta}_0$}{beta\_0}}

We also study the performance of the proposed method for simultaneous inference for the vector parameter ${\bm{\beta}}_0$. Since it is hard to calculate the Lebesgue measure of the confidence sets, we report only the coverage rates of Repro-Hinge, Repro-Logistic, and the Oracle method with known $\tau_0$ in Table \ref{tab_beta_conf}. From Table \ref{tab_beta_conf}, we can see the Repro-Hinge and Repro-Logistic have similar performance to that of the Oracle method, and the coverage rates are all close to the desired 0.95.

\begin{table}[t]
    \centering
    \begin{tabular*}{\columnwidth}{@{\extracolsep\fill}lccc@{\extracolsep\fill}}\hline
        & \multicolumn{3}{c}{Coverage}\\\cline{2-4}
         Models & Repro-Hinge & Repro-Logistic & Oracle \\\hline
         M1 & 0.98(0.14) & 0.96(0.20) & 0.98(0.14) \\
         M2 & 0.93(0.26) & 0.92(0.27) & 0.93(0.26)\\
         M3 & 0.93(0.26) & 0.91(0.29) & 0.94(0.24)\\
         M4 & 0.92(0.27) & 0.94(0.24) & 0.95(0.22)\\\hline
    \end{tabular*}
    \caption{Comparison of performance of the confidence sets of $\bm{\beta}_0$. Here ``Coverage" means the probability for $\Gamma_\alpha^{\bm{\beta}_0}(\bm{X}^{obs},\bm{y}^{obs})$ to contain $\bm{\beta}_0$.}
    \label{tab_beta_conf}
\end{table}

\subsubsection{Simultaneous inference for case probabilities}

To evaluate the empirical performance of our proposed method for simultaneous inference for case probabilities, we construct $\bm{X}_{\rm new}$ as follows. For (M1)-(M4), the number of new observations is consistently set to $n_{\rm new}=2~{\rm or}~2000$. Then for each of the four models, we generate $X_{{\rm new},i}\in\R^p$ to be i.i.d. random vectors from normal distribution $N(\bm{0},\Sigma_{\rm new})$ with the covariance matrix $\Sigma_{\rm new}$ satisfying $\Sigma_{{\rm new},ij}=0.3^{|i-j|}$. Since it is hard to measure the volume of the confidence sets, we instead report the coverage rates of Repro-Hinge, Repro-Logistic, and the Oracle method with known $\tau_0$ in Table~\ref{tab_case_conf}. The results in Table~\ref{tab_case_conf} reveal that both Repro-Hinge and Repro-Logistic have performance comparable to the Oracle method, with coverage rates close to the nominal value of 0.95. Notably, when $n_{\rm new}=2000>p$, it is typical that ${\rm rank}(\bm{X}_{\rm new})=p$. Therefore testing $\bm{X}_{\rm new}\bm{\beta}_0$ is equivalent to testing $\bm{\beta}_0$. And the coverages for $h(\bm{X}_{\rm new}\bm{\beta}_0)$, as listed in Table~\ref{tab_case_conf} are indeed the same as the coverages for $\bm{\beta}_0$ in Table~\ref{tab_beta_conf}.

\begin{table}[t]
    \centering
    \begin{tabular*}{\columnwidth}{@{\extracolsep\fill}lcccc@{\extracolsep\fill}}\hline
        & & \multicolumn{3}{c}{Coverage}\\\cline{3-5}
         $n_{\rm new}$ & Models & Repro-Hinge & Repro-Logistic & Oracle \\\hline
         2 & M1 & 0.97(0.17) & 0.97(0.17) & 0.93(0.26)\\
         & M2 & 0.96(0.20) & 0.97(0.17) & 0.93(0.26)\\
         & M3 & 0.95(0.22) & 0.94(0.24) & 0.95(0.22)\\
         & M4 & 0.94(0.24) & 0.94(0.24) & 0.92(0.27)\\
         2000 & M1 & 0.98(0.14) & 0.96(0.20) & 0.98(0.14) \\
         & M2 & 0.93(0.26) & 0.92(0.27) & 0.93(0.26)\\
         & M3 & 0.93(0.26) & 0.91(0.29) & 0.94(0.24)\\
         & M4 & 0.92(0.27) & 0.94(0.24) & 0.95(0.22)\\\hline
    \end{tabular*}
    \caption{Comparison of performance of the confidence sets of $h(\bm{X}_{\rm new}\bm{\beta}_0)$. Here ``Coverage" means the probability for $\Gamma_\alpha^{h(\bm{X}_{\rm new}\bm{\beta}_0)}(\bm{X}^{obs},\bm{y}^{obs})$ to contain $h(\bm{X}_{\rm new}\bm{\beta}_0)$.}
    \label{tab_case_conf}
\end{table}

\subsubsection{Inference for \texorpdfstring{$\beta_{0,j}$}{beta\_0,j} with weak signals}

In this subsection, we consider such a difficult setting 
with several weak signals and a small sample size:
\begin{itemize}
    \item[(M5)] $n=300, p=500, s=6, d=10000, {\bm{\beta}}_{0,\tau_0}=(5,4,3,1,0.5,0.2)^\top$.
\end{itemize}
Under this setting, the signal strength Assumption \ref{ass_signal} or \ref{ass_signal_strong} may be violated, and then the model candidate set is no longer guaranteed to contain the true model. This is evidenced in our simulation results, where the candidate sets obtained using either Repro-Logistic or Repro-Hinge fail to contain the true model and both have coverage rates equal to 0. However, we can still get desirable results for the single coefficient  $\beta_{0,j}$ using the augmented method proposed in Section \ref{sec_betaj} Remark \ref{rem_weak_signal}.

To make inference for $\beta_{0,j}$ with weak signals, we study the performance of the heuristic methods augmented Repro-Hinge and augmented Repro-Logistic proposed in Section \ref{sec_betaj} Remark \ref{rem_weak_signal} under Model (M5). In Table \ref{tab_betai_conf_weak}, we compare augmented Repro-Hinge and augmented Repro-Logistic with the Debiased Lasso method and the oracle likelihood ratio test with known $\tau_0$. The confidence intervals of the Oracle method for $\beta_{0,j}, j\in[p]\setminus\tau_0$ are calculated by performing the likelihood ratio test under the model $\tau_0\cup\{j\}$. In Table \ref{tab_betai_conf_weak}, we divide the coefficients into three groups, the first group is $j\in[3]$ which corresponds to the coefficients with large value $(5,4,3)^\top$, the second group is $j\in\tau_0\setminus[3]$ corresponding to small nonzero coefficients $(1,0.5,0.2)^\top$, and the last group $j\in[p]\setminus\tau_0$ contains all the zero coefficients. From Table \ref{tab_betai_conf_weak}, we find Repro-Hinge, and Repro-Logistic have the same performance with their augmented version augmented Repro-Hinge, and augmented Repro-Logistic when $j\in[3]$. Compared to the Oracle method, they have much better coverage and slightly larger sizes. This is due to the weak signals and small sample size. The model candidate set is large in (M5), therefore the confidence intervals produced by the repro samples method contain the intervals of the likelihood ratio test under the wrong models. Due to the small sample size, the chi-squared approximation of the likelihood ratio test is not accurate, so the intervals corresponding to the wrong models are wide and unstable, resulting in a large confidence set after taking the union. The small sample size is also the reason for the Oracle method to be undercover. The Debiased Lasso method still creates confidence intervals with too short lengths and, therefore is likely to be undercover. For the weak signals with $j\in\tau_0\setminus[3]$, the performances of Repro-Hinge and Repro-Logistic are poor since the model candidate set fails to identify these weak signals, however, the augmented Repro-Hinge and augmented Repro-Logistic methods have the desired coverages and the sizes are comparable to that of the Oracle method. The reason is when testing $\beta_{0,j}$, we artificially include $j$ into each candidate model, although the candidate models may still miss some other small nonzero coefficients, they do not have much influence on the data, therefore augmented Repro-Hinge and augmented Repro-Logistic are still able to cover $\beta_{0,j}$. For the zero coefficients with $j\in[p]\setminus\tau_0$, Repro-Hinge, Repro-Logistic, and Debiased Lasso all have coverage 1 while the sizes corresponding to Repro-Hinge and Repro-Logistic are nearly 0. The performances of augmented Repro-Hinge and augmented Repro-Logistic are more similar to that of the Oracle method. They all have coverages approaching 0.95 and comparable sizes.

\begin{table}[t]
\footnotesize
    \centering
    \begin{tabular*}{\columnwidth}{@{\extracolsep\fill}lccccccc@{\extracolsep\fill}}\hline
        & & \multicolumn{6}{c}{Methods}\\\cline{3-8}
        & & \multicolumn{2}{c}{Repro-Hinge} & \multicolumn{2}{c}{Repro-Logistic}\\\cline{3-4}\cline{5-6}
        Coefficients& & Non-augmented & Augmented & Non-augmented & Augmented & Debias & Oracle \\\hline
         $j\in[3]$ & Coverage & 0.98(0.14) & 0.98(0.14) & 0.98(0.13) & 0.98(0.13) & 0.11(0.32) & 0.89(0.32)\\
        & Length & 4.43(2.27) & 4.43(2.27) & 5.21(3.06) & 5.21(3.06) & 1.11(0.27) & 3.14(1.27)\\
         \multirow{2}*{$j\in\tau_0\setminus[3]$} & Coverage & 0.30(0.46) & 0.96(0.20) & 0.34(0.48) & 0.95(0.22) & 0.72(0.45) & 0.93(0.26)\\
         & Length & 0.61(1.00) & 1.71(0.61) & 0.83(1.26) & 1.99(1.13) & 0.89(0.20) & 1.30(0.27)\\
         \multirow{2}*{$j\in[p]\setminus\tau_0$} & Coverage & 1.00(0.00) & 0.98(0.01) & 1.00(0.00) & 0.98(0.10) & 1.00(0.00) & 0.94(0.01)\\
         & Length & 0.01(0.01) & 1.54(0.40) & 0.02(0.03) & 1.79(1.03) & 0.88(0.20) & 1.17(0.15)\\\hline
    \end{tabular*}
    \caption{Comparison of performance of the confidence sets of $\beta_{0,j}$ under model (M5). We consider the following 6 methods: Repro-Hinge, augmented Repro-Hinge, Repro-Logistic, augmented Repro-Logistic, Debiased Lasso, and the Oracle method with $\tau_0$ to be known. We use the terms non-augmented Repro-Hinge, and non-augmented Repro-Logistic to distinguish Repro-Hinge and Repro-Logistic, respectively, from their augmented counterparts. Here ``Coverage" means the probability for $\Gamma_\alpha^{\beta_{0,j}}(\bm{X}^{obs},\bm{y}^{obs})$ to contain $\beta_{0,j}$, and ``Length" means the Lebesgue measure of $\Gamma_\alpha^{\beta_{0,j}}(\bm{X}^{obs},\bm{y}^{obs})$.}
    \label{tab_betai_conf_weak}
\end{table}

\subsection{Real Data}

In this section, we consider a high-dimensional real data analysis. Note that most existing methods focus on statistical inference for single coefficients, but our method can also quantify the uncertainty of model selection. As will be demonstrated, the Debiased Lasso method identifies only one variable as significant. In contrast, our model confidence sets find several variables that have been shown as important by many existing studies.

Specifically, we apply the proposed repro samples method to the single-cell RNA-seq data from \cite{shalek2014single}. This data comprises gene expression profiles for 27723 genes across 1861 primary mouse bone-marrow-derived dendritic cells spanning several experimental conditions. Specifically, we focus on a subset of the data consisting of 96 cells stimulated by the pathogenic component PIC (viral-like double-stranded RNA) and 96 control cells without stimulation, with gene expressions measured six hours after stimulation. In our study, we label each cell with 0 and 1 to indicate ``unstimulated" and ``stimulated" statues, respectively. Our goal is to investigate the association between gene expressions and stimulation status. Similar to \cite{cai2021statistical}, we filter out genes that are not expressed in more than 80\% of the cells and discard the bottom 90\% genes with the lowest variances. Subsequently, we log-transform and normalize the gene expressions to have mean 0 and unit variance.

Using the same parameter tuning strategy as detailed in Section \ref{sec_simulation}, Repro-Hinge and Repro-Logistic identify 7 and 12 models, respectively, in both the model candidate and model confidence sets. We list all models within the model confidence sets in Table \ref{tab_real}. Most of the identified genes have been previously associated with immune systems. RSAD2 is involved in antiviral innate immune responses, and is also a powerful stimulator of adaptive immune response mediated via mDCs \citep{jang2018rsad2}. IFIT1 inhibits viral replication by binding viral RNA that carries PPP-RNA \citep{pichlmair2011ifit1}. IFT80 is known to be an essential component for the development and maintenance of motile and sensory cilia \citep{wang2018ift80}, while ciliary machinery is repurposed by T cell to focus the signaling protein LCK at immune synapse \citep{stephen2018ciliary}. BC044745 has been identified as significant in MRepro-Logistic/MpJ mouse, which exhibits distinct gene expression patterns involved in immune response \citep{podolak2015transcriptome}. ACTB has shown associations with immune cell infiltration, immune checkpoints, and other immune modulators in most cancers \citep{gu2021pan}. HMGN2 has been validated to play an important role in the innate immune system during pregnancy and development in mice \citep{deng2012chromosomal}. Finally, IFI47, also known as IRG47, has been proven to be vital for immune defense against protozoan and bacterial infections \citep{collazo2001inactivation}.

Regarding confidence sets for individual genes, we compare the proposed Repro-Hinge and Repro-Logistic methods with the debaised approach. 
Repro-Hinge identifies RSAD2 and AK217941 as significant, while both Repro-Logistic and Debiased Lasso only identify RSAD2 as significant. 
While RSAD2 plays an important role in antiviral innate immune responses, AK217941, though not studied in the literature, deserves further attention as it has been identified in both model confidence sets and single coefficient confidence sets.

\begin{table}
    \centering
    \begin{tabular*}{\columnwidth}{@{\extracolsep\fill}lccccccccccccccccccc@{\extracolsep\fill}}\hline
        & \multicolumn{19}{c}{Methods}\\\cline{2-20}
        Genes & \multicolumn{7}{c}{Repro-Hinge} & \multicolumn{12}{c}{Repro-Logistic} \\\cline{1-1}\cline{2-8}\cline{9-20}
         RSAD2 & \solidcirc & \solidcirc & \solidcirc & \solidcirc & \solidcirc & \solidcirc & \solidcirc & \hollowcirc & \hollowcirc & \hollowcirc & \hollowcirc & \hollowcirc & \hollowcirc & \hollowcirc & \hollowcirc & \hollowcirc & \hollowcirc & \hollowcirc & \hollowcirc\\
         AK217941 & \solidcirc & \solidcirc & \solidcirc & \solidcirc & \solidcirc & \solidcirc & \solidcirc & \hollowcirc & \hollowcirc & \hollowcirc & \hollowcirc & & \hollowcirc & \hollowcirc & \hollowcirc & \hollowcirc & \hollowcirc & \hollowcirc & \hollowcirc\\
         IFIT1 & &&&& \solidcirc& \solidcirc & &\hollowcirc &\hollowcirc &  &\hollowcirc &  &\hollowcirc &  &\hollowcirc  &\hollowcirc  &\hollowcirc &\hollowcirc &\hollowcirc\\
         IFT80 &&&&&&&&& \hollowcirc & & \hollowcirc & & \hollowcirc& \hollowcirc& \hollowcirc& \hollowcirc& \hollowcirc& \hollowcirc& \hollowcirc\\
         BC044745 & & & \solidcirc & \solidcirc & \solidcirc & \solidcirc & \solidcirc & &&& \hollowcirc & & && \hollowcirc & & & & \hollowcirc\\
         ACTB & & \solidcirc & \solidcirc & & & \solidcirc & \solidcirc & & & & & & & & & & & \hollowcirc &\hollowcirc\\
        HMGN2 & & & & & & & \solidcirc & & & & & & & & \hollowcirc & \hollowcirc & & & \\
         IFI47 & & & & & & & & & & & & & \hollowcirc & & & & & &\\
         ZFP488 & & & & & & & & & & & & & & & & & \hollowcirc & &\\\hline
    \end{tabular*}
    \caption{All the models in the model confidence sets. Each row stands for a gene while each column corresponds to a model. The circle in the $i$-th row and $j$-th column indicates that the $i$-th gene appears in the $j$-th model.}
    \label{tab_real}
\end{table}

\section{Conclusions and Discussions}\label{sec:disc}

In this article, we develop a novel method for statistical inference for both the model support and regression coefficients in high-dimensional logistic regression models based on the repro samples method. For model support, we provide the first effective approach for constructing model confidence sets, assuming only a weak signal strength condition. For the regression coefficients, we develop a comprehensive approach for inference for any group of linear combinations of coefficients without relying on data splitting, sure screening, or sparse Hessian matrix assumptions. Our simulation results demonstrate that the proposed method produces valid and small model confidence sets and achieves better coverage for regression coefficients than the debiasing methods. It is worth noting that under the proposed framework, the logistic distribution of the random noise $\epsilon$ is not crucial, 
so our method could be extended to other generalized linear models with binary outcomes. 

To ensure the coverage of model candidate sets, we introduce a signal strength condition. However, as discussed in Remark \ref{rem_weak_signal}, in the presence of weak signals, the constructed model candidate sets can only be guaranteed to contain at least one model with distribution close to the true data generating distribution. An interesting direction for future exploration would be to devise a method for constructing model candidate sets that contain the true model without relying on any signal strength condition.

\bibliography{reference}
\bibliographystyle{plainnat}

\appendix

\section{Proofs}\label{sec_proof}
This section includes all the proofs of the theoretical results in previous sections.

\subsection{Connection to $\beta_{\rm min}$}
\begin{Lemma}\label{lem_cmin}
For any $\tau_1,\tau_2\subset[p], {\bm{\beta}}_1\in\R^{\abs{\tau_1}},{\bm{\beta}}_2\in\R^{\abs{\tau_2}}$, we have
\[\Prob(\mathbbm{1}\{X_{\tau_1}^\top{\bm{\beta}}_1+\epsilon>0\}\ne\mathbbm{1}\{X_{\tau_2}^\top{\bm{\beta}}_2+\epsilon>0\})= {\rm TV}(\Prob_{X,Y|\tau_1,{\bm{\beta}}_{1}},\Prob_{X,Y|\tau_2,{\bm{\beta}}_{2}}).\]

\end{Lemma}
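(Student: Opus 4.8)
The plan is to condition on $X$ and exploit the fact that both indicators are driven by the \emph{same} logistic noise $\epsilon$. Write $t_i = X_{\tau_i}^\top \bm{\beta}_i$ and let $p_i(X) = \Prob\big(\1\{X_{\tau_i}^\top\bm{\beta}_i + \epsilon > 0\} = 1 \mid X\big)$. Since $\epsilon \sim {\rm Logistic}$ is symmetric with CDF $F_\epsilon(z) = (1+e^{-z})^{-1}$, we have $\Prob(\epsilon > -t \mid X) = 1 - F_\epsilon(-t) = F_\epsilon(t)$, so $p_i(X) = F_\epsilon(t_i)$. First I would observe that, conditionally on $X$, both $\1\{\epsilon > -t_1\}$ and $\1\{\epsilon > -t_2\}$ are nondecreasing step functions of the common variable $\epsilon$ with jump points $-t_1$ and $-t_2$. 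Hence they disagree precisely when $\epsilon$ lies strictly between the two thresholds. Taking (without loss of generality) $t_1 \ge t_2$, the disagreement event is $\{-t_1 < \epsilon \le -t_2\}$, whose conditional probability is
\[
F_\epsilon(-t_2) - F_\epsilon(-t_1) = F_\epsilon(t_1) - F_\epsilon(t_2) = \abs{p_1(X) - p_2(X)},
\]
using $F_\epsilon(-t) = 1 - F_\epsilon(t)$ and the monotonicity of $F_\epsilon$. Since $\epsilon$ is continuous, the boundary carries no mass and strict versus nonstrict inequalities are immaterial.

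Averaging this conditional identity over $X \sim \Prob_X$ gives the left-hand side,
\[
\Prob\big(\1\{X_{\tau_1}^\top\bm{\beta}_1 + \epsilon > 0\} \ne \1\{X_{\tau_2}^\top\bm{\beta}_2 + \epsilon > 0\}\big) = \E\,\abs{F_\epsilon(X_{\tau_1}^\top\bm{\beta}_1) - F_\epsilon(X_{\tau_2}^\top\bm{\beta}_2)} = \E\,\abs{p_1(X) - p_2(X)}.
\]

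For the right-hand side I would use that both joint laws $\Prob_{X,Y|\tau_1,\bm{\beta}_1}$ and $\Prob_{X,Y|\tau_2,\bm{\beta}_2}$ share the common marginal $\Prob_X$ and have conditional laws $Y \mid X \sim {\rm Bernoulli}(p_i(X))$. Applying the total variation formula for two measures with a shared marginal and discrete $Y$,
\[
{\rm TV}(\Prob_{X,Y|\tau_1,\bm{\beta}_1}, \Prob_{X,Y|\tau_2,\bm{\beta}_2}) = \tfrac{1}{2}\,\E \sum_{y \in \{0,1\}} \abs{\Prob_1(Y=y\mid X) - \Prob_2(Y=y\mid X)},
\]
and for each fixed $X$ the inner sum equals $\abs{p_1 - p_2} + \abs{(1-p_1)-(1-p_2)} = 2\abs{p_1 - p_2}$, so the right-hand side also equals $\E\,\abs{p_1(X) - p_2(X)}$. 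Matching the two expressions completes the proof.

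There is no genuine obstacle here; the argument is a short exact computation. The only points requiring care are (i) invoking the symmetry $F_\epsilon(-t) = 1 - F_\epsilon(t)$ correctly to identify $p_i(X) = F_\epsilon(t_i)$ and the disagreement interval, and (ii) justifying the shared-marginal total variation identity (equivalently, noting that the $X$-densities cancel so that only the conditional Bernoulli masses contribute). Both are elementary, and the essential insight is simply that coupling the two indicators through the same $\epsilon$ converts the misclassification probability into the $L^1$ distance between the conditional success probabilities, which is exactly the total variation distance.
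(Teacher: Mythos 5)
Your proposal is correct and follows essentially the same route as the paper's proof: condition on $X$, identify the disagreement event as $\epsilon$ falling between the two thresholds so that its conditional probability is $\abs{F_\epsilon(-X_{\tau_1}^\top\bm{\beta}_1)-F_\epsilon(-X_{\tau_2}^\top\bm{\beta}_2)}$, and then recognize the resulting expectation $\E\abs{p_1(X)-p_2(X)}$ as the total variation distance between the two joint laws sharing the marginal $\Prob_X$. You are somewhat more explicit than the paper about the logistic symmetry $F_\epsilon(-t)=1-F_\epsilon(t)$ and about justifying the shared-marginal TV identity, but these are the same steps the paper takes implicitly.
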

\begin{proof}

Denote $F_\epsilon$ to be the CDF of $\epsilon$, we have
\begin{align*}
    &\Prob(\mathbbm{1}\{X_{\tau_1}^\top{\bm{\beta}}_1+\epsilon>0\}\ne\mathbbm{1}\{X_{\tau_2}^\top{\bm{\beta}}_2+\epsilon>0\})\\
    =&\E\Prob(-X_{\tau_1}^\top{\bm{\beta}}_{1}<\epsilon\le-X_{\tau_2}^\top{\bm{\beta}}_2|X)+\E\Prob(-X_{\tau_2}^\top{\bm{\beta}}_{2}<\epsilon\le-X_{\tau_1}^\top{\bm{\beta}}_1|X)\\
    =&\E\left|F_\epsilon(-X_{\tau_1}^\top{\bm{\beta}}_1)-F_\epsilon(-X_{\tau_2}^\top{\bm{\beta}}_2)\right|\\
    =&\E\abs{\Prob_{Y|X,(\tau_1,{\bm{\beta}}_1)}(Y=1|X)-\Prob_{Y|X,(\tau_2,{\bm{\beta}}_2)}(Y=1|X))}\\
    =&{\rm TV}(\Prob_{(\tau_1,{\bm{\beta}}_1)},\Prob_{(\tau_2,{\bm{\beta}}_2)}).
\end{align*}

\end{proof}

\begin{Lemma}\label{lem_betamin}
    Denote $\beta_{\min}=\min_{j\in\tau_0}\abs{\beta_{0,j}}$. Assume $\norm{X}_{\psi_2}\le\xi$, $\norm{{\bm{\beta}}_0}_2\le B$ and $\lambda_{\min}(\E XX^\top)\ge C$ with $\xi,B$ and $C$ are positive constants,
    then
    \[\inf_{\abs{\tau}\le s,\tau\ne\tau_0,{\bm{\beta}}_\tau\in\R^{\abs{\tau}}}\frac{{\rm TV}(\Prob_{\bm{\theta}_0},\Prob_{(\tau,{\bm{\beta}}_\tau)})}{\sqrt{\abs{\tau_0\setminus\tau}}}\gtrsim\beta_{\min}.\]
\end{Lemma}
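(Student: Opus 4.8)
The plan is to start from the exact total-variation identity in Lemma~\ref{lem_cmin}. Writing $u=X_{\tau_0}^\top{\bm{\beta}}_{0,\tau_0}$ and $v=X_\tau^\top{\bm{\beta}}_\tau$, Lemma~\ref{lem_cmin} together with the symmetry $F_\epsilon(-x)=1-F_\epsilon(x)$ of the logistic CDF gives ${\rm TV}(\Prob_{{\bm{\theta}}_0},\Prob_{(\tau,{\bm{\beta}}_\tau)})=\E\abs{F_\epsilon(u)-F_\epsilon(v)}$, so it suffices to lower bound this quantity uniformly over the adversarial nuisance ${\bm{\beta}}_\tau$. The one algebraic observation driving everything is that $u-v=X^\top\theta$, where the coordinates of $\theta$ on $\tau_0\setminus\tau$ coincide with ${\bm{\beta}}_{0,\tau_0\setminus\tau}$ and cannot be altered by ${\bm{\beta}}_\tau$ since $(\tau_0\setminus\tau)\cap\tau=\emptyset$. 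Hence, using $\lambda_{\min}(\E XX^\top)\ge C$, one gets $\E(u-v)^2=\theta^\top(\E XX^\top)\theta\gtrsim\norm{\theta}_2^2\ge\norm{{\bm{\beta}}_{0,\tau_0\setminus\tau}}_2^2\ge\abs{\tau_0\setminus\tau}\,\beta_{\min}^2$ for \emph{every} ${\bm{\beta}}_\tau$. I also record for later that $\beta_{\min}\sqrt{\abs{\tau_0\setminus\tau}}\le\norm{{\bm{\beta}}_{0,\tau_0\setminus\tau}}_2\le B$ is bounded by a constant.

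Next I would obtain a pointwise lower bound on the integrand that degrades gracefully in the tails of $F_\epsilon$. Using $f_\epsilon(t)\ge\tfrac14 e^{-\abs t}$, a mean value argument when $\abs{u-v}\le1$ and monotonicity of $F_\epsilon$ (a full unit increment) when $\abs{u-v}>1$ together yield $\abs{F_\epsilon(u)-F_\epsilon(v)}\gtrsim e^{-\abs u}\min\{\abs{u-v},1\}$. The weight $e^{-\abs u}$ is then stripped off by Cauchy--Schwarz,
\[\E\big[e^{-\abs u}\min\{\abs{u-v},1\}\big]\ge\frac{\big(\E\sqrt{\min\{\abs{u-v},1\}}\big)^2}{\E e^{\abs u}},\]
where $\E e^{\abs u}\le e^{cB^2\xi^2}$ is a constant by the sub-Gaussian MGF bound, since $\norm{u}_{\psi_2}\le\xi\norm{{\bm{\beta}}_0}_2\le\xi B$. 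This reduces the problem to showing $\E\sqrt{\min\{\abs{u-v},1\}}\gtrsim\big(\beta_{\min}\sqrt{\abs{\tau_0\setminus\tau}}\big)^{1/2}$.

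The final step is anti-concentration of $u-v$. The crucial point is that $\lambda_{\min}(\E XX^\top)\ge C$ and $\norm{X}_{\psi_2}\le\xi$ force $\norm{u-v}_{\psi_2}\le\xi\norm\theta_2\lesssim\sqrt{\E(u-v)^2}$, i.e.\ the sub-Gaussian norm of $u-v$ is comparable to its standard deviation \emph{uniformly} in ${\bm{\beta}}_\tau$; consequently $\E(u-v)^4\lesssim(\E(u-v)^2)^2$, and Paley--Zygmund gives $\Prob\big(\abs{u-v}\ge\tfrac12\sqrt{\E(u-v)^2}\big)\gtrsim1$. On that event $\sqrt{\min\{\abs{u-v},1\}}\ge\sqrt{\min\{\tfrac12\sqrt{\E(u-v)^2},1\}}$, and since $\sqrt{\E(u-v)^2}\gtrsim\beta_{\min}\sqrt{\abs{\tau_0\setminus\tau}}$ is bounded above by a constant (the observation from the first paragraph), the truncation at $1$ is harmless and one obtains $\E\sqrt{\min\{\abs{u-v},1\}}\gtrsim\big(\beta_{\min}\sqrt{\abs{\tau_0\setminus\tau}}\big)^{1/2}$. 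Squaring and chaining back through the two previous displays gives ${\rm TV}\gtrsim\beta_{\min}\sqrt{\abs{\tau_0\setminus\tau}}$, and dividing by $\sqrt{\abs{\tau_0\setminus\tau}}$ and taking the infimum over $\tau$ and ${\bm{\beta}}_\tau$ proves the lemma.

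The main obstacle is that ${\bm{\beta}}_\tau$ is completely unconstrained, so $v$ can be arbitrarily dispersed while $F_\epsilon$ is nearly flat in its tails; a crude localization to $\{\abs u\le M\}$ fails because the discarded probability is a fixed constant that can dominate the target $\beta_{\min}\sqrt{\abs{\tau_0\setminus\tau}}$, which is itself small. What rescues the argument, and keeps every bound uniform in the adversary, is the $\lambda_{\min}$-driven equivalence of the $\psi_2$-norm and the standard deviation of $X^\top\theta$: it simultaneously produces the forced variance lower bound and the fourth-moment control needed for Paley--Zygmund. This is also why I would avoid conditioning on $X_{[p]\setminus(\tau_0\setminus\tau)}$, which looks natural but can be fatal for non-Gaussian sub-Gaussian $X$, whose conditional variance of $X_{\tau_0\setminus\tau}^\top{\bm{\beta}}_{0,\tau_0\setminus\tau}$ may degenerate even when $\lambda_{\min}(\E XX^\top)\ge C$.
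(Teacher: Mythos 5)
Your proposal is correct and follows essentially the same route as the paper's proof: both start from the total-variation identity of Lemma~\ref{lem_cmin}, lower bound $\abs{F_\epsilon(u)-F_\epsilon(v)}$ pointwise by a logistic-density factor times a truncated $\abs{u-v}$, strip the exponential weight by the Cauchy--Schwarz trick $\E[Z/D]\ge(\E\sqrt{Z})^2/\E D$ with $\E D\lesssim e^{cB^2\xi^2}$, and finish with anti-concentration of $X^\top\theta$ driven by $\lambda_{\min}(\E XX^\top)\ge C$. The only substantive differences are that you truncate at $1$ rather than at $\abs{u}$ (so you need anti-concentration only for $u-v$, not for $u$ as well) and that you make the anti-concentration step explicit via Paley--Zygmund with the fourth-moment bound $\E(u-v)^4\lesssim\xi^4\norm{\theta}_2^4\lesssim(\E(u-v)^2)^2$ --- which in fact supplies the justification that the paper's displayed small-ball bound $\Prob(\abs{X^\top{\bm{\beta}}}\le\sqrt{C}/2)\le C^2/(4{\bm{\beta}}^\top\E XX^\top{\bm{\beta}})$ tacitly relies on.
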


\begin{proof}
    
    By Lemma \ref{lem_cmin},
    \begin{align*}
        &{\rm TV}(\Prob_{\bm{\theta}_0},\Prob_{(\tau,{\bm{\beta}}_\tau)})\\
        =&\E\abs{\frac{1}{1+e^{-X_{\tau_0}^\top{\bm{\beta}}_{0,\tau_0}}}-\frac{1}{1+e^{-X_\tau^\top{\bm{\beta}}_\tau}}}\\
    =&\E\left|\dfrac{1}{1+e^{-X_{\tau_0}^\top{\bm{\beta}}_{0,\tau_0}}}-\dfrac{1}{1+e^{-X_\tau^\top{\bm{\beta}}_\tau}}\right|\{\mathbbm{1}\{|X_{\tau}^\top{\bm{\beta}}_\tau|\le2|X_{\tau_0}^\top{\bm{\beta}}_{0,\tau_0}|\}+\mathbbm{1}\{|X_{\tau}^\top{\bm{\beta}}_\tau|>2|X_{\tau_0}^\top{\bm{\beta}}_{0,\tau_0}|\}\}\\
    \ge &\E|X_{\tau_0}^\top{\bm{\beta}}_{0,\tau_0}-X_\tau^\top{\bm{\beta}}_\tau|\dfrac{e^{-2|X_{\tau_0}^\top{\bm{\beta}}_{0,\tau_0}|}}{(1+e^{-2|X_{\tau_0}^\top{\bm{\beta}}_{0,\tau_0}|})^2}\mathbbm{1}\{|X_{\tau}^\top{\bm{\beta}}_\tau|\le2|X_{\tau_0}^\top{\bm{\beta}}_{0,\tau_0}|\}\\
    &+\E|X_{\tau_0}^\top{\bm{\beta}}_{0,\tau_0}|\dfrac{e^{-2|X_{\tau_0}^\top{\bm{\beta}}_{0,\tau_0}|}}{(1+e^{-2|X_{\tau_0}^\top{\bm{\beta}}_{0,\tau_0}|})^2}\mathbbm{1}\{|X_{\tau}^\top{\bm{\beta}}_\tau|>2|X_{\tau_0}^\top{\bm{\beta}}_{0,\tau_0}|\}\\
    \ge&\dfrac{(\E\min\{|X_{\tau_0}^\top{\bm{\beta}}_{0,\tau_0}-X_\tau^\top{\bm{\beta}}_\tau|,|X_{\tau_0}^\top{\bm{\beta}}_{0,\tau_0}|\}^{1/2})^2}{\E(1+e^{-2|X_{\tau_0}^\top{\bm{\beta}}_{0,\tau_0}|})^2e^{2|X_{\tau_0}^\top{\bm{\beta}}_{0,\tau_0}|}}\\
    \ge&\beta_{\min}\inf_{\abs{\tau}\le s,\tau\ne\tau_0,{\bm{\beta}}_\tau\in\R^{\abs{\tau}}}\left(\E\min\left\{\abs{X_{\tau_0}^\top\frac{{\bm{\beta}}_{0,\tau_0}}{\beta_{\min}}-X_\tau^\top{\bm{\beta}}_\tau},\abs{X_{\tau_0}^\top\frac{{\bm{\beta}}_{0,\tau_0}}{\beta_{\min}}}\right\}^{1/2}\right)^2e^{-c\|{\bm{\beta}}_0\|_2^2\xi^2}/4.
    \end{align*}
    For $\tau\ne\tau_0,\abs{\tau}\le s$, there exists $b\in\R^p,\norm{b}_0\le 2s$ such that $X_{\tau_0}^\top\frac{{\bm{\beta}}_{0,\tau_0}}{\beta_{\min}}-X_\tau^\top{\bm{\beta}}_\tau=X^\top b$. For any $j\in\tau_0\setminus\tau$, we have $\abs{b_j}=\frac{\abs{\beta_{0,j}}}{\beta_{\min}}\ge 1$, therefore $\norm{b}_2\ge \sqrt{\abs{\tau_0\setminus\tau}}$. Similarly $\norm{\frac{{\bm{\beta}}_{0,\tau_0}}{\beta_{\min}}}_2\ge \sqrt{s}$. Note that 
    \begin{align*}
        \sup_{\norm{{\bm{\beta}}}_0\le 2s,\norm{{\bm{\beta}}}_1=1}\Prob(\abs{X^\top{\bm{\beta}}}\le\frac{\sqrt{C}}{2})\le\sup_{\norm{{\bm{\beta}}}_0\le 2s,\norm{{\bm{\beta}}}_1=1}\frac{C^2}{4{\bm{\beta}}^\top\E XX^\top{\bm{\beta}}}\le\frac{1}{4}.
    \end{align*}
    Then
    \begin{align*}
        &\inf_{\abs{\tau}\le s,\tau\ne\tau_0,{\bm{\beta}}_\tau\in\R^{\abs{\tau}}}\E\min\left\{\abs{X_{\tau_0}^\top\frac{{\bm{\beta}}_{0,\tau_0}}{\beta_{\min}}-X_\tau^\top{\bm{\beta}}_\tau},\abs{X_{\tau_0}^\top\frac{{\bm{\beta}}_{0,\tau_0}}{\beta_{\min}}}\right\}^{1/2}\\
        \ge&\inf_{\abs{\tau}\le s,\tau\ne\tau_0,{\bm{\beta}}_\tau\in\R^{\abs{\tau}}}\E\min\left\{\abs{X_{\tau_0}^\top\frac{{\bm{\beta}}_{0,\tau_0}}{\beta_{\min}}-X_\tau^\top{\bm{\beta}}_\tau},\abs{X_{\tau_0}^\top\frac{{\bm{\beta}}_{0,\tau_0}}{\beta_{\min}}}\right\}^{1/2}\\
        &\qquad\cdot\1\left\{\abs{X_{\tau_0}^\top\frac{{\bm{\beta}}_{0,\tau_0}}{\beta_{\min}}-X_\tau^\top{\bm{\beta}}_\tau}>\sqrt{\abs{\tau_0\setminus\tau}}\frac{\sqrt{C}}{2},\abs{X_{\tau_0}^\top\frac{{\bm{\beta}}_{0,\tau_0}}{\beta_{\min}}}>\sqrt{s}\frac{\sqrt{C}}{2}\right\}\\
        \ge&\frac{C^{1/4}}{\sqrt{2}}\abs{\tau_0\setminus\tau}^{1/4}\inf_{\abs{\tau}\le s,\tau\ne\tau_0,{\bm{\beta}}_\tau\in\R^{\abs{\tau}}}\bigg(1-\Prob\bigg(\abs{X_{\tau_0}^\top\frac{{\bm{\beta}}_{0,\tau_0}}{\beta_{\min}}-X_\tau^\top{\bm{\beta}}_\tau}\le \sqrt{\abs{\tau_0\setminus\tau}}\frac{\sqrt{C}}{2}\bigg)\\
        &-\Prob\bigg(\abs{X_{\tau_0}^\top\frac{{\bm{\beta}}_{0,\tau_0}}{\beta_{\min}}}\le \sqrt{s}\frac{\sqrt{C}}{2}\bigg)\bigg)\\
        \ge&\frac{C^{1/4}}{\sqrt{2}}\abs{\tau_0\setminus\tau}^{1/4}(1-2\sup_{\norm{{\bm{\beta}}}_0\le 2s,\norm{{\bm{\beta}}}_2=1}\Prob(|X^\top{\bm{\beta}}|\le \frac{\sqrt{C}}{2}))\\
        \ge& \frac{C^{1/4}}{2\sqrt{2}}\abs{\tau_0\setminus\tau}^{1/4}.
    \end{align*}
    Combining terms completes the proof.
\end{proof}

\subsection{Proofs in Section~\ref{sec_candidate}}

The following lemma follows from the Fundamental Theorem of Learning Theory \citep{shalev2014understanding}
\begin{Lemma}\label{lem_pac_upper}
For any $\tau\subset[p]$, we have
\begin{align*}
    &\Prob(\exists{\bm{\beta}}_\tau\in\R^{\abs{\tau}},\sigma\ge 0 ~{\rm s.t.}~ L_n(\tau,{\bm{\beta}}_\tau,\sigma|\bm{X},\bm{y},\bm{\epsilon})=0, L_{\bm{\theta}_0}(\tau,{\bm{\beta}}_\tau,\sigma)\ge\eta)\\
    \le&(1-e^{-\frac{n\eta}{8}})^{-1}\bigg\{2^{\abs{\tau}+1}\vee\bigg(\dfrac{2en}{\abs{\tau}+1}\bigg)^{\abs{\tau}+1}\bigg\}2^{-\frac{n\eta}{2}}.
\end{align*}
\end{Lemma}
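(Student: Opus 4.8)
The plan is to recognize the stated quantity as a \emph{realizable-case} uniform deviation bound and to prove it by the classical two-stage symmetrization argument from VC theory, with the ghost-sample step sharpened by a multiplicative Chernoff bound to produce the factor $(1-e^{-n\eta/8})^{-1}$. First I would reframe the problem as binary classification: writing $\tilde X_i=(X_{i,\tau},\epsilon_i)\in\R^{\abs{\tau}+1}$ and regarding $y_i\in\{0,1\}$ as the label, each pair $({\bm{\beta}}_\tau,\sigma)$ with $\sigma\ge 0$ induces the classifier $\tilde X\mapsto\1\{\tilde X^\top({\bm{\beta}}_\tau,\sigma)>0\}$, and $L_n$, $L_{\bm{\theta}_0}$ are exactly its empirical and population misclassification errors (the labels $y_i$ need not be deterministic functions of $\tilde X_i$, which is harmless for VC theory). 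The class $\HH$ of such classifiers is a subset of the homogeneous halfspaces in $\R^{\abs{\tau}+1}$, so its VC dimension is at most $\abs{\tau}+1$; consequently, by Sauer--Shelah its growth function obeys $\Pi_{\HH}(m)\le\sum_{i=0}^{\abs{\tau}+1}\binom{m}{i}\le 2^{\abs{\tau}+1}\vee\big(\tfrac{em}{\abs{\tau}+1}\big)^{\abs{\tau}+1}$, and the error indicators inherit the same growth function since flipping by the fixed bit $y_i$ does not change the number of realizable patterns.

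Denote by $A$ the event in the statement, namely that some $h\in\HH$ has empirical error $\hat R_n(h)=0$ but population error $R(h)\ge\eta$. I would introduce an independent ghost sample $(\bm{X}',\bm{y}',\bm{\epsilon}')$ of size $n$ with empirical error $\hat R'_n(\cdot)$, and define $B$ as the event that some $h\in\HH$ has $\hat R_n(h)=0$ and $\hat R'_n(h)\ge\eta/2$. Fixing a measurable selection $h^\ast$ of a witness of $A$ (depending only on the first sample, hence independent of the ghost sample), the multiplicative Chernoff bound applied to the $n$ i.i.d.\ $\mathrm{Bernoulli}(R(h^\ast))$ ghost error indicators gives, on $A$, $\Prob\big(\hat R'_n(h^\ast)<\tfrac12 R(h^\ast)\mid h^\ast\big)\le e^{-nR(h^\ast)/8}\le e^{-n\eta/8}$. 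Since $R(h^\ast)\ge\eta$ forces $\{\hat R'_n(h^\ast)\ge R(h^\ast)/2\}\subseteq\{\hat R'_n(h^\ast)\ge\eta/2\}\subseteq B$, taking expectations yields $\Prob(B)\ge(1-e^{-n\eta/8})\Prob(A)$, i.e.\ $\Prob(A)\le(1-e^{-n\eta/8})^{-1}\Prob(B)$.

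It remains to bound $\Prob(B)$ by the permutation (random-swap) argument. Conditioning on the unordered multiset $S$ of the $2n$ points and using that swapping each pair $(Z_i,Z'_i)$ leaves the joint law invariant, I would argue that for any fixed $h$ with error set $E$ on these $2n$ points, the event $\{\hat R_n(h)=0\}$ requires every error in $E$ to be assigned to the ghost half, which has conditional probability $0$ if some pair contributes two errors and otherwise $2^{-\abs{E}}$; on $B$ we have $\abs{E}\ge n\eta/2$, so this probability is at most $2^{-n\eta/2}$. Because two classifiers with the same error pattern on $S$ are indistinguishable here, a union bound over the at most $\Pi_{\HH}(2n)$ realizable patterns gives $\Prob(B\mid S)\le\Pi_{\HH}(2n)\,2^{-n\eta/2}$; taking expectations and substituting Sauer--Shelah at $m=2n$ produces the factor $2^{\abs{\tau}+1}\vee(\tfrac{2en}{\abs{\tau}+1})^{\abs{\tau}+1}$. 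Combining the two displays yields the claim. The main obstacle, and the place to be careful, is the symmetrization step: one must ensure the witness $h^\ast$ is selected measurably from the first sample only, so that the ghost-sample Chernoff bound applies conditionally, and one must set up the paired swaps so that ``zero training error'' genuinely forces all errors into the ghost half, which is exactly what produces the $2^{-\abs{E}}$ factor.
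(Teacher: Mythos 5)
Your proposal is correct and follows essentially the same two-stage symmetrization argument as the paper: a ghost sample with a multiplicative Chernoff bound producing the factor $(1-e^{-n\eta/8})^{-1}$, followed by Sauer's lemma and a random-partition argument giving $\{2^{\abs{\tau}+1}\vee(\tfrac{2en}{\abs{\tau}+1})^{\abs{\tau}+1}\}2^{-n\eta/2}$. The only cosmetic difference is that the paper randomly splits the $2n$ points into two equal halves (bounding $\binom{n}{m}/\binom{2n}{m}\le 2^{-m}$) while you use independent pairwise swaps (bounding by $2^{-\abs{E}}$); both are standard and yield the same bound.
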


\begin{proof}[Proof of Lemma \ref{lem_pac_upper}]
Suppose we have another sample $\tilde S=\{(\tilde X_i,\tilde \epsilon_i,\tilde Y_i):i\in[n]\}$ that is i.i.d. with $S=\{(X_i^{obs},\epsilon_i^{rel},y_i^{obs}):i\in[n]\}$. Denote
\[A=\{\exists{\bm{\beta}}_\tau\in\R^{\abs{\tau}},\sigma\ge 0 ~{\rm s.t.}~ L_n(\tau,{\bm{\beta}}_\tau,\sigma|\bm{X},\bm{y},\bm{\epsilon})=0, L_{\bm{\theta}_0}(\tau,{\bm{\beta}}_\tau,\sigma)\ge\eta\},\]
\[B=\{\exists{\bm{\beta}}_\tau\in\R^{\abs{\tau}},\sigma\ge 0 ~{\rm s.t.}~ L_n(\tau,{\bm{\beta}}_\tau,\sigma|\bm{X},\bm{y},\bm{\epsilon})=0, L_n(\tau,{\bm{\beta}}_\tau,\sigma|\tilde{\bm{X}},\tilde{\bm{y}},\tilde{\bm{\epsilon}})\ge\dfrac{\eta}{2}\}.\]
Conditioning on event $A$, we denote $\hat{\bm{\beta}}_\tau\in R^{\abs{\tau}},\hat\sigma\ge 0$ to be the coefficients satisfy $A$. Given $S$ and $A$, $\mathbbm{1}\{\tilde Y\ne\mathbbm{1}\{\tilde X_\tau^\top\hat{\bm{\beta}}_{\tau}+\hat\sigma\tilde \epsilon>0\}\}$ is a Bernoulli random variable with parameter $\rho=L_{\bm{\theta}_0}(\tau,\hat{\bm{\beta}}_\tau,\hat\sigma)\ge\eta$, using Chernoff bound in multiplicative form \citep{hoeffding1994probability}, we have
\[\Prob(B^c|A)\le\Prob(L_n(\tau,\hat{\bm{\beta}}_\tau,\hat\sigma|\tilde{\bm{X}},\tilde{\bm{Y}},\tilde{\bm{\epsilon}})\le\frac{1}{2}L_{\bm{\theta}_0}(\tau,\hat{\bm{\beta}}_\tau,\sigma)|A)\le\E e^{-\frac{n\rho}{8}}\le e^{-\frac{n\eta}{8}}.\]
Then
\[\Prob(B)\ge\Prob(B|A)\Prob(A)\ge (1-e^{-\frac{n\eta}{8}})\Prob(A).\]
Now conditioning on $S\cup\tilde S$, we construct $T$ and $\tilde T$ by randomly partitioning $S\cup\tilde S$ into two sets with equal sizes. We also denote
\[L_n(\tau,{\bm{\beta}}_\tau,\sigma|T)=\dfrac{1}{n}\sum_{(X,\epsilon,Y)\in T}\mathbbm{1}\{Y\ne\mathbbm{1}\{X_{\tau}^\top{\bm{\beta}}_{\tau}+\sigma\epsilon>0\}\},\]
\[L_n(\tau,{\bm{\beta}}_\tau,\sigma|\tilde T)=\dfrac{1}{n}\sum_{(X,\epsilon,Y)\in \tilde T}\mathbbm{1}\{Y\ne\mathbbm{1}\{X_{\tau}^\top{\bm{\beta}}_{\tau}+\sigma\epsilon>0\}\},\]
then 
\begin{align*}
    \Prob(B)=&\E_{S\cup\tilde S}\Prob(\exists{\bm{\beta}}_\tau\in\R^{\abs{\tau}},\sigma\ge 0 ~{\rm s.t.}~ L_n(\tau,{\bm{\beta}}_\tau,\sigma|\bm{X},\bm{y},\bm{\epsilon})=0, L_n(\tau,{\bm{\beta}}_\tau,\sigma|\tilde{\bm{X}},\tilde{\bm{Y}},\tilde{\bm{\epsilon}})\ge\dfrac{\eta}{2}|S\cup\tilde S)\\
    =&\E_{S\cup\tilde S}\Prob(\exists{\bm{\beta}}_\tau\in\R^{\abs{\tau}},\sigma\ge 0 ~{\rm s.t.}~ L_n(\tau,{\bm{\beta}}_\tau,\sigma|T)=0, L_n(\tau,{\bm{\beta}}_\tau,\sigma|\tilde T)\ge\dfrac{\eta}{2}|S\cup\tilde S).
\end{align*}
Conditioning on $S\cup\tilde S$, instead of considering ${\bm{\beta}}_\tau$ directly, we study the evaluation of the classifiers $\mathbbm{1}(X_\tau^\top{\bm{\beta}}_\tau+\sigma\epsilon>0)$ on samples in $S\cup\tilde S$, then by Sauer's Lemma \citep{shalev2014understanding}, the total number of labellings of $\mathbbm{1}\{X_\tau^\top{\bm{\beta}}_\tau+\sigma\epsilon>0\},\forall{\bm{\beta}}_\tau\in\R^{\abs{\tau}},\sigma\ge 0$ on $S\cup\tilde S$ is less than $2^{\abs{\tau}+1}\vee\big(\frac{2en}{\abs{\tau}+1}\big)^{\abs{\tau}+1}$
\begin{align*}
    &\Prob(\exists{\bm{\beta}}_\tau\in\R^{\abs{\tau}},\sigma\ge 0 ~{\rm s.t.}~ L_n(\tau,{\bm{\beta}}_\tau,\sigma|T)=0, L_n(\tau,{\bm{\beta}}_\tau,\sigma|\tilde T)\ge\dfrac{\eta}{2}|S\cup\tilde S)\\
    \le&\bigg\{2^{\abs{\tau}+1}\vee\bigg(\dfrac{2en}{\abs{\tau}+1}\bigg)^{\abs{\tau}+1}\bigg\}\sup_{{\bm{\beta}}_\tau\in\R^{\abs{\tau}},\sigma\ge 0}\Prob(L_n(\tau,{\bm{\beta}}_\tau,\sigma|T)=0, L_n(\tau,{\bm{\beta}}_\tau,\sigma|\tilde T)\ge\dfrac{\eta}{2}|S\cup\tilde S)\\
    \le&\bigg\{2^{\abs{\tau}+1}\vee\bigg(\dfrac{2en}{\abs{\tau}+1}\bigg)^{\abs{\tau}+1}\bigg\}2^{-\frac{n\eta}{2}},
\end{align*}
where to derive the last inequality, we assume the total number of errors of ${\bm{\beta}}_\tau$ on $S\cup\tilde S$ to be $m\in[\frac{n\eta}{2},n]$, then the probability that all the $m$ wrong samples are in $\tilde T$ is $\binom{n}{m}/\binom{2n}{m}\le 2^{-m}\le 2^{-\frac{n\eta}{2}}$. 

In conclusion, we have 
\[\Prob(A)\le(1-e^{-\frac{n\eta}{8}})^{-1}\bigg\{2^{\abs{\tau}+1}\vee\bigg(\dfrac{2en}{\abs{\tau}+1}\bigg)^{\abs{\tau}+1}\bigg\}2^{-\frac{n\eta}{2}}.\]
\end{proof}

\begin{proof}[Proof of Lemma \ref{lem_oracle}]
Since $\tau_0$ is one of the minimizers of problem \eqref{eq_oracle}, we know the minimum is 0. Denote
\[\tilde c_{\min}=\min_{|\tau|\le s,\tau\not\supset\tau_0,{\bm{\beta}}_\tau\in\R^{\abs{\tau}},\sigma\ge 0}\frac{L_{\bm{\theta}_0}(\tau,{\bm{\beta}}_\tau,\sigma)-\frac{2\abs{\tau}+2}{n}\log_2\frac{2en}{\abs{\tau}+1}}{\abs{\tau_0\setminus\tau}},\]
\[c_{\min}=\min_{|\tau|\le s,\tau\not\supset\tau_0,{\bm{\beta}}_\tau\in\R^{\abs{\tau}},\sigma\ge 0}\frac{L_{\bm{\theta}_0}(\tau,{\bm{\beta}}_\tau,\sigma)-\frac{2\abs{\tau}+2}{n}\log_2\frac{2en}{\abs{\tau}+1}}{\abs{\tau}\vee 1},\]
then
\begin{align*}
    &\Prob(\inf_{\tau\not\supset\tau_0,|\tau|\le s,{\bm{\beta}}\in\R^p,\sigma\ge 0} L_n(\tau,{\bm{\beta}}_\tau,\sigma|\bm{X},\bm{y},\bm{\epsilon})=0)\\
    = &\Prob(\exists \tau\not\supset\tau_0, |\tau|\le s, {\bm{\beta}}_\tau\in\R^{\abs{\tau}},\sigma\ge 0 ~{\rm s.t.}~ L_n(\tau,{\bm{\beta}}_\tau,\sigma|\bm{X},\bm{y},\bm{\epsilon})=0,\\
    &\qquad L_{\bm{\theta}_0}(\tau,{\bm{\beta}}_\tau,\sigma)\ge\inf_{{\bm{\beta}}_\tau\in\R^{\abs{\tau}},\sigma\ge 0}L_{\bm{\theta}_0}(\tau,{\bm{\beta}}_\tau,\sigma))\\
    \le&\sum_{\tau\not\supset\tau_0,|\tau|\le s}\Prob(\exists{\bm{\beta}}_\tau\in\R^{\abs{\tau}},\sigma\ge 0 ~{\rm s.t.}~ L_n(\tau,{\bm{\beta}}_\tau,\sigma|\bm{X},\bm{y},\bm{\epsilon})=0,\\
    &\qquad L_{\bm{\theta}_0}(\tau,{\bm{\beta}}_\tau,\sigma)\ge\inf_{{\bm{\beta}}_\tau\in\R^{\abs{\tau}},\sigma\ge 0}L_{\bm{\theta}_0}(\tau,{\bm{\beta}}_\tau,\sigma))\\
    \overset{\triangle}{=}&T.
\end{align*}
On the one hand, noting that $\sum_{l=0}^r\binom{p-s}{l}\le (\frac{e(p-s)}{r})^r,\binom{s}{r}\le s^r,s(p-s)\le\frac{p^2}{4}$, if we divide $\abs{\tau}$ into $j=\abs{\tau_0\cap\tau}$ and $l=\abs{\tau\setminus\tau_0}$, then applying Lemma \ref{lem_pac_upper} gives
\begin{align*}
    T\lesssim &\sum_{\tau\not\supset\tau_0,\abs{\tau}\le s}2^{-\frac{1}{2}n\inf_{{\bm{\beta}}_\tau\in\R^{\abs{\tau}},\sigma\ge 0}L_{\bm{\theta}_0}(\tau,{\bm{\beta}}_\tau,\sigma)+(\abs{\tau}+1)\log_2 \frac{2en}{\abs{\tau}+1}}\\
    \le&\sum_{j=0}^{s-1}\sum_{l=0}^{s-j}\binom{s}{j}\binom{p-s}{l}2^{-\frac{1}{2}n(s-j)\tilde c_{\min}}\\
    \overset{r=s-j}{\le} & \sum_{r=1}^s s^r2^{-\frac{1}{2}nr\tilde c_{\min}}\sum_{l=0}^{r}\binom{p-s}{l}\\
    \le&\sum_{r=1}^s2^{-r(\frac{1}{2}n\tilde c_{\min}-\log_2 (es(p-s)))}\\
    \le & \dfrac{2^{-\frac{1}{2}n\tilde c_{\min}+\log_2 (es(p-s))}}{1-2^{-\frac{1}{2}n\tilde c_{\min}+\log_2 (es(p-s))}}\\
    \le & 2^{-\frac{1}{2}n\tilde c_{\min}+\log_2(es(p-s))+1}\\
    \lesssim&2^{-\frac{1}{2}n\tilde c_{\min}+2\log_2 p}.
\end{align*}

On the other hand, similarly we denote $j=\abs{\tau}$, then
\begin{align*}
    T\lesssim &\sum_{\tau\not\supset\tau_0,\abs{\tau}\le s}2^{-\frac{1}{2}n\inf_{{\bm{\beta}}_\tau\in\R^{\abs{\tau}},\sigma\ge 0}L_{\bm{\theta}_0}(\tau,{\bm{\beta}}_\tau,\sigma)+(\abs{\tau}+1)\log_2 \frac{2en}{\abs{\tau}+1}}\\
    \le&\sum_{j=0}^s\binom{p}{j}2^{-\frac{1}{2}n(j\vee 1)c_{\min}}\\
    \le&\sum_{j=0}^s2^{-\frac{1}{2}n(j\vee 1)c_{\min}+j\log_2 p}\\
    \lesssim &2^{-\frac{1}{2}nc_{\min}+\log_2 p}.
\end{align*}
\end{proof}

\begin{proof}[Proof of Theorem \ref{thm_candidate}]

If we denote
\[A=\{\bm{\epsilon}^*:-X^\top_{i,\tau_0}\bm{\beta}_{0,\tau_0}<\epsilon^*_i\le\epsilon_i~{\rm or}~\epsilon_i\le\epsilon_i^*\le-X^\top_{i,\tau_0}\bm{\beta}_{0,\tau_0}\},\]
then we have the following decomposition
\begin{align*}
    \Prob(\tau_0\not\in\mathcal{C})
    \le\Prob(\{\tau_0\not\in\mathcal{C})\}\cap(\cup_{j\in[d]}\{\bm{\epsilon}^{*(j)}\in A\}))+\Prob(\cap_{j\in[d]}\{\bm{\epsilon}^{*(j)}\not\in A\})
    =T_1+T_2.
\end{align*}
Note that for any $\bm{\epsilon}^*\in A$, we have 
\[y_i=\1(X^\top_{i,\tau_0}\bm{\beta}_{0,\tau_0}+\epsilon_i^*>0),\quad \epsilon^*_i-\epsilon_i\left\{\begin{array}{cc}
    \le 0 & {\rm if~} y_i=1, \\
    \ge 0 & {\rm if~} y_i=0.
\end{array}\right.\]
Then for all $\tau\subset[p],\bm{\beta}_\tau\in\R^{|\tau|},\sigma\ge 0$,
\begin{align*}
    &\1(y_i\ne\1(X^\top_{i,\tau}\bm{\beta}_\tau+\sigma\epsilon^*_i>0))\\
    =&\1(y_i=1,X^\top_{i,\tau}\bm{\beta}_\tau+\sigma\epsilon^*_i\le 0)+\1(y_i=0,X^\top_{i,\tau}\bm{\beta}_\tau+\sigma\epsilon^*_i>0)\\
    =&\1(y_i=1,X^\top_{i,\tau}\bm{\beta}_\tau+\sigma\epsilon_i+\sigma(\epsilon^*_i-\epsilon_i)\le 0)+\1(y_i=0,X^\top_{i,\tau}\bm{\beta}_\tau+\sigma\epsilon_i+\sigma(\epsilon_i^*-\epsilon_i)>0)\\
    \ge&\1(y_i=1,X^\top_{i,\tau}\bm{\beta}_\tau+\sigma\epsilon_i\le 0)+\1(y_i=0,X^\top_{i,\tau}\bm{\beta}_\tau+\sigma\epsilon_i>0)\\
    =&\1(y_i\ne\1(X^\top_{i,\tau}\bm{\beta}_\tau+\sigma\epsilon_i>0)),
\end{align*}
then we can control term $T_1$ as
\begin{align*}
    T_1\le&\Prob(\exists\bm{\epsilon}^*\in A~{\rm s.t.}~\tau_0\ne\argmin_{\abs{\tau}\le s}\min_{{\bm{\beta}}_\tau\in\R^{\abs{\tau}},\sigma\ge0}L_n(\tau,{\bm{\beta}}_\tau,\sigma|\bm{X},\bm{y},\bm{\epsilon}^*))\\
    \le&\Prob(\exists\bm{\epsilon}^*\in A~{\rm s.t.}~\inf_{\tau\ne\tau_0,\abs{\tau}\le s,{\bm{\beta}}_\tau\in\R^{\abs{\tau}},\sigma\ge0}L_n(\tau,{\bm{\beta}}_\tau,\sigma|\bm{X},\bm{y},\bm{\epsilon}^*)\le L_n(\tau_0,{\bm{\beta}}_{0,\tau_0},1|\bm{X},\bm{y},\bm{\epsilon}^*))\\
    =&\Prob(\exists\bm{\epsilon}^*\in A~{\rm s.t.}~\inf_{\tau\ne\tau_0,\abs{\tau}\le s,{\bm{\beta}}_\tau\in\R^{\abs{\tau}},\sigma\ge0}L_n(\tau,{\bm{\beta}}_\tau,\sigma|\bm{X},\bm{y},\bm{\epsilon}^*)=0)\\
    \le&\Prob(\inf_{\tau\ne\tau_0,\abs{\tau}\le s,{\bm{\beta}}_\tau\in\R^{\abs{\tau}},\sigma\ge0}L_n(\tau,{\bm{\beta}}_\tau,\sigma|\bm{X},\bm{y},\bm{\epsilon})=0)\\
    \lesssim&2^{-\frac{1}{2}n\tilde c_{\min}+2\log_2p}\wedge2^{-\frac{1}{2}nc_{\min}+\log_2p},
\end{align*}
where we have used Lemma \ref{lem_oracle} in the last inequality.

For term $T_2$, denote $F_\epsilon(z)=(1+e^{-z})^{-1}$ to be the CDF of logistic distribution, then
\begin{align*}
    T_2=&(1-\Prob(\bm{\epsilon}^*\in A))^d\\
    =&(1-\{\Prob(-X^\top_{\tau_0}\bm{\beta}_{0,\tau_0}<\epsilon^*\le\epsilon~{\rm or}~\epsilon\le\epsilon^*\le-X^\top_{\tau_0}\bm{\beta}_{0,\tau_0})\}^n)^d\\
    =&(1-\{\E \big|F_\epsilon(\epsilon)-F_\epsilon(-X^\top_{\tau_0}\bm{\beta}_{0,\tau_0})\big|\}^n)^d,
\end{align*}
where in the last equation, we have used the fact that $\epsilon^*$ is independent of $Y,X$. Combining terms completes the proof.
\end{proof}

\begin{proof}[Proof of Theorem \ref{thm_candidate_strong_signal}]
    For any $\bm{\epsilon}^*$ independent of the observed data, by Theorem 4.10 and Example 5.24 in \cite{wainwright2019high}, given any $\tau\subset[p]$, we have
    \begin{align*}
        &\Prob(\sup_{{\bm{\beta}}_\tau\in\R^{\tau},\sigma\ge0}\abs{L_n-L_{\bm{\theta}_0}}(\tau,{\bm{\beta}}_\tau,\sigma|\bm{X},\bm{y},\bm{\epsilon}^*)\vee\abs{L_n-L_{\bm{\theta}_0}}(\tau_0,{\bm{\beta}}_{0,\tau_0},0|\bm{X},\bm{y},\bm{\epsilon}^*)\\
        &\qquad\ge c\sqrt{\frac{\abs{\tau}+1}{n}}+\delta)\\
        \le&e^{-\frac{n\delta^2}{2}}.
    \end{align*}
    Then we can control the probability of false model selection as
    \begin{align*}
        &\Prob(\hat \tau(\bm{\epsilon}^*)\ne\tau_0)\\
        \le&\Prob(\inf_{\tau\ne\tau_0,\abs{\tau}\le s,{\bm{\beta}}_\tau\in\R^{\abs{\tau}},\sigma\ge 0}L_n(\tau,{\bm{\beta}}_\tau,\sigma|\bm{y},\bm{\epsilon}^*)\le L_{\bm{\theta}_0,n}(\tau_0,{\bm{\beta}}_{0,\tau_0},0|\bm{y},\bm{\epsilon}^*))\\
        \le&\sum_{\tau\ne\tau_0,\abs{\tau}\le s}\Prob(\inf_{{\bm{\beta}}_\tau\in\R^{\abs{\tau}},\sigma\ge 0}L_{\bm{\theta}_0}(\tau,{\bm{\beta}}_\tau,\sigma|\bm{y},\bm{\epsilon}^*)-L_{\bm{\theta}_0}(\tau_0,{\bm{\beta}}_{0,\tau_0},0|\bm{y},\bm{\epsilon}^*)\\
        &\qquad\le2\sup_{{\bm{\beta}}_{\tau}\in\R^{\abs{\tau}},\sigma\ge 0}\abs{L_n-L_{\bm{\theta}_0}}(\tau,{\bm{\beta}}_{\tau},\sigma|\bm{y},\bm{\epsilon}^*)\vee\abs{L_n-L_{\bm{\theta}_0}}(\tau_0,{\bm{\beta}}_{0,\tau_0},0|\bm{y},\bm{\epsilon}^*))\\
        \le&\sum_{\tau\ne\tau_0,\sigma\ge 0}e^{-\frac{n}{8}\big\{\inf_{{\bm{\beta}}_\tau\in\R^{\abs{\tau}},\sigma\ge 0}L_{\bm{\theta}_0}(\tau,{\bm{\beta}}_\tau,\sigma|\bm{y},\bm{\epsilon}^*)-L_{\bm{\theta}_0}(\tau_0,{\bm{\beta}}_{0,\tau_0},\sigma|\bm{y},\bm{\epsilon}^*)-c\sqrt{\frac{\abs{\tau}+1}{n}}\big\}^2}\\
        =&T.
    \end{align*}
    Similar with the proof of Lemma \ref{lem_oracle}, on the one hand, if we denote $j=\abs{\tau_0\cap\tau},l=\abs{\tau\setminus\tau_0}$, then
    \begin{align*}
        T\le\sum_{j=0}^{s-1}\sum_{l=0}^{s-j}\binom{s}{j}\binom{p-s}{l}e^{-\frac{n}{8}(s-j)\tilde c_{\min}}
        \lesssim e^{-\frac{1}{8}n\tilde c_{\min}+2\log p}.
    \end{align*}
    On the other hand, if we denote $j=\abs{\tau}$, then
    \begin{align*}
        T\le\sum_{j=0}^s\binom{p}{j}e^{-\frac{1}{8}n(j\vee 1)c_{\min}}
        \lesssim e^{-\frac{1}{8}nc_{\min}+\log p}.
    \end{align*}
    Suppose $\mathcal{C}=\{\hat\tau(\bm{\epsilon}^{*(j)}):\epsilon_i^{*(j)}\overset{{\rm i.i.d.}}{\sim}{\rm Logistic},i\in[n],j\in[d]\}$, then
    \[\Prob(\tau_0\not\in\mathcal{C})\le\Prob(\hat\tau(\bm{\epsilon}^{*})\ne\tau_0)\lesssim e^{-\frac{n}{8}\tilde c_{\min}+2\log p}\wedge e^{-\frac{n}{8}nc_{\min}+\log p},\]
    \[\Prob(\mathcal{C}\ne\{\tau_0\})\le\sum_{j=1}^d\Prob(\hat\tau(\bm{\epsilon}^{*(j)})\ne\tau_0)\lesssim e^{-\frac{n}{8}\tilde c_{\min}+2\log p+\log d}\wedge e^{-\frac{n}{8}c_{\min}+\log p+\log d}.\]
\end{proof}

\subsection{Proof of Theorem~\ref{thm_tau}}
To prove Theorem \ref{thm_tau}, we need the following lemma from \cite{berend2012convergence}.
\begin{Lemma}[Theorem 2 in \cite{berend2012convergence}]\label{lem_TV_tail}
For every $k\in\mathbb{N}$, distribution $\Prob$ with support on $k$ elements and its empirical version $\hat\Prob$ based on $m$ samples, we have
\[\Prob({\rm TV}(\Prob,\hat\Prob)>\delta)\le \exp\bigg(-\frac{m}{2}\bigg(2\delta-\sqrt{\frac{k}{m}}\bigg)^2\bigg),\quad2\delta\ge\sqrt{\frac{k}{m}}.\]
\end{Lemma}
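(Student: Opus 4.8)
Since this statement is quoted as Theorem 2 of \cite{berend2012convergence}, the aim is to reproduce its short self-contained proof. Write the true masses of $\Prob$ as $\pi_1,\ldots,\pi_k$ and the empirical frequencies as $\hat\pi_1,\ldots,\hat\pi_k$, so that ${\rm TV}(\Prob,\hat\Prob)=\half\sum_{i=1}^k\abs{\pi_i-\hat\pi_i}$ with $m\hat\pi_i\sim\mathrm{Binomial}(m,\pi_i)$. The plan is the classic two-ingredient argument: first bound the expectation $\E\,{\rm TV}(\Prob,\hat\Prob)$, then show that ${\rm TV}(\Prob,\hat\Prob)$ is a bounded-differences function of the $m$ i.i.d. samples and concentrate it around its mean by McDiarmid's inequality. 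Calibrating the two pieces is exactly what produces the stated bound, with the restriction $2\delta\ge\sqrt{k/m}$ arising naturally.

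For the expectation, I would use Jensen's inequality coordinatewise: $\E\abs{\hat\pi_i-\pi_i}\le\sqrt{\Var(\hat\pi_i)}=\sqrt{\pi_i(1-\pi_i)/m}$. Summing and applying Cauchy--Schwarz gives $\sum_{i=1}^k\sqrt{\pi_i(1-\pi_i)}\le\sqrt{k\sum_{i=1}^k\pi_i(1-\pi_i)}\le\sqrt{k}$, since $\sum_i\pi_i(1-\pi_i)=1-\sum_i\pi_i^2\le 1$. Hence $\E\sum_i\abs{\hat\pi_i-\pi_i}\le\sqrt{k/m}$, i.e. $\E\,{\rm TV}(\Prob,\hat\Prob)\le\half\sqrt{k/m}$.

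For the concentration, view $(x_1,\ldots,x_m)\mapsto{\rm TV}(\Prob,\hat\Prob)$. Replacing a single sample $x_j$ moves mass $1/m$ from one bin to another, changing two of the frequencies $\hat\pi_i$ by $1/m$ each; by the triangle inequality this perturbs $\sum_i\abs{\hat\pi_i-\pi_i}$ by at most $2/m$, hence it perturbs ${\rm TV}$ by at most $1/m$. So the bounded-differences constants are $c_j=1/m$ with $\sum_{j=1}^m c_j^2=1/m$, and McDiarmid's inequality yields $\Prob\big({\rm TV}(\Prob,\hat\Prob)-\E\,{\rm TV}(\Prob,\hat\Prob)>t\big)\le\exp(-2mt^2)$ for $t\ge 0$. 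Finally I would take $t=\delta-\E\,{\rm TV}(\Prob,\hat\Prob)\ge\delta-\half\sqrt{k/m}$; this is nonnegative precisely when $2\delta\ge\sqrt{k/m}$, and substituting gives $\Prob({\rm TV}>\delta)\le\exp\!\big(-2m(\delta-\half\sqrt{k/m})^2\big)=\exp\!\big(-\tfrac{m}{2}(2\delta-\sqrt{k/m})^2\big)$, as claimed. The only delicate points are verifying the bounded-difference constant is $1/m$ rather than $2/m$ (so that the factor $2$ from McDiarmid and the factor $\half$ in the definition of ${\rm TV}$ combine to the clean constant $\tfrac{m}{2}$) and checking that the expectation bound carries the sharp constant $\half\sqrt{k/m}$; these are the steps where an off-by-a-factor error would break the exact match with the stated inequality.
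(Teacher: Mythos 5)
Your proof is correct, and it is essentially the argument in the cited source: the paper itself offers no proof of this lemma, quoting it directly as Theorem 2 of \cite{berend2012convergence}, whose proof is exactly your two-ingredient scheme (the mean bound $\E\,{\rm TV}(\Prob,\hat\Prob)\le\frac{1}{2}\sqrt{k/m}$ via Jensen and Cauchy--Schwarz, followed by McDiarmid with per-sample constant $1/m$). Your calibration checks out: with $t=\delta-\E\,{\rm TV}\ge\delta-\frac{1}{2}\sqrt{k/m}\ge 0$ under $2\delta\ge\sqrt{k/m}$, monotonicity of the tail bound gives $\exp\big(-2m(\delta-\frac{1}{2}\sqrt{k/m})^2\big)=\exp\big(-\frac{m}{2}(2\delta-\sqrt{k/m})^2\big)$, matching the stated inequality exactly.
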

\begin{proof}[Proof of Theorem \ref{thm_tau}]
    \begin{align*}
        &\Prob(\tau_0\not\in\Gamma_{\alpha}^{\tau_0}(\bm{X},\bm{y}))\\
        \le&\Prob(\tau_0\not\in\Gamma_{\alpha}^{\tau_0}(\bm{X},\bm{y}),\tau_0\in\mathcal{C})+\Prob(\tau_0\not\in\mathcal{C})\\
        \le&\Prob(\tilde T(\bm{X},\bm{y},\tau_0)\ge\alpha)+\Prob(\tau_0\not\in\mathcal{C})\\
        =&T_1+T_2.
    \end{align*}
    Denote $\Prob_\tau,\hat\Prob_\tau$ to be the probabilities with respect to $\tau$ with distribution $P_{\tau_0,{\bm{\beta}}_{0,\tau_0}},\hat P_{\tau_0,{\bm{\beta}}_{0,\tau_0}}$, respectively. And for simplicity, we also compress $P_{\tau_0,{\bm{\beta}}_{0,\tau_0}},\hat P_{\tau_0,{\bm{\beta}}_{0,\tau_0}}$ as $P,\hat P$, respectively. Then the nuclear statistic $\tilde T(\bm{X}^{obs},\bm{y}^{obs},\tau_0)$ can be expressed as
    \[\tilde T(\bm{X}^{obs},\bm{y}^{obs},\tau_0)=\hat\Prob_{\tau}(\hat P(\tau)>\hat P(\tilde\tau(\bm{X}^{obs},\bm{y}^{obs},\tau_0))).\]
    Note that $P$ only counts the randomness of $\bm{\epsilon}^{rel}$ conditional on $\bm{X}$, therefore it is still random. Since $\tilde\tau(\bm{X}^{obs},\bm{y}^{obs},\tau_0)$ is a realization from $P_{\tau_0,{\bm{\beta}}_{0,\tau_0}}$, then we can rewrite $T_1$ as
    \[T_1=\E\Prob_{\tilde\tau}(\hat\Prob_{\tau}(\hat P(\tau)>\hat P(\tilde\tau))\ge\alpha),\]
    where $\E$ includes the randomness of both the Monte Carlo samples $\{\bm{\epsilon}^{*(j)}:j\in[m]\}$ and the covariates $\bm{X}$.
    By the property of CDF, we know
    \begin{align*}
        \hat\Prob_{\tilde\tau}(\hat\Prob_\tau(\hat P(\tau)>\hat P(\tilde\tau))\ge\alpha)
        =\hat\Prob_{\tilde\tau}(\hat\Prob_{\tau}(\hat P(\tau)\le\hat P(\tilde\tau))\le1-\alpha)
        \le1-\alpha\quad{\rm a.s.}.
    \end{align*}
    Then
    \begin{align*}
        T_1=&\E\Prob_{\tilde\tau}(\hat\Prob_{\tau}(\hat P(\tau)>\hat P(\tilde\tau))\ge\alpha)-\E\hat\Prob_{\tilde\tau}(\hat\Prob_\tau(\hat P(\tau)>\hat P(\tilde\tau))\ge\alpha)
        +\E\hat\Prob_{\tilde\tau}(\hat\Prob_\tau(\hat P(\tau)>\hat P(\tilde\tau))\ge\alpha)\\
        \le&\E{\rm TV(\Prob_\tau,\hat\Prob_\tau)}+1-\alpha.
    \end{align*}
    By construction, $P$ and $\hat P$ have supports on at most $\sum_{j=0}^s{p\choose j}\le (\frac{ep}{s})^s$ elements, then we can control the expected total variation term using Lemma \ref{lem_TV_tail} as
    \begin{align*}
        &\E{\rm TV}(\Prob,\hat\Prob)\\
        =&\E{\rm TV}(\Prob,\hat\Prob)\1\bigg\{{\rm TV}(\Prob,\hat\Prob)\le\sqrt{\frac{(\frac{ep}{s})^s}{4m}}\bigg\}+\E{\rm TV}(\Prob,\hat\Prob)\1\bigg\{{\rm TV}(\Prob,\hat\Prob)>\sqrt{\frac{(\frac{ep}{s})^s}{4m}}\bigg\}\\
        \le&\sqrt{\frac{(\frac{ep}{s})^s}{4m}}+\int_{\sqrt{\frac{(\frac{ep}{s})^s}{4m}}}^\infty\Prob({\rm TV}(\Prob_\tau,\hat\Prob_\tau)>t)dt\\
        =&\sqrt{\frac{(\frac{ep}{s})^s}{4m}}+\int_{\sqrt{\frac{(\frac{ep}{s})^s}{4m}}}^\infty e^{-\frac{m}{2}(2t-\sqrt{\frac{(\frac{ep}{s})^s}{m}})^2}dt\\
        =&\sqrt{\frac{(\frac{ep}{s})^s}{4m}}+\sqrt{\frac{\pi}{8m}}.
    \end{align*}
    Therefore
    \[\Prob(\tau_0\not\in\Gamma_{\alpha}^{\tau_0}(\bm{X},\bm{y}))\le1-\alpha+\sqrt{\frac{(\frac{ep}{s})^s}{4m}}+\sqrt{\frac{\pi}{8m}}+\Prob(\tau_0\not\in\mathcal{C}).\]
    Together with Theorem \ref{thm_candidate} and Theorem \ref{thm_candidate_strong_signal}, we conclude the proof.
\end{proof}

\subsection{Proof of Theorem~\ref{thm_Abeta}}

\begin{Lemma}\label{lem_tensor}

For any fixed $B\in\R^{d_1\times s}$ with $BB^\top=I_{d_1}$, denote $\mathcal{B}_{d_1}=\{a\in\R^{d_1}:\norm{a}_2=1\}$ to be the unit sphere in $\R^{d_1}$, then
\[\sup_{a,b,c\in\mathcal{B}_{d_1}}\sum_{i=1}^n\abs{a^\top BX_{i,\tau_0} b^\top BX_{i,\tau_0} c^\top BX_{i,\tau_0}}=O_P(n+\sqrt{nd_1}+(d_1\log n)^{\frac{3}{2}}).\]
\end{Lemma}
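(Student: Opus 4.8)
The plan is to reduce the trilinear quantity to a single empirical process and then control that process by separating its mean from its fluctuations. Write $Z_i = BX_{i,\tau_0}\in\R^{d_1}$. Since $BB^\top=I_{d_1}$, for every fixed $a\in\mathcal{B}_{d_1}$ we have $\norm{B^\top a}_2=1$, so $a^\top Z_i=(B^\top a)^\top X_{i,\tau_0}$ is sub-Gaussian with $\norm{a^\top Z_i}_{\psi_2}\le\xi$; in particular $\E\abs{a^\top Z_i}^3\lesssim\xi^3$, $\E\abs{a^\top Z_i}^6\lesssim\xi^6$, and $\max_i\norm{Z_i}_2\lesssim\xi\sqrt{d_1+\log n}$ with high probability. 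The first step is the elementary inequality $xyz\le\tfrac13(x^3+y^3+z^3)$ for $x,y,z\ge0$, which yields
\[\sup_{a,b,c\in\mathcal{B}_{d_1}}\sum_{i=1}^n\abs{a^\top Z_i}\,\abs{b^\top Z_i}\,\abs{c^\top Z_i}\le\sup_{a\in\mathcal{B}_{d_1}}\sum_{i=1}^n\abs{a^\top Z_i}^3,\]
so it suffices to bound the single supremum on the right.

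Next I would split $\sum_i\abs{a^\top Z_i}^3=\sum_i\E\abs{a^\top Z_i}^3+\big(\sum_i\abs{a^\top Z_i}^3-\E\sum_i\abs{a^\top Z_i}^3\big)$. The deterministic part is at most $n\sup_a\E\abs{a^\top Z_i}^3\lesssim n\xi^3$, which produces the $n$ term. For the fluctuation I would truncate: on the event $\{\max_i\norm{Z_i}_2\le R\}$ with $R\asymp\xi\sqrt{d_1+\log n}$, which holds with probability at least $1-\epsilon$ once the constant is large enough, the summands satisfy $\abs{a^\top Z_i}^3\le R^3$ uniformly in $a$, and passing to $\tilde Z_i=Z_i\1\{\norm{Z_i}_2\le R\}$ makes the functions $g_a(z)=\abs{a^\top z}^3$ bounded by $R^3\asymp\xi^3(d_1+\log n)^{3/2}$ without changing the process on that event. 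This truncation level, cubed, is exactly what produces the $(d_1\log n)^{3/2}$ term: it is the order of the single largest summand.

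I would then apply Talagrand's concentration inequality (in Bousquet's form) to the bounded empirical process $W=\sup_a\sum_i\big(g_a(\tilde Z_i)-\E g_a(\tilde Z_i)\big)$, using the variance proxy $\sum_i\E\abs{a^\top Z_i}^6\lesssim n\xi^6$ and the uniform bound $R^3$. For an $O_P$ statement it suffices to take the deviation level constant, so Talagrand delivers $W\lesssim\E W+\xi^3\sqrt{n}+R^3\lesssim\E W+\sqrt{nd_1}+(d_1\log n)^{3/2}$. Combining the three pieces then gives the claimed rate $O_P\big(n+\sqrt{nd_1}+(d_1\log n)^{3/2}\big)$, provided one can show $\E W\lesssim\sqrt{nd_1}+(d_1\log n)^{3/2}$.

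The hard part is precisely this last bound on $\E W$, i.e.\ showing that the empirical complexity contributes the sharp $\sqrt{nd_1}$ rather than a factor $\sqrt{d_1}$ worse. The naive routes fail: the contraction principle using the global Lipschitz constant $3R^2$ of $t\mapsto\abs{t}^3$ on $[-R,R]$, and the crude bound $\sum_i\abs{a^\top Z_i}^3\le(\max_i\norm{Z_i}_2)\,\lambda_{\max}(\sum_iZ_iZ_i^\top)$, both lose a factor of order $\sqrt{d_1}$ because they treat every term as if $\abs{a^\top Z_i}$ were as large as $R$, whereas typically $\abs{a^\top Z_i}=O(\xi)$. To recover the correct rate I would use a variance-sensitive chaining (generic chaining with mixed sub-Gaussian/sub-Weibull increments), or equivalently a peeling argument over the levels of $\abs{a^\top Z_i}$ through the layer-cake identity $\abs{a^\top Z_i}^3=\int_0^\infty 3u^2\1\{\abs{a^\top Z_i}>u\}\,du$ together with the $O(d_1)$ VC dimension of the slab class $\{z:\abs{a^\top z}>u\}$: the exponentially small slab probabilities $\sup_a\Prob(\abs{a^\top Z}>u)\lesssim e^{-cu^2/\xi^2}$ suppress the large-$u$ contributions, leaving $\int u^2\sqrt{nd_1}\,\tfrac{u}{\xi}e^{-cu^2/(2\xi^2)}\,du\asymp\xi^3\sqrt{nd_1}$ as the dominant piece. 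Carrying out this multi-scale estimate uniformly over $\mathcal{B}_{d_1}$ while keeping the residual large-deviation terms inside $(d_1\log n)^{3/2}$ is the main technical obstacle.
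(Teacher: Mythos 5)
Your reduction to $\sup_{a}\sum_i\abs{a^\top BX_{i,\tau_0}}^3$ via $xyz\le\frac13(x^3+y^3+z^3)$, the mean/fluctuation split, the truncation at $R\asymp\xi\sqrt{d_1+\log n}$, and the appeal to Bousquet's inequality are all sound, but the argument has a genuine gap at exactly the point you flag: you never establish $\E W\lesssim\sqrt{nd_1}+(d_1\log n)^{3/2}$, and the sketch you offer does not obviously close it. In the layer-cake route, for each level $u$ the uniform deviation of the slab-indicator class contributes not only the variance term $\sqrt{nd_1p_u\log(1/p_u)}$ that you integrate, but also an additive complexity term of order at least $d_1$ (and typically $d_1\log(1/p_u)\asymp d_1(1+u^2/\xi^2)$); integrating $3u^2$ times that term up to the truncation level $R$ produces a contribution of order at least $d_1R^3\asymp \xi^3 d_1(d_1+\log n)^{3/2}$, which exceeds the target $(d_1\log n)^{3/2}$ once $d_1$ is large relative to $\log n$ --- a regime that is relevant here since $d_1$ can be as large as $s$. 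So the ``residual large-deviation terms'' you defer are not automatically of the right order, and the lemma is not proved as written.

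The paper avoids the expected-supremum problem entirely by discretizing first: it takes a $\frac14$-net $\mathcal{N}$ of $\mathcal{B}_{d_1}$ (so $\log\abs{\mathcal{N}}\lesssim d_1$), shows by the standard approximation argument that the supremum of the nonnegative trilinear sum over the sphere is at most $\frac{64}{3}$ times its maximum over $\mathcal{N}^3$, and then observes that for each fixed triple $(a,b,c)$ the summands $\abs{a^\top BX_{i,\tau_0}\,b^\top BX_{i,\tau_0}\,c^\top BX_{i,\tau_0}}$ are sub-Weibull of order $2/3$ with $\psi_{2/3}$-norm at most $\xi^3$. The sub-Weibull Bernstein inequality of Kuchibhotla and Chakrabortty (their Theorem 3.4) then gives, for each fixed triple, a centered deviation of order $\sqrt{nt}+t^{3/2}(\log n)^{3/2}$ with probability $e^{-t}$; taking $t\asymp d_1$ to absorb the union bound over the net yields $\sqrt{nd_1}+(d_1\log n)^{3/2}$ directly, and adding the mean $n\E\abs{\cdots}\lesssim n$ finishes. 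Your worry about losing a factor $\sqrt{d_1}$ applies to contraction and to chaining with a global Lipschitz constant, but not to this direct union bound, precisely because the target already concedes the $(d_1\log n)^{3/2}$ term that the union bound produces. If you want to salvage your route, the cleanest fix is to replace the Talagrand-plus-$\E W$ step with this net-plus-one-dimensional-concentration step.
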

\begin{proof}[Proof of Lemma \ref{lem_tensor}]
    Note that for any $a,b,c\in\mathcal{B}_{d_1}$, we have
    \[\norm{a^\top BX_{\tau_0} b^\top BX_{\tau_0} c^\top BX_{\tau_0}}_{\psi_{2/3}}\le\xi^3,\]
    \[\norm{\big(a^\top BX_{\tau_0} b^\top BX_{\tau_0} c^\top BX_{\tau_0}\big)^2}_{\psi_{1/3}}\le\xi^6.\]
    Denote $\mathcal{N}_{\frac{1}{4}}$ to be the $\frac{1}{4}$-net of $\mathcal{B}_{d_1}$, then $\log|\mathcal{N}_{\frac{1}{4}}|\lesssim d_1$. For any $a,b,c\in\mathcal{B}_{d_1}$, there exist $\tilde a,\tilde b,\tilde c\in\mathcal{N}_{\frac{1}{4}}$ such that $\|a-\tilde a\|_2,\|b-\tilde b\|_2,\|c-\tilde c\|_2\le\frac{1}{4}$, and
    \begin{align*}
        &\sum_{i=1}^n|a^\top BX_{i,\tau_0}b^\top BX_{i,\tau_0}c^\top BX_{i,\tau_0}|\\
        \le&\sum_{i=1}^n|\tilde a^\top BX_{i,\tau_0}\tilde b^\top BX_{i,\tau_0}\tilde c^\top BX_{i,\tau_0}|+\frac{61}{64}\sup_{a,b,c\in\mathcal{B}_{d_1}}\sum_{i=1}^n|a^\top BX_{i,\tau_0}b^\top BX_{i,\tau_0}c^\top BX_{i,\tau_0}|.
    \end{align*}
    Taking supremum over $a,b,c\in\mathcal{B}_{d_1}$, we get
    \[\sup_{a,b,c\in\mathcal{B}_{d_1}}\sum_{i=1}^n|a^\top BX_{i,\tau_0}b^\top BX_{i,\tau_0}c^\top BX_{i,\tau_0}|\le\frac{64}{3}\max_{a,b,c\in\mathcal{N}_{\frac{1}{4}}}\sum_{i=1}^n|a^\top BX_{i,\tau_0}b^\top BX_{i,\tau_0}c^\top BX_{i,\tau_0}|.\]
    By Theorem 3.4 in \cite{kuchibhotla2022moving}, we have
    \begin{align*}
        &\max_{a,b,c\in\mathcal{N}_{\frac{1}{4}}}\sum_{i=1}^n\bigg\{|a^\top BX_{i,\tau_0}b^\top BX_{i,\tau_0}c^\top BX_{i,\tau_0}|-\E|a^\top BX_{i,\tau_0}b^\top BX_{i,\tau_0}c^\top BX_{i,\tau_0}|\bigg\}\\
        =&O_P(\sqrt{nd_1}+(\log n)^{\frac{3}{2}}d_1^{\frac{3}{2}}).
    \end{align*}
    Then
    \begin{align*}
        &\sup_{a,b,c\in\mathcal{B}_{d_1}}\sum_{i=1}^n\abs{a^\top BX_{i,\tau_0} b^\top BX_{i,\tau_0} c^\top BX_{i,\tau_0}}\\
        \le&\frac{64}{3}\max_{a,b,c\in\mathcal{N}_{\frac{1}{4}}}\sum_{i=1}^n|a^\top BX_{i,\tau_0}b^\top BX_{i,\tau_0}c^\top BX_{i,\tau_0}|\\
        \le&\frac{64}{3}\max_{a,b,c\in\mathcal{N}_{\frac{1}{4}}}n\E|a^\top BX_{\tau_0} b^\top BX_{\tau_0} c^\top BX_{\tau_0}|\\
        &+\max_{a,b,c\in\mathcal{N}_{\frac{1}{4}}}\frac{64}{3}\sum_{i=1}^n\bigg\{|a^\top BX_{i,\tau_0}b^\top BX_{i,\tau_0}c^\top BX_{i,\tau_0}|-\E|a^\top BX_{i,\tau_0}b^\top BX_{i,\tau_0}c^\top BX_{i,\tau_0}|\bigg\}\\
        =&O_P(n+\sqrt{nd_1}+(d_1\log n)^{\frac{3}{2}}).
    \end{align*}
\end{proof}

\begin{Lemma}\label{lem_hessian}
For $B\in\R^{d_1\times s},d_1\in[s]$ with $BB^\top=I_{d_1}$, denote
\[\hat H_{\tau_0}=\sum_{i=1}^n\frac{e^{y_iX_i^\top{\bm{\beta}}_0}}{(1+e^{y_iX_i^\top{\bm{\beta}}_0})^2}BX_{i,\tau_0}X_{i,\tau_0}^\top B^\top,\]
then with probability at least $1-2e^{-cd_1}$,
\[\norm{\hat H_{\tau_0}-\E\hat H_{\tau_0}}_{\rm op}\lesssim\sqrt{(c+1)d_1n}+(c+1)d_1.\]
\end{Lemma}
\begin{proof}[Proof of Lemma \ref{lem_hessian}]
    Take $\mathcal{N}_{d_1}$ to be the $\frac{1}{4}$-nets of the unit ball $\mathcal{B}_{d_1}$ in $\R^{d_1}$. As in \cite{fan2021shrinkage}, if we denote
    \[\Phi(A)=\max_{(u,v)\in\mathcal{N}_{d_1}\times\mathcal{N}_{d_1}}u^\top Av,\]
    we have 
    \[\norm{A}_{\rm op}\le\frac{16}{7}\Phi(A).\]
    To see this, for any $(u,v)\in\mathcal{B}_{d_1}\times\mathcal{B}_{d_1}$, there exist $(u_1,v_1)\in\mathcal{N}_{d_1}\times\mathcal{N}_{d_1}$ such that $\norm{u-u_1}_2\le\frac{1}{4},\norm{v-v_1}_2\le\frac{1}{4}$,
    \begin{align*}
        u^\top Av=&u_1^\top Av_1+(u-u_1)^\top Av_1+u_1^\top A(v-v_1)+(u-u_1)^\top A(v-v_1)\\
        \le&\Phi(A)+(\frac{1}{4}+\frac{1}{4}+\frac{1}{16})\norm{A}_{\rm op}.
    \end{align*}
    Taking supremum on both sides yields the result.
    
    Fix any $(u,v)\in\mathcal{N}_{d_1}\times\mathcal{N}_{d_1}$ and denote $G_i=\frac{e^{y_iX_i^\top{\bm{\beta}}_0}}{(1+e^{y_iX_i^\top {\bm{\beta}}_0})^2}u^\top BX_{i,\tau_0}X_{i,\tau_0}^\top B^\top v$, then
    \[\norm{G_i}_{\psi_1}\le\norm{u^\top BX_{i,\tau_0}}_{\psi_2}\norm{v^\top BX_{i,\tau_0}}_{\psi_2}\le\xi^2.\]
    By Bernstein's inequality,
    \[\Prob(\abs{\hat H_{\tau_0}-\E\hat H_{\tau_0}}\gtrsim (\sqrt{nt}+t)\xi^2)\le 2e^{-t}.\]
    Applying union bound over $(u,v)\in\mathcal{N}_{d_1}\times\mathcal{N}_{d_1}$ and take $t=d_1(2\log 8+c)$, we have with probability at least $1-2e^{-cd_1}$,
    \[\norm{\hat H_{\tau_0}-\E\hat H_{\tau_0}}_{\rm op}\le\frac{16}{7}\Phi(\hat H_{\tau_0}-\E\hat H_{\tau_0})\lesssim \sqrt{(c+1)d_1n}+(c+1)d_1.\]
\end{proof}

\begin{proof}[Proof of Theorem \ref{thm_Abeta}]
    Given the true model $\tau_0$, without loss of generality, we assume $A_{\cdot\tau_0^c}=\bf{0}$, $A_{\cdot\tau_0}A_{\cdot\tau_0}^\top=I_r$ with $r=r(\tau_0)$. Denote $\tilde A=A_{\cdot\tau_0}$, $B\in\R^{(s-r)\times s}$ satisfies $(\tilde A^\top, B^\top)$ to be an orthogonal matrix. Then the null parameter space is 
    \begin{align*}
        \Theta_0=&\{{\bm{\beta}}\in\R^p:A{\bm{\beta}}=t,{\bm{\beta}}_{\tau_0^c}=\bf{0}\}
        =\{{\bm{\beta}}\in\R^p:{\bm{\beta}}_{\tau_0}=\tilde A^\top t+B^\top v,{\bm{\beta}}_{\tau_0^c}=\bf{0}\}.
    \end{align*}
    For ${\bm{\beta}}\in\Theta_0$ with ${\bm{\beta}}_{\tau_0}=\tilde A^\top t+B^\top v,{\bm{\beta}}_{\tau_0^c}=\bf{0}$,
    \[\nabla_vl({\bm{\beta}})=\sum_{i=1}^n\frac{y_i BX_{i,\tau_0}}{1+e^{y_iX_i^\top{\bm{\beta}}}},\]
    \[\nabla_{vv}^2l({\bm{\beta}})=-\sum_{i=1}^n\frac{e^{y_iX_i^\top{\bm{\beta}}}}{(1+e^{y_iX_i^\top{\bm{\beta}}})^2}BX_{i,\tau_0}X_{i,\tau_0}^\top B^\top.\]
    For the MLE $\hat b_0\in\Theta_0$ under null with $\hat b_{0,\tau_0}=\tilde A^\top t+B^\top\hat v$, we have
    \[\nabla_vl(\hat b_0)=0.\]
    Suppose ${\bm{\beta}}_{0,\tau_0}=\tilde A^\top t+B^\top v_0$, to control the error of $\hat v$ for $v_0$, note that if $l({\bm{\beta}}_0)>l({\bm{\beta}})$ for all ${\bm{\beta}}\in\mathcal{B}_c=\{{\bm{\beta}}:{\bm{\beta}}_{\tau_0^c}=0, {\bm{\beta}}_{\tau_0}={\bm{\beta}}_{0,\tau_0}+B^\top \Delta_v,\norm{\Delta_v}_2=c\sqrt{\frac{s-r}{n}}\}$, by the concavity of $l$, we have $\norm{\hat v-v_0}_2\le c\sqrt{\frac{s-r}{n}}$. For ${\bm{\beta}}\in\mathcal{B}_c$, there exists $\tilde{\bm{\beta}}$ between ${\bm{\beta}}$ and ${\bm{\beta}}_0$, such that
    \begin{align*}
        &l({\bm{\beta}})-l({\bm{\beta}}_0)\\
        =&\nabla_v^\top l({\bm{\beta}}_0)\Delta_v+\frac{1}{2}\Delta_v^\top\nabla_{vv}^2(\tilde {\bm{\beta}})\Delta_v\\
        \le&\norm{\nabla_v l({\bm{\beta}}_0)}_2c\sqrt{\frac{s-r}{n}}-\frac{1}{2}\lambda_{\min}(-\nabla_{vv}^2({\bm{\beta}}_0))c^2\frac{s-r}{n}+\frac{1}{2}\Delta_v^\top(\nabla_{vv}^2(\tilde{\bm{\beta}})-\nabla_{vv}^2({\bm{\beta}}_0))\Delta_v\\
        =&T_1+T_2+T_3.
    \end{align*}
    For $T_1$, since
    \[\norm{\frac{y_iBX_{i,\tau_0}}{1+e^{y_iX_i^\top{\bm{\beta}}_0}}}_{\psi_2}\le\norm{BX_{i,\tau_0}}_{\psi_2}\le\xi,\quad \E\nabla_vl(\bm{\beta}_0)=0,\]
    it follows from sub-Gaussian concentration inequality that
    \[\norm{\nabla_vl({\bm{\beta}}_0)}_{\psi_2}\lesssim \sqrt{n}.\]
    Then, by Lemma 1 in \cite{jin2019short}, we know with high probability
    \[T_1\lesssim c(s-r).\]
    By Lemma \ref{lem_hessian}, we can control $T_2$ as follows with high probability
    \[\abs{T_2}\gtrsim c^2(s-r).\]
    Denote $g(z)=\frac{e^z}{(1+e^z)^2}$, we have $\abs{g'(z)}\le \frac{1}{4}$. By Lemma 1 in \cite{jin2019short},
    \begin{align*}
        \norm{\norm{BX_{i,\tau_0}}_2}_{\psi_2}\le&\norm{\norm{BX_{i,\tau_0}-\E BX_{i,\tau_0}}_2}_{\psi_2}+\norm{\norm{\E BX_{i,\tau_0}}_2}_{\psi_2}\\
        \lesssim&\sqrt{s-r}.
    \end{align*}
    Since $(s\log n)^{\frac{3}{2}}\lesssim n$, by Lemma \ref{lem_tensor}, we have that with high probability
    \begin{align*}
        T_3=&\frac{1}{2}\Delta_v^\top(\nabla_{vv}^2(\tilde{\bm{\beta}})-\nabla_{vv}^2({\bm{\beta}}_0))\Delta_v\\
        =&\frac{1}{2}\sum_{i=1}^{n_0}(g(y_iX_i^\top\tilde{\bm{\beta}})-g(y_iX_i^\top{\bm{\beta}}_0))(X_{i,\tau_0}^\top B^\top\Delta_v)^2\\
        \le&\frac{1}{8}\sum_{i=1}^{n_0}\abs{X_{i,\tau_0}^\top B^\top\Delta_v}^3\\
        \lesssim&\big(n+\sqrt{n(s-r)}+((s-r)\log n)^{\frac{3}{2}}\big)\norm{\Delta_v}_2^3\\
        \lesssim&c^3\sqrt{\frac{(s-r)^3}{n}}.
    \end{align*}
    Then
    \[l({\bm{\beta}})-l({\bm{\beta}}_0)\lesssim c(s-r)-c^2(s-r)+c^3\sqrt{\frac{(s-r)^3}{n}}.\]
    Since $n\gg s-r$, by choosing $c\asymp 1$ large enough, with high probability, it happens
    \[l({\bm{\beta}})<l({\bm{\beta}}_0),\]
    which implies
    \[\norm{\hat v-v_0}_2=O_P(\sqrt{\frac{s-r}{n}}),\]
    \[\norm{\hat b_0-{\bm{\beta}}_0}_2=\norm{B^\top(\hat v-v_0)}_2=O_P(\sqrt{\frac{s-r}{n}}).\]
    Note that for ${\bm{\beta}}\in\Theta_0$,
    \begin{align*}
        \nabla_{{\bm{\beta}}_{\tau_0}} l({\bm{\beta}})=&\sum_{i=1}^n\frac{y_iX_{i,\tau_0}}{1+e^{y_iX_i^\top{\bm{\beta}}}}
        =\sum_{i=1}^n\frac{y_i\tilde A^\top\tilde AX_{i,\tau_0}}{1+e^{y_iX_i^\top{\bm{\beta}}}}+\frac{y_i B^\top BX_{i,\tau_0}}{1+e^{y_iX_i^\top{\bm{\beta}}}}.
    \end{align*}
    \[\nabla_{{\bm{\beta}}_{\tau_0}{\bm{\beta}}_{\tau_0}}^2l({\bm{\beta}})=-\sum_{i=1}^n\frac{e^{y_iX_i^\top{\bm{\beta}}}}{(1+e^{y_iX_i^\top{\bm{\beta}}})^2}X_{i,\tau_0}X_{i,\tau_0}^\top.\]
    If we denote $u=\sum_{i=1}^n\frac{y_i\tilde AX_{i,\tau_0}}{1+e^{y_iX_i^\top\hat b_0}}$, then
    \[\nabla_{{\bm{\beta}}_{\tau_0}}l(\hat b_0)=\tilde A^\top u.\]
    By Taylor expansion,
    \begin{equation}\label{eq_null_taylor}
        \tilde A^\top u=\nabla_{{\bm{\beta}}_{\tau_0}}l(\hat b_0)=\nabla_{{\bm{\beta}}_{\tau_0}}l({\bm{\beta}}_0)+\nabla_{{\bm{\beta}}_{\tau_0}{\bm{\beta}}_{\tau_0}}^2l({\bm{\beta}}_0)(\hat b_{0,\tau_0}-{\bm{\beta}}_{0,\tau_0})+R
    \end{equation}
    with
    \[R_j=(\nabla_{{\bm{\beta}}_{\tau_0}{\bm{\beta}}_{\tau_0}}^2l(\tilde{\bm{\beta}}^{(j)})-\nabla_{{\bm{\beta}}_{\tau_0}{\bm{\beta}}_{\tau_0}}^2l({\bm{\beta}}_0))_{j\cdot}(\hat b_{0,\tau_0}-{\bm{\beta}}_{0,\tau_0})\]
    for some $\tilde{\bm{\beta}}^{(j)}$ between $\hat b_0$ and ${\bm{\beta}}_0$, $j\in[s]$. Then for some $t_i^{(j)}$ between $y_iX_i^\top \tilde{\bm{\beta}}^{(j)}$ and $y_iX_i^\top{\bm{\beta}}_0$, $i\in[n], j\in[s]$, by Lemma \ref{lem_tensor},
    \begin{align*}
        \norm{R}_\infty=&\max_{j\in[s]}\sum_{i=1}^n\abs{(g(y_iX_{i,\tau_0}^\top\tilde {\bm{\beta}}_{\tau_0}^{(j)})-g(y_iX_{i,\tau_0}^\top{\bm{\beta}}_{0,\tau_0}))X_{i,\tau_{0,j}}X_{i,\tau_0}^\top(\hat b_{0,\tau_0}-{\bm{\beta}}_{0,\tau_0})}\\
        =&\max_{j\in[s]}\sum_{i=1}^n\abs{g'(t_i^{(j)})X_{i,\tau_{0,j}}(\tilde{\bm{\beta}}_{\tau_0}^{(j)}-{\bm{\beta}}_{0,\tau_0})^\top X_{i,\tau_0}(\hat b_{0,\tau_0}-{\bm{\beta}}_{0,\tau_0})^\top X_{i,\tau_0}}\\
        \le&\frac{1}{4}\max_{j\in[s]}\sum_{i=1}^n\abs{X_{i,\tau_{0,j}}(\hat b_{0,\tau_0}-{\bm{\beta}}_{0,\tau_0})^\top X_{i,\tau_0}(\hat b_{0,\tau_0}-{\bm{\beta}}_{0,\tau_0})^\top X_{i,\tau_0}}\\
        =&O_P(s-r),
    \end{align*}
    therefore
    \begin{align*}
        \norm{R}_2\le&\sqrt{s}\norm{R}_\infty
        =O_P(\sqrt{s}(s-r)).
    \end{align*}
    Denote $H=-\E\nabla_{{\bm{\beta}}_{\tau_0}{\bm{\beta}}_{\tau_0}}^2l({\bm{\beta}}_0), \hat H=-\nabla_{{\bm{\beta}}_{\tau_0}{\bm{\beta}}_{\tau_0}}^2l({\bm{\beta}}_0)$, by the same argument with Lemma \ref{lem_hessian}, we know
    \[\norm{H}_{\rm op}\asymp n,\qquad\norm{\hat H-H}_{\rm op}=O_P(\sqrt{sn}).\]
    Then by Equation \eqref{eq_null_taylor},
    \begin{equation}\label{eq_null_clt}
        \begin{aligned}
        &\sqrt{n}(\hat b_{0,\tau_0}-{\bm{\beta}}_{0,\tau_0})\\
        =&\sqrt{n}H^{-1}\nabla_{{\bm{\beta}}_{\tau_0}}l({\bm{\beta}}_0)-\sqrt{n}H^{-1}\tilde A^\top u+\sqrt{n}H^{-1}(\hat H-H)(\hat b_{0,\tau_0}-{\bm{\beta}}_{0,\tau_0})+\sqrt{n}H^{-1}R\\
        =&\sqrt{n}H^{-1}\nabla_{{\bm{\beta}}_{\tau_0}}l({\bm{\beta}}_0)-\sqrt{n}H^{-1}\tilde A^\top u+\tilde R,
    \end{aligned}
    \end{equation}
    where
    \[\norm{\tilde R}_2=O_P(\sqrt{\frac{s}{n}}(s-r)).\]
    Multiplying both sides of Equation \eqref{eq_null_clt} from the left by $\tilde A$, since $\tilde A\hat b_{0,\tau_0}=t=\tilde A\bm{\beta}_{0,\tau_0}$, we get
    \[u=(\tilde AH^{-1}\tilde A^\top)^{-1}\tilde AH^{-1}\nabla_{{\bm{\beta}}_{\tau_0}}l({\bm{\beta}}_0)+\frac{1}{\sqrt{n}}(\tilde AH^{-1}\tilde A^\top)^{-1}\tilde A\tilde R.\]
    Plugging back to Equation \eqref{eq_null_clt}, if we denote
    \[P=H^{-1/2}\tilde A^\top(\tilde AH^{-1}\tilde A^\top)^{-1}\tilde AH^{-1/2},\qquad Q=I_s-P,\]
    we get
    \begin{align*}
        \sqrt{n}(\hat b_{0,\tau_0}-{\bm{\beta}}_{0,\tau_0})=\sqrt{n}H^{-1/2}QH^{-1/2}\nabla_{{\bm{\beta}}_{\tau_0}}l({\bm{\beta}}_0)+H^{-1/2}QH^{1/2}\tilde R.
    \end{align*}
    Similarly, if we denote $\hat b_1$ as the MLE in $\Theta=\{{\bm{\beta}}:{\bm{\beta}}_{\tau_0^c}=\bf{0}\}$, then
    \[\norm{\hat b_1-{\bm{\beta}}_0}_2=O_P(\sqrt{\frac{s}{n}}),\quad\sqrt{n}(\hat b_{1,\tau_0}-{\bm{\beta}}_{0,\tau_0})=\sqrt{n}H^{-1}\nabla_{{\bm{\beta}}_{\tau_0}}l({\bm{\beta}}_0)+\hat R,\]
    with
    \[\norm{\hat R}_2=O_P(\sqrt{\frac{s^3}{n}}).\]
    Therefore
    \[\sqrt{n}(\hat b_{1,\tau_0}-\hat b_{0,\tau_0})=\sqrt{n}H^{-1/2}PH^{-1/2}\nabla_{{\bm{\beta}}_{\tau_0}}l({\bm{\beta}}_0)+\hat R-H^{-1/2}QH^{1/2}\tilde R,\]
    if we denote the SVD of $P=UU^\top$ with $U^\top U=I_r$,
    \begin{align*}
        \norm{\hat b_1-\hat b_0}_2\le&\norm{H^{-1/2}PH^{-1/2}\nabla_{{\bm{\beta}}_{\tau_0}}l({\bm{\beta}}_0)}_2+\frac{1}{\sqrt{n}}\norm{\hat R}_2+\frac{1}{\sqrt{n}}\norm{H^{-1/2}QH^{1/2}\tilde R}_2\\
        =&O_P(\frac{1}{\sqrt{n}}\norm{U^\top H^{-1/2}\nabla_{{\bm{\beta}}_{\tau_0}}l({\bm{\beta}}_0)}_2+\frac{\sqrt{s^3}}{n}+\frac{(s-r)\sqrt{s}}{n}).
    \end{align*}
    Since
    \[\norm{\frac{y_iU^\top H^{-1/2}X_{i,\tau_0}}{1+e^{y_iX_i^\top{\bm{\beta}}_0}}}_{\psi_2}\le\norm{H^{-1/2}}_{\rm op}\xi\lesssim \frac{1}{\sqrt{n}},\]
    we know from sub-Gaussian concentration inequality that
    \[\norm{U^\top H^{-1/2}\nabla_{{\bm{\beta}}_{\tau_0}}l({\bm{\beta}}_0)}_{\psi_2}\lesssim 1,\]
    then Lemma 1 in \cite{jin2019short} implies
    \[\norm{U^\top H^{-1/2}\nabla_{{\bm{\beta}}_{\tau_0}}l({\bm{\beta}}_0)}_2=O_P(\sqrt{r}).\]
    So
    \[\norm{\hat b_1-\hat b_0}_2=O_P(\sqrt{\frac{r}{n}}+\frac{\sqrt{s^3}}{n}).\]
    Then for some $\tilde{\bm{\beta}}$ between $\hat b_0$ and $\hat b_1$, we can control the log-likelihood ratio test statistic as
    \begin{align*}
        &-2(l(\hat b_0)-l(\hat b_1))\\
        =&-(\hat b_{1,\tau_0}-\hat b_{0,\tau_0})^\top\nabla_{{\bm{\beta}}_{\tau_0}{\bm{\beta}}_{\tau_0}}^2l(\tilde{\bm{\beta}})(\hat b_{1,\tau_0}-\hat b_{0,\tau_0})\\
        =&(\hat b_{1,\tau_0}-\hat b_{0,\tau_0})^\top H(\hat b_{1,\tau_0}-\hat b_{0,\tau_0})-(\hat b_{1,\tau_0}-\hat b_{0,\tau_0})^\top(H+\nabla_{{\bm{\beta}}_{\tau_0}{\bm{\beta}}_{\tau_0}}^2l(\tilde{\bm{\beta}}))(\hat b_{1,\tau_0}-\hat b_{0,\tau_0})\\
        =&\nabla_{{\bm{\beta}}_{\tau_0}}^\top l({\bm{\beta}}_0)H^{-1/2}PH^{-1/2}\nabla_{{\bm{\beta}}_{\tau_0}} l({\bm{\beta}}_0)+\tilde{\tilde{R}},
    \end{align*}
    where
    \begin{align*}
        \tilde{\tilde{R}}=&\frac{2}{\sqrt{n}}(\hat R-H^{-1/2}QH^{-1/2}\tilde R)H^{1/2}PH^{-1/2}\nabla_{{\bm{\beta}}_{\tau_0}}l({\bm{\beta}}_0)\\
        &+\frac{1}{n}(\hat R-H^{-1/2}QH^{-1/2}\tilde R)^\top H(\hat R-H^{-1/2}QH^{-1/2}\tilde R)\\
        &-(\hat b_{1,\tau_0}-\hat b_{0,\tau_0})^\top(H+\nabla_{{\bm{\beta}}_{\tau_0}{\bm{\beta}}_{\tau_0}}^2l(\tilde{\bm{\beta}}))(\hat b_{1,\tau_0}-\hat b_{0,\tau_0}).
    \end{align*}
    By Lemma \ref{lem_tensor}, we can find $t_i$ between $y_iX_i^\top{\bm{\beta}}_0$ and $y_iX_i^\top\tilde{\bm{\beta}}$ such that
    \begin{align*}
        &\abs{(\hat b_{1,\tau_0}-\hat b_{0,\tau_0})^\top(\nabla_{{\bm{\beta}}_{\tau_0}{\bm{\beta}}_{\tau_0}}^2l({\bm{\beta}}_0)-\nabla_{{\bm{\beta}}_{\tau_0}{\bm{\beta}}_{\tau_0}}^2l(\tilde{\bm{\beta}}))(\hat b_{1,\tau_0}-\hat b_{0,\tau_0})}\\
        \le&\sum_{i=1}^n\abs{g'(t_i)({\bm{\beta}}_{0,\tau_0}-\tilde{\bm{\beta}}_{\tau_0})^\top X_{i,\tau_0}(\hat b_{1,\tau_0}-\hat b_{0,\tau_0})^\top X_{i,\tau_0}(\hat b_{1,\tau_0}-\hat b_{0,\tau_0})^\top X_{i,\tau_0}}\\
        \le&\frac{1}{4}\sum_{i=1}^n\abs{({\bm{\beta}}_{0,\tau_0}-\tilde{\bm{\beta}}_{\tau_0})^\top X_{i,\tau_0}(\hat b_{1,\tau_0}-\hat b_{0,\tau_0})^\top X_{i,\tau_0}(\hat b_{1,\tau_0}-\hat b_{0,\tau_0})^\top X_{i,\tau_0}}\\
        =&O_P(\frac{r\sqrt{s}}{\sqrt{n}}+\frac{s^{7/2}}{n^{3/2}}),
    \end{align*}
    then if $n\gg\frac{s^3}{r^{1/2}}$, we have
    \[\abs{\tilde{\tilde{R}}}=o_P(\sqrt{r}).\]

Denote $Z_i=\frac{\sqrt{n}y_iUH^{-1/2}X_{i,\tau_0}}{1+e^{y_iX_i^\top{\bm{\beta}}_0}}$, we have $\Var(Z_i)=I_{r}$ $\norm{Z_i}_{\psi_2}\lesssim\xi$, therefore for any $j_1,j_2,j_3,j_4\in[r]$,
\[\E\abs{Z_{i,j_1}Z_{i,j_2}Z_{i,j_3}Z_{i,j_4}}<\infty.\]
Denote $G\sim N(0,I_r)$, $Z_i^{\otimes 3}=Z_i\otimes Z_i\otimes Z_i$ to be a tensor in $\R^{r^{\otimes 3}}$, by Corollary 4.10 in \cite{wang2017operator},
\begin{align*}
    \norm{\E Z_i^{\otimes 3}}_{\rm F}\le r\norm{\E Z_i^{\otimes 3}}_{\rm sp}=r\sup_{a,b,c\in\mathcal{B}_s}\E a^\top Z_i b^\top Z_i c^\top Z_i\lesssim r.
\end{align*}
By Lemma 1 in \cite{jin2019short}, we know $\norm{\norm{Z_i}_2}_{\psi_2}\lesssim \sqrt{r}$, then
\[\E\norm{Z_i}_2^4\lesssim r^2.\]
Then Corollary 1 in \cite{zhilova2022new} implies
\begin{align*}
    \sup_{c\ge 0}\abs{\Prob\bigg(\norm{\frac{1}{\sqrt{n}}\sum_{i=1}^n Z_i}_2^2\le c\bigg)-\Prob(\norm{G}_2^2\le c)}\lesssim\frac{r}{\sqrt{n}}.
\end{align*}
Now we can approximate the likelihood ratio test statistic as
\begin{align*}
    &\sup_{c\ge 0}\Prob(-2(l(\hat b_0)-l(\hat b_1))\le c\sqrt{r})-\Prob(\norm{G}_2^2\le c\sqrt{r})\\
    =&\sup_{c\ge0}\Prob\bigg(\norm{\frac{1}{\sqrt{n}}\sum_{i=1}^n Z_i}_2^2+\tilde{\tilde{R}}\le c\sqrt{r}\bigg)-\Prob(\norm{G}_2^2\le c\sqrt{r})\\
    \le&\sup_{c\ge0}\Prob\bigg(\norm{\frac{1}{\sqrt{n}}\sum_{i=1}^n Z_i}_2^2\le (c+\delta)\sqrt{r}\bigg)+\Prob(\tilde{\tilde{R}}\le-\delta \sqrt{r})\\
    &-\Prob(\norm{G}_2^2\le (c+\delta)\sqrt{r})+\Prob(c\sqrt{r}<\norm{G}_2^2\le (c+\delta)\sqrt{r})\\
    \le&\sup_{c\ge 0}\abs{\Prob\bigg(\norm{\frac{1}{\sqrt{n}}\sum_{i=1}^n Z_i}_2^2\le (c+\delta)\sqrt{r}\bigg)-\Prob(\norm{G}_2^2\le (c+\delta)\sqrt{r})}\\
    &+\Prob(\tilde{\tilde{R}}\le-\delta \sqrt{r})+\sup_{c\ge 0}\Prob(c\sqrt{r}<\norm{G}_2^2\le (c+\delta)\sqrt{r})\\
    \rightarrow&0,
\end{align*}
where the last step follows from Lemma S.7 in \cite{shi2019linear} by letting $n$ go to infinity then $\delta$ go to 0 from the right. Similarly,
\begin{align*}
    &\sup_{c\ge 0}\Prob(\norm{G}_2^2\le c\sqrt{r})-\Prob(-2(l(\hat b_0)-l(\hat b_1))\le c\sqrt{r})\\
    =&\sup_{c\ge0}\Prob\bigg(\norm{\frac{1}{\sqrt{n}}\sum_{i=1}^n Z_i}_2^2+\tilde{\tilde{R}}> c\sqrt{r}\bigg)-\Prob(\norm{G}_2^2> c\sqrt{r})\\
    \le&\sup_{c\ge0}\Prob\bigg(\norm{\frac{1}{\sqrt{n}}\sum_{i=1}^n Z_i}_2^2> (c-\delta)\sqrt{r}\bigg)+\Prob(\tilde{\tilde{R}}>\delta \sqrt{r})\\
    &-\Prob(\norm{G}_2^2> (c-\delta)\sqrt{r})+\Prob((c-\delta)\sqrt{r}<\norm{G}_2^2\le c\sqrt{r})\\
    \le&\sup_{c\ge 0}\abs{\Prob\bigg(\norm{\frac{1}{\sqrt{n}}\sum_{i=1}^n Z_i}_2^2>(c-\delta)\sqrt{r}\bigg)-\Prob(\norm{G}_2^2>(c-\delta)\sqrt{r})}\\
    &+\Prob(\tilde{\tilde{R}}>\delta \sqrt{r})+\sup_{c\ge 0}\Prob((c-\delta)\sqrt{r}<\norm{G}_2^2\le c\sqrt{r})\\
    \rightarrow&0.
\end{align*}
Now the confidence set satisfies
\begin{align*}
    \Prob(A{\bm{\beta}}_0\in\Gamma_{\alpha}^{A{\bm{\beta}}_0}(\bm{X},\bm{y}))\ge\alpha-o(1)-\Prob(\tau_0\not\in\mathcal{C})\rightarrow\alpha.
\end{align*}
\end{proof}

\end{document}